\newcommand{\rstrct}[2]{{#1}_{\upharpoonright_{#2}}} 
\newcommand{\IN}{\ensuremath{\mathbb{N}}\xspace}
\newcommand{\IR}{\ensuremath{\mathbb{R}}\xspace}
\newcommand{\IQ}{\ensuremath{\mathbb{Q}}\xspace}
\newcommand{\IH}{\mathbf{H}}
\newcommand{\ID}{\ensuremath{\mathbf{D}}\xspace}
\newcommand{\g}{\lambda}
\newcommand{\maxg}{\Lambda}
\newcommand{\barmaxg}{\overline{\Lambda}}
\newcommand{\WG}{\mathcal{G} = (\Pi, V, (V_i)_{i \in \Pi}, E, \bar \g)}
\newcommand{\SB}[2]{{#1}_{\upharpoonright{#2}}} 
\newcommand{\SBWG}{\rstrct{\mathcal{G}}{h} = (\Pi, V,(V_i)_{i \in \Pi}, E,  \rstrct{\bar \g}{h})} 
\newcommand{\out}[1]{\langle #1 \rangle}
\newcommand{\Target}{T}
\newcommand{\Sh}[2]{{#1}({#2})} 
\DeclareMathOperator{\Last}{Last}
\DeclareMathOperator{\First}{First}
\DeclareMathOperator{\Hist}{Hist}
\newcommand{\G}{\mathcal{G}}
\newcommand{\PotS}[3]{{\PotPlay}^{#1}_{#2}({#3})} 
\newcommand{\EraseS}[3]{{\ErasePlay}^{#1}_{#2}({#3})} 
\newcommand{\hg}[3]{h_{#1,#2,#3} \ccdot (g_{#1,#2,#3})^{\omega}}
\newcommand{\hgbis}[2]{h_{#1,#2} \ccdot (g_{#1,#2})^{\omega}}
\newcommand{\prefix}[2]{#1\widehat{~~}#2}
\newcommand{\PotPlay}{\mathbf{P}}
\newcommand{\ErasePlay}{\mathbf{E}}
\newcommand{\const}{c}
\newcommand{\Range}{C}
\newcommand{\devstep}[1]{$#1$-deviation step}
\newcommand{\ccdot}{\!\cdot\!}
\renewcommand{\phi}{\mathchar"11E}            
\title{Weak Subgame Perfect Equilibria and their Application to Quantitative Reachability\footnote{This work has been partly supported by European project Cassting (FP7-ICT-601148).}}
\titlerunning{Weak Subgame Perfect Equilibria in Quantitative Games} 
\author[1]{Thomas Brihaye}
\author[2]{V\'{e}ronique Bruy\`{e}re}
\author[2]{No\'{e}mie Meunier\thanks{Author supported by F.R.S.-FNRS fellowship.}}
\author[3]{Jean-Fran\c cois Raskin\thanks{Author supported by ERC Starting Grant (279499: inVEST).}}
\affil[1]{D\'{e}partement de math\'ematique, Universit\'{e} de Mons (UMONS)\\
  Mons, Belgium}
\affil[2]{D\'{e}partement d'informatique, Universit\'{e} de Mons (UMONS)\\
  Mons, Belgium}
\affil[3]{D\'{e}partement d'informatique, Universit\'{e} Libre de Bruxelles (U.L.B.)\\
  Brussels, Belgium}
\authorrunning{Th.\,Brihaye, V.\,Bruy\`{e}re, N.\,Meunier and J.-F.\,Raskin} 
\subjclass{B.6.3 [design aids]: automatic synthesis; F.1.2 [Modes of computation]: interactive and reactive computation}
\keywords{multi-player games on graphs, quantitative objectives, Nash equilibrium, subgame perfect equilibrium, quantitative reachability} 
\begin{document}

\maketitle

\begin{abstract} We study $n$-player turn-based games played on a finite directed graph. For each play, the players have to pay a cost that they want to minimize. Instead of the well-known notion of Nash equilibrium (NE), we focus on the notion of subgame perfect equilibrium (SPE), a refinement of NE well-suited in the framework of games played on graphs. We also study natural variants of SPE, named weak (resp. very weak) SPE, where players who deviate cannot use the full class of strategies but only a subclass with a finite number of (resp. a unique) deviation step(s). 

Our results are threefold. Firstly, we characterize in the form of a Folk theorem the set of all plays that are the outcome of a weak SPE. We also establish a weaker version of this theorem for SPEs. Secondly, for the class of quantitative reachability games, we prove the existence of a finite-memory SPE and provide an algorithm for computing it (only existence was known with no information regarding the memory). Moreover, we show that the existence of a constrained SPE, i.e. an SPE such that each player pays a cost less than a given constant, can be decided. The proofs rely on our Folk theorem for weak SPEs (which coincide with SPEs in the case of quantitative reachability games) and on the decidability of MSO logic on infinite words. Finally with similar techniques, we provide a second general class of games for which the existence of a (constrained) weak SPE is decidable.
\end{abstract}

 

\section{Introduction}

%

Two-player zero-sum infinite duration games played on graphs are a mathematical model used to formalize several important problems in computer science. Reactive system synthesis is one such important problem. In this context, see e.g.~\cite{PnueliR89}, the vertices and the edges of the graph represent the states and the transitions of the system; one player models the system to synthesize, and the other player models the (uncontrollable) environment of the system. In the classical setting, the objectives of the two players are opposite, i.e. the environment is \emph{adversarial}.   Modeling the environment as fully adversarial is usually a bold abstraction of reality and there are recent works that consider the more general setting of non zero-sum games which allow to take into account the different objectives of each player. In this latter setting the environment has its own objective which is most often \emph{not} the negation of the objective of the system. The concept of \emph{Nash equilibrium} (NE)~\cite{Nas50} is central to the study of non zero-sum games and can be applied to the general setting of $n$ player games.  A strategy profile is a NE if no player has an incentive to deviate unilaterally from his strategy, since he cannot strictly improve on the outcome of the strategy profile by changing his strategy only.


However in the context of sequential games (such as games played on graphs), it is well-known that NEs present a serious weakness: a NE allows for \emph{non-credible threats} that rational players should not carry out~\cite{Rubi91}.  As a consequence, for sequential games, the notion of NE has been strengthened into the notion of \emph{subgame perfect equilibrium} (SPE):  a strategy profile is an SPE if it is a NE in all the subgames of the original game. While the notion of SPE is rather well understood for finite state game graphs with $\omega$-regular objectives or for games in finite extensive form (finite game trees), less is known for game graphs with \emph{quantitative objectives} in which players encounter costs that they want to minimize, like in classical quantitative objectives such as mean-payoff, discounted sum, or quantitative reachability.

Several natural and important questions arise for such games: Can we decide the existence of an SPE, and more generally the \emph{constrained} existence of an SPE (i.e. an SPE in which each player encounters a cost less than some fixed value)? Can we compute such SPEs that use finite-memory strategies only? Whereas several work has studied what are the hypotheses to impose on games in a way to guarantee the existence of an SPE, the previous algorithmic questions are still widely open.
In this article, we provide progresses in the understanding of the notion of SPE. We first establish Folk theorems that characterize the possible outcomes of SPEs in quantitative games for SPEs and their variants. We then derive from this characterization interesting algorithms and information on the strategies for two important classes of quantitative games. Our contributions are detailed in the next paragraph.

%

\medskip\noindent{\bf Contributions~~}
First, we formalize a notion of \emph{deviation step} from a strategy profile that allows us to define two natural variants of NEs. While a NE must be resistant to the unilateral deviation of one player for any number of deviation steps, a \emph{weak} (resp. \emph{very weak}) NE must be resistant to the unilateral deviation of one player for any \emph{finite} number of (resp. a unique) deviation step(s).
Then we use those variants to define the corresponding notions of {\em weak} and {\em very weak} SPE. The latter notion is very close to the one-step deviation property \cite{Osborne94}.
Any very weak SPE is also a weak SPE, and there are games for which there exists a weak SPE but no SPE. Also, for games with upper-semicontinuous cost functions and for games played on finite game trees, the three notions are equivalent.


Second, we characterize in the form of a Folk theorem all the possible outcomes of weak SPEs. The characterization is obtained starting from all possible plays of the game and the application of a nonincreasing operator that removes plays that cannot be outcome of a weak SPE. We show that the limit of the nonincreasing chain of sets always exists and contains exactly all the possible outcomes of weak SPEs. Furthermore, we show how for each such outcome, we can associate a strategy profile that generates it and which is a weak SPE. Using a variant of the techniques developed for weak SPEs, we also obtain a weaker version of this Folk theorem for SPEs.


Additionally, to illustrate the potential of our Folk theorem, we show how it can be refined and used to answer open questions about two classes of quantitative games. 
The first class of games that we consider are {\em quantitative reachability games}, such that each player aims at reaching his own set of target states as soon as possible. As the cost functions in those games are continuous, our Folk theorem characterizes precisely the outcomes of SPEs and not only weak SPEs. In~\cite{BBD,Fudenberg83}, it has been shown that quantitative reachability games always have SPEs. The proof provided for this theorem is non constructive since it relies on topological arguments. Here, we strengthen this existential result by proving that there always exists, not only an SPE but, a {\em finite-memory} SPE. Furthermore, we provide an algorithm to construct such a finite memory SPE. This algorithm is based on a constructive version of our Folk Theorem for the class of quantitative reachability games: we show that the nonincreasing chain of sets of potential outcomes stabilizes after a finite number of steps and that each intermediate set is an $\omega$-regular set that can be effectively described using MSO sentences. 
The second class of games that we consider is the class of games with cost functions that are {\em prefix-independent}, whose range of values is {\em finite}, and for which each value has an {\em $\omega$-regular} pre-image. For this general class of games, with similar techniques as for quantitative reachability games, we show how to construct an effective representation of all possible outcomes compatible with a weak SPE, and consequently that the existence of a weak SPE is decidable.
In those two applications, we show that our construction also allow us to answer the question of existence of a constrained (weak) SPE, i.e. a (weak) SPE in which players pays a cost which is bounded by a given value.

\medskip\noindent{\bf Related work~~}
%
The concept of SPE has been first introduced and studied by the game theory community. The notion of SPE has been first introduced by Kuhn in finite extensive form games~\cite{kuhn53}. For such games, backward induction can be used to prove that there always exist an SPE. By inspecting the backward induction proof, it is not difficult to realize that the notion of very weak SPE and SPE are equivalent in this context.

SPEs for infinite trees defined as the unfolding of finite graphs with {\em qualitative}, i.e. win-lose, $\omega$-regular objectives, have been studied by Ummels in~\cite{Ummels06}: it is proved that such games always have an SPE, and that the existence of a constrained SPE is decidable. 


In~\cite{Kim}, Klimos et al. provide an effective representation of the outcomes of NEs in concurrent priced games by constructing a B\"uchi automaton accepting precisely the language of outcomes of all NEs satisfying a bound vector.
%
%
The existence of NEs in quantitative games played on graphs is studied in~\cite{TJS}; it is shown that for a large class of games, there always exists a finite-memory NE. This result is extended in~\cite{BMR14} for two-player games and secure equilibria (a refinement of NEs); additionally the constrained existence problem for secure equilibria is also shown decidable for a large range of cost functions. None of these references consider SPEs.


In~\cite{Flesch10}, the authors prove that for quantitative games with cost functions that are upper-semicontinus and with finite range, there always exits an SPE. This result also relies on a nonincreasing chain of sets of possible outcomes of SPEs. The main differences with our work is that we obtain a Folk theorem that \emph{characterize} all possible outcomes of weak SPEs with no restriction on the cost functions. Moreover we have shown that our Folk theorem can be made effective for two classes of quantitative games of interest. Effectiveness issues are not considered in~\cite{Flesch10}. 
Prior to this work, Mertens shows in~\cite{Mertens87} that if the cost functions are bounded and Borel measurable then there always exists an $\epsilon$-NE. In~\cite{Fudenberg83}, Fudenberg et al. show that if the cost functions are all continous, then there always exists an SPE. Those works were recently extended in~\cite{Roux14} by Le Roux and Pauly. 

\medskip\noindent{\bf Organization of the article~~}
In Section~\ref{section:variants}, we present the notions of quantitative game, classical NE and SPE, and their variants. In Section~\ref{section:FolkTheorem}, we propose and prove our Folk Theorems for weak SPEs and for SPEs. In Section~\ref{section:reach}, we provide an algorithm for computing a finite-memory SPE for quantitative reachability games, and a second algorithm to decide the constrained existence of an SPE for this class of games. In Section~\ref{section:prefixind}, we show that the existence of a (constrained) weak SPE is decidable for another class of games. A conclusion and future work are given in the last section.

\section{Preliminaries and Variants of Equilibria} \label{section:variants}

In this section, we recall the notions of quantitative game, Nash equilibrium, and subgame perfect equilibrium. We also introduce variants of Nash and subgame perfect equilibria, and compare them with the classical notions.

\subsection{Quantitative Games}
%
We consider multi-player turn-based non zero-sum quantitative games in which, for each infinite play, players pay a cost that they want to minimize.\footnote{Alternatively, players could receive a payoff that they want to maximize.} 

\begin{definition}
A \emph{quantitative game} is a tuple $\WG$ where: 

\begin{itemize}

  \item $\Pi $ is a finite set of players, 

  \item $V$ is a finite set of vertices, 

  \item $(V_i)_{i \in \Pi}$ is a partition of $V$ such that $V_i$ is the set of vertices controlled by player $i \in \Pi$, 

  \item $E \subseteq V \times V$ is a set of edges, such that\footnote{Each vertex has at least one outgoing edge.} for all $v \in V$, there exists $v' \in V$ with $(v,v') \in E$,

  \item $\bar \g = (\g_i)_{i \in \Pi}$ is a cost function such that $\g_i: V^\omega \to \IR \cup \{+\infty\}$ is player $i$ cost function.

\end{itemize}
\end{definition}

A \emph{play} of $\mathcal{G}$ is an infinite sequence $\rho = \rho_0 \rho_1 \ldots \in V^\omega$ such that 
$(\rho_i, \rho_{i + 1}) \in E$ for all $i \in \IN$. \emph{Histories} of $\mathcal{G}$ are finite 
sequences $h = h_0 \ldots h_n \in V^+$ defined in the same way. The \emph{length} $|h|$ of $h$ is the 
number $n$ of its edges. We denote by $\First(h)$ (resp. $\Last(h)$) the first vertex $h_0$ 
(resp. last vertex $h_n$) of $h$. Usually histories are non-empty, but in specific situations it will be useful 
to consider the empty history $\epsilon$. The set of all histories (ended by a vertex in $V_i$) is 
denoted by $\Hist$ (by $\Hist_i$). A \emph{prefix} (resp. \emph{suffix}) of a play $\rho$ is a finite 
sequence $\rho_0 \dots \rho_n$  (resp. infinite sequence $\rho_n \rho_{n+1} \ldots$) denoted by $\rho_{\leq n}$ 
or $\rho_{< n +1}$ (resp. $\rho_{\geq n}$). We use notation $h < \rho$ when a history $h$ is prefix of a play $\rho$. 
Given two distinct plays $\rho$ and $\rho'$, their longest common prefix is denoted by $\prefix{\rho}{\rho'}$.  

When an initial vertex $v_0 \in V$ is fixed, we call $(\G, v_0)$ an \emph{initialized} quantitative game. A play (resp. a history) of $(\G, v_0)$ is a play (resp. history) of $\G$ starting in $v_0$. The set of histories $h \in \Hist$ (resp. $h \in \Hist_i$) with $\First(h) = v_0$ is denoted by $\Hist(v_0)$ (resp. $\Hist_i(v_0)$). In the figures of this article, we will often \emph{unravel} the graph of the game $(\G, v_0)$ from the initial vertex $v_0$, which ends up in an infinite tree.



Given a play $\rho\in V^\omega$, its \emph{cost} is given by $\bar \g(\rho) = (\g_i(\rho))_{i \in \Pi}$. In this article, we are particularly interested in quantitative reachability games in which $\g_i(\rho)$ is equal to the number of edges to reach a given set of vertices.

\begin{definition} \label{def:reach}
A \emph{quantitative reachability game} is a quantitative game $\G$ such that the cost function $\bar \g : V^\omega \to (\IN \cup \{+\infty\})^{\Pi}$ is defined as follows. Each player~$i$ has a \emph{target set} $\Target_i \subseteq V$, and for each play $\rho = \rho_0\rho_1\ldots$ of $\G$, the cost $\g_i(\rho)$ is the least index $n$ such that $\rho_n \in \Target_i$ if it exists, and $+ \infty$ otherwise. 
\end{definition}

\noindent
Notice that the cost function $\bar \g$ of a quantitative game is often defined from $|\Pi|$-uples of weights labeling the edges of the game. For instance, in inf games, $\g_i(\rho)$ is equal to the infimum of player $i$ weights seen along~$\rho$. Some other classical examples are liminf, limsup, mean-payoff, and discounted sum games~\cite{LaurentDoyen}. In case of quantitative reachability on graphs with weighted edges, the cost $\g_i(\rho)$ for player $i$ is replaced by the sum of the weights seen along $\rho$ until his target set is reached. We do not consider this extension here. Notice that when weights are positive integers, replacing each edge with cost $c$ by a path of length $c$ composed of $c$ new edges allows to recover Definition~\ref{def:reach}.

%
%

Let us recall the notions of prefix-independent, continuous, and lower- (resp. upper-) semicontinuous cost functions. Since $V$ is endowed with the discrete topology, and thus $V^\omega$ with the product topology, a sequence of plays $(\rho_n)_{n \in \IN}$ converges to a play $\rho = \lim_{n \rightarrow \infty} \rho_n$ if every prefix of $\rho$ is prefix of all $\rho_n$ except, possibly, of finitely many of them.

\begin{definition} \label{def:kindCost}
Let $\g_i$ be a player~$i$ cost function. Then 
\begin{itemize}
\item $\g_i$ is \emph{prefix-independent} if $\g_i(h \rho) = \g_i(\rho)$ for any history $h$ and play $\rho$. 
\item $\g_i$ is \emph{continuous} if whenever $\lim_{n \rightarrow \infty} \rho_n =  \rho$, then $\lim_{n \rightarrow \infty} \g_i(\rho_n) =  \g_i(\rho)$. 
\item $\g_i$ \emph{upper-semicontinuous} (resp. \emph{lower-semicontinuous})  if whenever $\lim_{n \rightarrow \infty} \rho_n =  \rho$, then $\limsup_{n \rightarrow \infty} \g_i(\rho_n) \leq \g_i(\rho)$ (resp. $\liminf_{n \rightarrow \infty} \g_i(\rho_n) \geq \g_i(\rho)$).
\end{itemize}
\end{definition}

\noindent 
For instance, the cost functions used in liminf and mean-payoff games are prefix-independent, contrarily to the case of inf games.
Clearly, if $\g_i$ is continuous, then it is upper- and lower-semicontinuous. For instance, the cost functions of liminf and mean-payoff games are neither upper-semicontinuous nor lower-semicontinuous, whereas cost functions of discounted sum games are continuous. The cost functions $\g_i$, $i \in \Pi$, used in quantitative reachability games can be transformed into continuous ones as follows~\cite{BBD}: $\g'_i(\rho) = 1 - \frac{1}{\g_i(\rho) + 1}$ if $\g_i(\rho) < +\infty$, and $\g'_i(\rho) = 1$ otherwise.

\subsection{Strategies and Deviations}

A \emph{strategy} $\sigma$ for player $i \in \Pi$ is a function $\sigma: \Hist_i \to V$ assigning to each history\footnote{In this article we often write a history in the form $hv$ with $v \in V$ to emphasize that $v$ is the last vertex of this history.}  $hv \in \Hist_i$ a vertex $v' = \sigma(hv)$ such that $(v, v') \in E$.
In an initialized game $(\mathcal{G}, v_0)$, $\sigma$ is restricted to histories starting with $v_0$.
A player $i$ strategy $\sigma$ is \emph{positional} if it only depends on the last vertex of the history, i.e. $\sigma(hv) = \sigma(v)$ for all $hv \in \Hist_i$. It is a \emph{finite-memory} strategy if it needs only finite memory of the history (recorded by a finite strategy automaton, also called a Moore machine). A play $\rho$ is \emph{consistent} with a player~$i$ strategy $\sigma$ if $\rho_{k+1} = \sigma(\rho_{\leq k})$ for all $k$ such that $\rho_k \in V_i$. A \emph{strategy profile} of $\mathcal{G}$ is a tuple $\bar\sigma = (\sigma_i)_{i \in \Pi}$ of strategies, where each $\sigma_i$ is a player~$i$ strategy. It is called \emph{positional} (resp. \emph{finite-memory}) if all $\sigma_i$, $i \in \Pi$, are positional (resp. finite-memory). Given an initial vertex $v_0$, such a strategy profile determines a unique play of $(\mathcal{G}, v_0)$ that is consistent with all the strategies. This play is called the \emph{outcome} of $\bar\sigma$ and is denoted by $\out{\bar\sigma}_{v_0}$. 
 
Given $\sigma_i$ a player $i$ strategy, we say that player $i$ \emph{deviates} from $\sigma_i$ if he does not stick to $\sigma_i$ and prefers to use another strategy $\sigma'_i$. Let $\bar \sigma$ be a strategy profile. When all players stick to their strategy $\sigma_i$ except player $i$ that shifts to $\sigma'_i$, we denote by $(\sigma'_i, \sigma_{-i})$ the derived strategy profile, and by $\out{\sigma'_i, \sigma_{-i}}_{v_0}$ its outcome in $(\mathcal{G}, v_0)$. In the next definition, we introduce the notion of deviation step of a strategy $\sigma'_i$ from a given strategy profile $\bar\sigma$.

\begin{definition} \label{def:devstep}
Let $(\G, v_0)$ be an initialized game, $\bar \sigma$ be a strategy profile, and $\sigma'_i$ be a player $i$ strategy. We say that $\sigma'_i$ has a \emph{\devstep{hv}} from $\bar \sigma$ for some history $hv \in \Hist_i(v_0)$ with $v \in V_i$, if 
$$hv <  \out{\sigma'_i,\sigma_{-i}}_{v_0} \mbox{ and } \sigma_i(hv) \ne \sigma'_i(hv).$$
\end{definition}

\noindent
Notice that the previous definition requires that $hv$ is a prefix of the outcome $\out{\sigma'_i,\sigma_{-i}}_{v_0}$; 
it says nothing about $\sigma'_i$ outside of this outcome. A strategy $\sigma'_i$ can have a finite or an infinite number 
of deviation steps in the sense of Definition~\ref{def:devstep}. A strategy with three deviation steps is depicted in 
Figure~\ref{fig:deviationsteps} (left) such that each \devstep{h_kv_k}  from $\bar\sigma$, $1\leq k \leq3$, 
is highlighted with a dashed edge. We will come back to this figure later on. 

\begin{figure}[h!] 
\begin{minipage}[c]{.46\linewidth}
\centering
\begin{tikzpicture}[initial text=,auto, node distance=1cm, shorten >=1pt, scale=0.1] 

\node[state, scale=0.5]                           (0)                                  {$v_0$};
\node                                            (0bbis)  [right=0.2cm of 0]           {};
\node[state, scale=0.5]                           (1)     [below=1cm of 0bbis]         {$v_1$};
\node[scale=0.6]                                  (1bis)  [below=0.6cm of 1]           {};
\node[state, scale=0.5]                           (3)     [below right=0.7cm of 1]     {$v'_1$};
\node                                            (1bbis)  [right=0.2cm of 3]           {};
\node[state, scale=0.5]                           (4)     [below=1cm of 1bbis]         {$v_2$};
\node[scale=0.6]                                  (2bis)  [below=0.6cm of 4]           {};
\node[state, scale=0.5]                           (5)     [below right=0.7cm of 4]     {$v'_2$};
\node                                            (2bbis)  [right=0.2cm of 5]           {};
\node[state, scale=0.5]                           (6)     [below=1cm of 2bbis]         {$v_3$};
\node[scale=0.6]                                  (3bis)  [below=0.6cm of 6]           {};
\node[state, scale=0.5]                           (7)     [below right=0.7cm of 6]     {$v'_3$};
\node                                            (7bis)   [right=0.2cm of 7]           {};
\node[scale=0.6]                                  (8)     [below=2cm of 7bis]          {$\out{\sigma'_i,\sigma_{-i}}_{v_0}$};

\node[scale=0.5] (fictif1) at (30,3) {};
\node[scale=0.5] (fictif2) at (30,-35) {};

\node[scale=0.5] (fictif3) at (40,3) {};
\node[scale=0.5] (fictif4) at (40,-56) {};

\path[-]  (0)   edge      [out=-10,in=100]         node[scale=0.6]          {$h_1$}     (1)

          (1)   edge      [->, dashed]             node[left, scale=0.6]    {}        (3)
				
				edge      [->]                     node[left, scale=0.6]    {}        (1bis)

          (4)   edge      [->, dashed]             node[left, scale=0.6]    {}        (5)

                edge      [->]                     node[left, scale=0.6]    {}        (2bis)

          (3)   edge      [out=-10,in=100]         node[right, scale=0.6]   {}        (4)

          (6)   edge      [->, dashed]             node[left, scale=0.6]    {}        (7)

                edge      [->]                     node[left, scale=0.6]    {}        (3bis)

          (5)   edge      [out=-10,in=100]         node[right, scale=0.6]   {}        (6)

          (7)   edge      [out=-35,in=95]          node[right, scale=0.6]   {}        (8)

          (fictif1)  edge [<->]                    node[scale=0.6]          {$h_2$}   (fictif2)

          (fictif3)  edge [<->]                    node[scale=0.6]          {$h_3$}   (fictif4);

\end{tikzpicture}
\end{minipage} \hfill
\begin{minipage}[c]{.46\linewidth}
	\begin{tikzpicture}[initial text=,auto, node distance=1cm, shorten >=1pt, scale=0.1] 

	\node[state, scale=0.5]                           (0)                                  {$v_0$};
	\node                                            (0bbis)  [right=0.2cm of 0]           {};
	\node[state, scale=0.5]                           (1)     [below=1cm of 0bbis]         {$v_1$};
	\node[scale=0.6]                                  (1bis)  [below=2cm of 1]             {$h_1 \ccdot \out{\SB{\bar\sigma}{h_1}}_{v_1 = \out{\bar\sigma}_{v_0}}$};
	\node[state, scale=0.5]                           (3)     [below right=0.7cm of 1]     {$v'_1$};
	\node                                            (1bbis)  [right=0.2cm of 3]           {};
	\node[state, scale=0.5]                           (4)     [below=1cm of 1bbis]         {$v_2$};
	\node[scale=0.6]                                  (2bis)  [below=2cm of 4]             {$h_2 \ccdot \out{\SB{\bar\sigma}{h_2}}_{v_2} = h_1v_1 \ccdot \out{\SB{\bar\sigma}{h_1v_1}}_{v'_1}$};
	\node[state, scale=0.5]                           (5)     [below right=0.7cm of 4]     {$v'_2$};
	\node                                            (2bbis)  [right=0.2cm of 5]           {};
	\node[state, scale=0.5]                           (6)     [below=1cm of 2bbis]         {$v_3$};
	\node[scale=0.6]                                  (3bis)  [below=2cm of 6]             {$h_3 \ccdot \out{\SB{\bar\sigma}{h_3}}_{v_3} = h_2v_2 \ccdot \out{\SB{\bar\sigma}{h_2v_2}}_{v'_2}$};
	\node[state, scale=0.5]                           (7)     [below right=0.7cm of 6]     {$v'_3$};
	\node                                            (7bis)   [right=0.2cm of 7]           {};
	\node[scale=0.6]                                  (8)     [below=2cm of 7bis]         {$\out{\sigma'_i,\sigma_{-i}}_{v_0} = h_3v_3 \ccdot \out{\SB{\bar\sigma}{h_3v_3}}_{v'_3}$};

	\node             at (30,3)                       (10)          {};
	\node             at (30,-35)                     (11)          {};
	\node             at (40,3)                       (12)          {};
	\node             at (40,-56)                     (13)          {};

	\path[-]  (0)   edge      [out=-10,in=100]         node[scale=0.6]          {$h_1$}     (1)

	          (1)   edge      [->, dashed]             node[left, scale=0.6]    {}        (3)

					edge                               node[left, scale=0.6]    {}        (1bis)

	          (4)   edge      [->, dashed]             node[left, scale=0.6]    {}        (5)

	                edge                               node[left, scale=0.6]    {}        (2bis)

	          (3)   edge      [out=-10,in=100]         node[right, scale=0.6]   {}        (4)

	          (6)   edge      [->, dashed]             node[left, scale=0.6]    {}        (7)

	                edge                               node[left, scale=0.6]    {}        (3bis)

	          (5)   edge      [out=-10,in=100]         node[right, scale=0.6]   {}        (6)

	          (7)   edge      [out=-35,in=95]          node[right, scale=0.6]   {}        (8)
	
		      (10)   edge      [<->]                   node[right, scale=0.6]   {$h_2$}        (11)
		
		      (12)   edge      [<->]                   node[right, scale=0.6]   {$h_3$}        (13);

	\end{tikzpicture}
\end{minipage}
\caption{A strategy $\sigma'_i$ with a finite number of deviation steps.}\label{fig:deviationsteps}
\end{figure}

In light of Definition~\ref{def:devstep}, we introduce the following classes of strategies.

\begin{definition}
Let $(\G, v_0)$ be an initialized game, and $\bar \sigma$ be a strategy profile. 
\begin{itemize}
\item A strategy $\sigma'_i$ is \emph{finitely deviating} from $\bar \sigma$ if it has a finite number of deviation steps from $\bar \sigma$.
\item It is \emph{one-shot deviating} from $\bar \sigma$ if it has a \devstep{v_0} from $\bar \sigma$, and no other deviation step.
\end{itemize}
\end{definition}

\noindent
In other words, a strategy $\sigma'_i$ is finitely deviating from $\bar \sigma$ if there exists a history $hv < \out{\sigma'_i,\sigma_{-i}}_{v_0}$ such that for all $h'v'$, $hv \leq h'v' < \out{\sigma'_i,\sigma_{-i}}_{v_0}$, we have $\sigma'_i(h'v') = \sigma_i(h'v')$ ($\sigma'_i$ acts as $\sigma_i$ from $hv$ along $\out{\sigma'_i,\sigma_{-i}}_{v_0}$). The strategy $\sigma'_i$ is one-shot deviating from $\bar \sigma$ if it differs from $\sigma_i$ at the initial vertex $v_0$, and after $v_0$ acts as $\sigma_i$ along $\out{\sigma'_i,\sigma_{-i}}_{v_0}$. As for Definition~\ref{def:devstep}, the previous definition says nothing about $\sigma'_i$ outside of $\out{\sigma'_i,\sigma_{-i}}_{v_0}$. Clearly any one-shot deviating strategy is finitely deviating. The strategy of Figure~\ref{fig:deviationsteps} is finitely deviating but not one-shot deviating.

\subsection{Nash and Subgame Perfect Equilibria, and Variants}

In this paper, we focus on subgame perfect equilibria and their variants. Let us first recall the classical notion of Nash equilibrium. 
A strategy profile $\bar\sigma$ in an initialized game is a Nash equilibrium if no player has an incentive to deviate 
unilaterally from his strategy, since he cannot strictly decrease his cost when using any other strategy. 

\begin{definition}
Given an initialized game $(\G, v_0)$, a strategy profile $\bar \sigma = (\sigma_i)_{i \in \Pi}$ of $(\G,v_0)$ is a \emph{Nash equilibrium (NE)} if for all players $i \in \Pi$, for all player $i$ strategies $\sigma'_i$, we have 
$\g_i(\out{\sigma'_i, \sigma_{-i}}_{v_0}) \geq \g_i(\out{\bar \sigma}_{v_0})$.
\end{definition}

\noindent
We say that a player~$i$ strategy $\sigma'_i$ is a \emph{profitable deviation} for $i$ w.r.t. $\bar\sigma$ if $\g_i(\out{\sigma'_i, \sigma_{-i}}_{v_0}) < \g_i(\out{\bar \sigma}_{v_0})$. Therefore $\bar\sigma$ is a NE if no player has a profitable deviation w.r.t. $\bar\sigma$. 

Let us propose the next variants of NE.

\begin{definition}
Let $(\G, v_0)$ be an initialized game. A strategy profile $\bar \sigma$ is a \emph{weak NE} (resp. \emph{very weak NE}) in $(\G,v_0)$ if, for each player $i \in \Pi$, 
for each finitely deviating (resp. one-shot deviating) strategy $\sigma'_i$ of player~$i$, we have
$\g_i(\out{\sigma'_i, \sigma_{-i}}_{v_0}) \geq \g_i(\out{\bar \sigma}_{v_0})$.
\end{definition}

\begin{example}
Consider the two-player quantitative game depicted in Figure~\ref{TwoPlayerQuantitativeGame}. 
Circle (resp. square) vertices are player $1$ (resp. player $2$) vertices. The edges are labeled by couples of weights such that weights $(0, 0)$ are not specified. For each player~$i$, the cost $\g_i(\rho)$ of a play $\rho$ is the weight of its ending loop. In this simple game, each player~$i$ have two positional strategies that are respectively denoted by $\sigma_i$ and $\sigma'_i$ (see Figure \ref{TwoPlayerQuantitativeGame}). 

\begin{figure}[h!]
\begin{center}
\begin{tikzpicture}[initial text=,auto, node distance=1cm, shorten >=1pt] 

\node[state, scale=0.5]               (0)                          {$v_0$};
\node[state, scale=0.5]               (1)    [below left=of 0]     {$v_1$};
\node[state, rectangle, scale=0.5]    (2)    [below right=of 0]    {$v_2$};
\node[state, scale=0.5]               (3)    [below left=of 2]     {$v_3$};
\node[state, scale=0.5]               (4)    [below right=of 2]    {$v_4$};

\node (fictif) [above=4 mm of 0]  {};

\path[->] (0)  edge                      node[left, scale=0.6]           {$\sigma_1$}          (1)
               edge                      node[right, scale=0.6]          {$\sigma'_1$}         (2)

          (1)  edge  [loop below]        node[midway, scale=0.6]         {$(3, 3)$}            ()

          (2)  edge                      node[left, scale=0.6]           {$\sigma_2$}          (3)
               edge                      node[right, scale=0.6]          {$\sigma'_2$}         (4)

          (3)  edge  [loop below]        node[midway, scale=0.6, black]  {$(3, 4)$}            ()

          (4)  edge  [loop below]        node[midway, scale=0.6, black]  {$(1, 3)$}            ()

          (fictif)   edge (0);

\end{tikzpicture}
\caption{A simple two-player quantitative game\label{TwoPlayerQuantitativeGame}}
\end{center}
\end{figure}

The strategy profile $(\sigma_1, \sigma'_2)$ is not a NE since $\sigma'_1$ is a profitable deviation for player~$1$ w.r.t. $(\sigma_1, \sigma'_2)$ (player~$1$ pays cost $1$ instead of cost $3$). This strategy profile is neither a weak NE nor a very weak NE because in this simple game, player~$1$ can only deviate from $\sigma_1$ by using the one-shot deviating strategy $\sigma'_1$.
On the contrary, the strategy profile $(\sigma_1, \sigma_2)$ is a NE with outcome $v_0v_1^\omega$ of cost $(3,3)$. It is also a weak NE and a very weak NE.  
\end{example}

By definition, any NE is a weak NE, and any weak NE is a very weak NE. The contrary is false: in the previous example, $(\sigma'_1,\sigma_2)$ is a very weak NE, but not a weak NE. We will see later an example of game with a weak NE that is not an NE (see Example~\ref{ex:contrex}).

The notion of subgame perfect equilibrium is a refinment of NE. In order to define it, we need to introduce the following notions. Given a quantitative game $\WG$ and a history $h$ of $\G$, we denote by $\SB{\G}{h}$ the game $\SBWG$ where 
$$\SB{\bar\g}{h}(\rho) = \bar\g(h\rho)$$ 
for any play of $\SB{\G}{h}$\footnote{In this article, we will always use notation $\bar\g(h\rho)$ instead of $\SB{\bar\g}{h}(\rho)$.}, and we say that $\SB{\G}{h}$ is a \emph{subgame} of $\G$. Given an initialized game $(\G, v_0)$, and a history $hv \in \Hist(v_0)$, the initialized game $(\SB{\G}{h}, v)$ is called the subgame of $(\G, v_0)$ with history $hv$. Notice that $(\G, v_0)$ can be seen as a subgame of itself with history $hv_0$ such that $h = \epsilon$. Given a player $i$ strategy $\sigma$ in $(\G,v_0)$, we define 
the strategy $\SB{\sigma}{h}$ in $(\SB{\G}{h}, v)$ as $\SB{\sigma}{h}(h') = \sigma(hh')$ for all histories $h' \in \Hist_i(v)$. Given a strategy profile $\bar \sigma = (\sigma_i)_{i \in \Pi}$, we use notation $\SB{\bar \sigma}{h}$ for $(\SB{\sigma_i}{h})_{i \in \Pi}$, and $\out{\SB{\bar \sigma}{h}}_{v}$ is its outcome in the subgame $(\SB{\G}{h}, v)$.

We can now recall the classical notion of subgame perfect equilibrium: it is a strategy profile in an initialized game that induces a NE in each of its subgames. In particular, a subgame perfect equilibrium is a NE.

\begin{definition}
Given an initialized game $(\G, v_0)$, a strategy profile $\bar \sigma$ of $(\G,v_0)$ is a \emph{subgame perfect equilibrium (SPE)} if $\SB{\bar \sigma}{h}$ is a NE in $(\SB{\G}{h}, v)$, for every history $hv \in \Hist(v_0)$. 
\end{definition}

As for NE, we propose the next variants of SPE.

\begin{definition} \label{def:weakSPE}
Let $(\G, v_0)$ be an initialized game.  A strategy profile $\bar \sigma$ is a \emph{weak SPE} (resp. \emph{very weak SPE}) if $\SB{\bar \sigma}{h}$ is a weak NE (resp. very weak NE) in $(\SB{\G}{h}, v)$, for all histories $hv \in \Hist(v_0)$.
\end{definition}

\begin{example}
We come back to the game depicted in Figure~\ref{TwoPlayerQuantitativeGame}. We have seen before that the strategy profile $(\sigma_1, \sigma_2)$ is a NE. However it is not an SPE. Indeed consider the subgame $(\SB{\G}{v_0}, v_2)$ of $(\G,v_0)$ with history $v_0v_2$. In this subgame, $\sigma'_2$ is a profitable deviation for player~$2$. One can easily verify that the strategy profile $(\sigma'_1, \sigma'_2)$ is an SPE, as well as a weak SPE and a very weak SPE, due to the simple form of the game.
\end{example}

The previous example is too simple to show the differences between classical SPEs and their variants. The next example presents a game with a (very) weak SPE but no SPE. 

\begin{example} \label{ex:contrex}
Consider the initialized two-player game $(\G, v_0)$ in Figure~\ref{fig:gameNoSPE}.
The edges are labeled by couples of weights, and for each player~$i$ the cost $\g_i(\rho)$ of a play $\rho$ 
is the unique weight seen in its ending cycle. With this definition, $\g_i(\rho)$ can also be seen as either the mean-payoff, or the liminf, or the limsup, of the weights of $\rho$. 
It is known that this game has no SPE~\cite{SV03}. 

Let us show that the positional 
strategy profile $\bar \sigma$ depicted with thick edges is a very weak SPE. Due to the simple form of the game, 
only two cases are to be treated. Consider the subgame $(\SB{\G}{h}, v_0)$ with $h \in (v_0v_1)^\ast$, 
and the one-shot deviating strategy $\sigma'_1$ of player~$1$ such that $\sigma'_1(v_0) = v_2$. 
Then $\out{\SB{\bar\sigma}{h}}_{v_0} = v_0v_1v_3^\omega$ and $\out{\sigma'_1,\SB{\sigma_2}{h}}_{v_0} = v_0v_2^\omega$, 
showing that $\sigma'_1$ is not a profitable deviation for player~$1$. One also checks that in the 
subgame $(\SB{\G}{h}, v_1)$ with $h \in (v_0v_1)^\ast v_0$, the one-shot deviating strategy $\sigma'_2$ 
of player~$2$ such that $\sigma'_2(v_1) = v_0$ is not profitable for him.

Similarly, one can prove that $\bar \sigma$ is a weak SPE (see also Proposition~\ref{prop:weak-veryweak} hereafter). Notice that $\bar \sigma$ is not an SPE. Indeed the strategy $\sigma'_2$ such that $\sigma'_2(hv_1) = v_0$ for all $h$, is a profitable deviation for player 2 in $(\G, v_0)$. This strategy is (of course) not finitely deviating. Finally notice that $\bar \sigma$ is a weak NE that is not an NE.

\begin{figure}[ht!]
\begin{center}
\begin{tikzpicture}[initial text=,auto, node distance=2cm, shorten >=1pt] 

\node[state, scale=0.5]              (1)                     {$v_0$};
\node[state, rectangle, scale=0.5]   (2)    [right=of 1]     {$v_1$};
\node[state, scale=0.5]              (3)    [left=of 1]      {$v_2$};
\node[state, scale=0.5]              (4)    [right=of 2]     {$v_3$};

\node (fictif) [above left=4 mm of 1]  {};

\path[->] (1) edge [bend right=25, thick, black]          node[below, scale=0.6, black]    {$(2, 0)$}   (2)
              edge                                node[above, scale=0.6]           {$(0, 0)$}   (3)

          (2) edge  [bend left=-25]               node[above, scale=0.6]           {$(2, 0)$}   (1)
              edge  [thick, black]                        node[above, scale=0.6, black]    {$(0,0)$}   (4)

          (3) edge  [loop above, thick, black]            node[midway, scale=0.6, black]   {$(1, 2)$}   ()

          (4) edge  [loop above, thick, black]            node[midway, scale=0.6, black]   {$(0, 1)$}   ()

          (fictif)   edge (1);

\end{tikzpicture}
\end{center}
\caption{A two-player game with a (very) weak SPE and no SPE. For each player, the cost of a play 
is his unique weight seen in the ending cycle.}
\label{fig:gameNoSPE}
\end{figure}
\end{example}

From Definition~\ref{def:weakSPE}, any SPE is a weak SPE, and any weak SPE is a very weak SPE. The next proposition states that weak SPE and very weak SPE are equivalent notions, but this is no longer true for SPE and weak SPE as shown previously by Example~\ref{ex:contrex}.

\begin{proposition} \label{prop:weak-veryweak}
\begin{itemize}
\item Let $(\G, v_0)$ be an initialized game, and $\bar \sigma$ be a strategy profile. Then $\bar \sigma$ is a weak SPE iff $\bar \sigma$ is a very weak SPE.
\item There exists an initialized game $(\G, v_0)$ with a weak SPE but no SPE.
\end{itemize}
\end{proposition}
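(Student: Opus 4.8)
The proposition has two parts. The second part follows immediately from Example~\ref{ex:contrex}: the game in Figure~\ref{fig:gameNoSPE} is exhibited there as a concrete witness of an initialized game possessing a (very) weak SPE but no SPE. So the only real work is the first equivalence, and since every weak SPE is trivially a very weak SPE (a one-shot deviating strategy is in particular finitely deviating), the content is the converse: a very weak SPE is a weak SPE. The plan is to fix a very weak SPE $\bar\sigma$, a history $hv \in \Hist(v_0)$, and a finitely deviating player-$i$ strategy $\sigma'_i$ in the subgame $(\SB{\G}{h},v)$, and show it is not profitable, i.e. $\g_i(\out{\sigma'_i,\SB{\sigma_{-i}}{h}}_v) \geq \g_i(\out{\SB{\bar\sigma}{h}}_v)$.

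The engine of the argument is a \emph{telescoping} decomposition of the finitely deviating outcome along its deviation steps, exactly as foreshadowed by Figure~\ref{fig:deviationsteps} (right). Suppose $\sigma'_i$ has deviation steps at $h_1v_1 < h_2v_2 < \dots < h_mv_m$ (all prefixes of $\rho := \out{\sigma'_i,\SB{\sigma_{-i}}{h}}_v$), with $h_1v_1 = v$ possibly the initial vertex and, crucially, $\sigma'_i$ agreeing with $\sigma_i$ everywhere along $\rho$ after $h_mv_m$. The key observation is that each individual deviation step can be ``replayed'' by a one-shot deviating strategy in the appropriate subgame: at step $h_kv_k$, the strategy $\sigma'_i$ does a single one-shot deviation relative to $\SB{\bar\sigma}{h_{\text{prev}}}$ in the subgame rooted there, and thereafter follows $\bar\sigma$ up to the next deviation point. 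Reading Figure~\ref{fig:deviationsteps} (right), the outcome satisfies the chain of identities $\rho = h_kv_k \cdot \out{\SB{\bar\sigma}{h_kv_k}}_{v'_k}$ where $v'_k = \sigma'_i(h_kv_k)$ is the deviated successor, and the vertex reached just before the $k$-th deviation equals the first vertex of the $\bar\sigma$-outcome launched after the $(k-1)$-th deviation.

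First I would set up this decomposition precisely and then argue by a descending induction on $k$ (or equivalently a telescoping sum). Applying the very weak SPE hypothesis in the subgame $(\SB{\G}{h_kv_k},v_k)$ to the one-shot deviation that chooses $v'_k = \sigma'_i(h_kv_k)$ and then reverts to $\bar\sigma$, we obtain
\[
\g_i\bigl(\out{\SB{\bar\sigma}{h_k}}_{v_k}\bigr) \;\leq\; \g_i\bigl(h_kv_k \cdot \out{\SB{\bar\sigma}{h_kv_k}}_{v'_k}\bigr).
\]
The right-hand side here is, by the matching identities in the figure, precisely the cost $\g_i\bigl(\out{\SB{\bar\sigma}{h_{k-1}}}_{v_{k-1}}\bigr)$ of the outcome one would see by stopping the deviations one step earlier. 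Chaining these $m$ inequalities telescopes from $\g_i(\out{\SB{\bar\sigma}{h}}_v)$ (the prescribed outcome, $k=0$) up to $\g_i(\rho)$ (the full finitely deviating outcome, $k=m$), yielding $\g_i(\out{\SB{\bar\sigma}{h}}_v) \leq \g_i(\rho)$, which is exactly the required non-profitability.

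The main obstacle — and the step I would write most carefully — is the bookkeeping that justifies each single link as a genuine \emph{one-shot} deviation in the right subgame and verifies that the two descriptions of the intermediate outcomes coincide (the equalities $h_kv_k \cdot \out{\SB{\bar\sigma}{h_kv_k}}_{v'_k} = \out{\SB{\bar\sigma}{h_{k-1}}}_{v_{k-1}}$ read off Figure~\ref{fig:deviationsteps}). One must check that between consecutive deviation steps $\sigma'_i$ indeed follows $\sigma_i$ along the outcome, so that the post-deviation continuation is genuinely the $\bar\sigma$-outcome of the subgame and the very weak SPE property is legitimately applicable there; this is where the hypothesis that $\sigma'_i$ is \emph{finitely} deviating is essential, as it guarantees a last deviation step after which $\rho$ coincides with a $\bar\sigma$-outcome, giving the induction a base case. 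Once the indexing of histories, deviated successors, and subgame roots is fixed consistently, the inequalities compose mechanically.
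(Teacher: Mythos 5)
Your argument is correct in substance, but it is organized genuinely differently from the paper's proof. The paper argues by contradiction with a minimality device: it takes a profitable finitely deviating strategy with the \emph{minimum} number $n$ of deviation steps, applies the very weak NE hypothesis exactly once --- at the \emph{last} deviation step $g_nv_n$, which is the one place where the restriction $\SB{\sigma'_i}{g_n}$ is genuinely one-shot deviating --- and then erases that last deviation to obtain a profitable strategy with $n-1$ deviation steps, contradicting minimality. You instead run a direct forward telescoping: writing $\pi_k$ for the hybrid play ``perform the first $k$ deviations of $\sigma'_i$, then follow $\bar\sigma$ forever'', each link $\g_i(h\pi_{k-1}) \leq \g_i(h\pi_k)$ is an instance of the very weak NE property applied to an \emph{auxiliary} one-shot deviation (deviate at $h_kv_k$ to $v'_k$, then revert to $\bar\sigma$), and the chain runs from $\pi_0 = \out{\SB{\bar\sigma}{h}}_v$ up to $\pi_m = \rho$. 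Both proofs rest on the same one-step-deviation mechanism and the same gluing identities of Figure~\ref{fig:deviationsteps} (right); yours dispenses with the contradiction and minimality scaffolding and derives the inequality constructively, at the price of invoking the hypothesis at every deviation step and of heavier bookkeeping, whereas the paper needs the gluing identity at a single step only.

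Three slips to repair in a write-up, none of them fatal. First, the identity $\rho = h_kv_k \ccdot \out{\SB{\bar\sigma}{h_kv_k}}_{v'_k}$ holds only for $k = m$; for $k < m$ that play is the hybrid $\pi_k$, not $\rho$. Second, your matching sentence has the indices reversed: by the figure's identities, the right-hand side of link $k$ (the outcome with the first $k$ deviations) coincides with the \emph{left}-hand side of link $k+1$, namely $h_{k+1} \ccdot \out{\SB{\bar\sigma}{h_{k+1}}}_{v_{k+1}}$ --- it is the outcome with one \emph{more} deviation, not the one ``obtained by stopping one step earlier'', and the index should be $k+1$, not $k-1$. (Your final chaining statement, from $k=0$ up to $k=m$, is nevertheless stated in the correct direction.) Third, since the cost functions are not assumed prefix-independent, every cost in the chain must be evaluated on the full play of $(\G,v_0)$: the correct link is $\g_i(hh_k \ccdot \out{\SB{\bar\sigma}{hh_k}}_{v_k}) \leq \g_i(hh_kv_k \ccdot \out{\SB{\bar\sigma}{hh_kv_k}}_{v'_k})$, whereas as displayed your left-hand side omits the prefix while your right-hand side includes it (also, in the paper's convention the subgame rooted at history $hh_kv_k$ is written $(\SB{\G}{hh_k},v_k)$, not $(\SB{\G}{h_kv_k},v_k)$). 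With these repairs the links compose exactly as you describe and yield $\g_i(h \ccdot \out{\SB{\bar\sigma}{h}}_v) \leq \g_i(h\rho)$, as required.
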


Before proving this proposition, we would like to come back to the definition of deviation step (Definition~\ref{def:devstep}), and explain it now with the concept of subgame. Given an initialized game $(\G,v_0)$, a strategy profile $\bar \sigma$, and a player $i$ strategy $\sigma'_i$, we recall that $\sigma'_i$ has a \devstep{hv} from $\bar \sigma$ for some $hv \in \Hist_i(v_0)$, if $hv <  \out{\sigma'_i,\sigma_{-i}}_{v_0}$ and $\sigma_i(hv) \ne \sigma'_i(hv)$. Equivalently, $\sigma'_i$ has a \devstep{hv} from $\bar \sigma$ iff 
$$\prefix{h \ccdot \out{\SB{\bar\sigma}{h}}_{v}}{\out{\sigma'_i,\sigma_{-i}}_{v_0}} = hv.$$
This alternative vision of deviation step is depicted in Figure~\ref{fig:deviationsteps} (right) for a strategy with three deviation steps. For instance, for history $h_2v_2$, we have $h_2v_2 <  \out{\sigma'_i,\sigma_{-i}}_{v_0}$ and $\sigma_i(h_2v_2) \ne \sigma'_i(h_2v_2)$, or equivalently $\prefix{h_2 \ccdot \out{\SB{\bar\sigma}{h_2}}_{v_2}}{\out{\sigma'_i,\sigma_{-i}}_{v_0}} = h_2v_2$. Notice that there is no intermediate deviation step between the \devstep{h_1v_1} and the \devstep{h_2v_2} since $h_2 \ccdot \out{\SB{\bar\sigma}{h_2}}_{v_2} = h_1v_1 \ccdot \out{\SB{\bar\sigma}{h_1v_1}}_{v'_1}$ as indicated in the figure. 
Similarly the \devstep{h_1v_1} is the first one because $h_1 \ccdot \out{\SB{\bar\sigma}{h_1}}_{v_1} = \out{\bar\sigma}_{v_0}$, and
the \devstep{h_3v_3} is the third one because $h_3 \ccdot \out{\SB{\bar\sigma}{h_3}}_{v_3} = h_2v_2 \ccdot \out{\SB{\bar\sigma}{h_2v_2}}_{v'_2}$. The latter deviation step is the last one because $\out{\sigma'_i,\sigma_{-i}}_{v_0} = h_3v_3 \ccdot \out{\SB{\bar\sigma}{h_3v_3}}_{v'_3}$.

\begin{proof}[Proof of Proposition~\ref{prop:weak-veryweak}]  
This proof is based on arguments from the one-step deviation property used to prove Kuhn's theorem \cite{kuhn53}.
Let $\bar \sigma$ be a very weak SPE, and let us prove that it is a weak SPE. As a contradiction, assume that there exists a subgame $(\SB{\G}{h}, v)$ such that the strategy profile $\SB{\bar\sigma}{h}$ is not a weak NE. This means that there exists a player $i$ strategy $\sigma'_i$ in $(\SB{\G}{h}, v)$ such that $\sigma'_i$ is finitely deviating from $\SB{\bar \sigma}{h}$ and 
\begin{equation}\label{finitelyDeviating}
\g_i(h\rho) > \g_i(h\rho'), 
\end{equation} 
where $\rho = \out{\SB{\bar \sigma}{h}}_v$ and $\rho' = \out{\sigma'_i, \SB{\sigma_{-i}}{h}}_v$. Let us consider such a strategy $\sigma'_i$ with a minimum number $n$ of deviation steps from $\out{\SB{\bar \sigma}{h}}_v$, and let $g_kv_k$, $1 \leq k \leq n$, be the histories in $\Hist_i(v)$ such that $\sigma'_i$ has a \devstep{g_kv_k} from $\out{\SB{\bar \sigma}{h}}_v$. Let us consider the subgame $(\SB{\G}{hg_n}, v_n)$ (see Figure~\ref{fig:weak-veryweak}). In this subgame, $\SB{\sigma'_i}{g_n}$ is not a profitable one-shot deviating strategy as $\bar \sigma$ is a very weak SPE. In other words, for $\varrho =  \out{\SB{\bar\sigma}{hg_n}}_{v_n}$ and $\varrho' = \out{\SB{\sigma'_i}{g_n}, \SB{\sigma_{-i}}{hg_n}}_{v_n}$, we have  
\begin{equation}\label{lastDeviation}
\g_i(hg_n\varrho) \leq \g_i(hg_n\varrho'). 
\end{equation}

\begin{figure}
\begin{center}
\begin{tikzpicture}[initial text=,auto, node distance=1cm, shorten >=1pt, scale=0.1] 

\node                                             (fictif)                             {};
\node[state, scale=0.5]                           (0)     [below=2cm of fictif]        {$v$};
\node                                            (0bbis)  [left=0.05cm of 0]            {};
\node[state, scale=0.5]                           (1)     [below=1cm of 0bbis]         {$v_1$};
\node[scale=0.6]                                  (1bis)  [below=2cm of 1]             {$h\rho$};
\node[fill=black, state, scale=0.1]               (3)     [below right=0.7cm of 1]     {};
\node                                            (1bbis)  [right=0.2cm of 3]           {};
\node[fill=black, state, scale=0.1]                           (4)     [below=1cm of 1bbis]         {};
\node[scale=0.6]                                  (2bis)  [below=2cm of 4]             {};
\node[fill=black, state, scale=0.1]               (5)     [below right=0.7cm of 4]     {};
\node                                          (fictif3)  [below right=0.1cm of 5]  {};
\node                                          (fictif4)  [below right=0.5cm of fictif3]  {};
\node                                            (3bis)  [below=2cm of 4]             {};
\node[state, scale=0.5]                           (7)     [below right=0.1cm of fictif4]     {$v_n$};
\node                                            (8)     [below=2.5cm of 7]         {};

\node  (fictif1)  at (-30, -65)           {};
\node [scale=0.6] (fictif2)  at (35, -65)          {$(\SB{\G}{h}, v)$};
\node (9)  at (34, -92)          {};
\node[fill=black, state, scale=0.1]               (10)     [below=0.5cm of 7]     {};
\node                                            (10bis)   [right=0.3cm of 10]           {};
\node  [scale=0.6]             (11)     [below=1.7cm of 10bis]     {$h\rho'$};
\node  (fictif5)  at (5, -110)           {};
\node [scale=0.6] (fictif6)  at (55, -110)          {$(\SB{\G}{h_n}, v_n)$};

\path[-]  (fictif)   edge      [out=-30,in=120]         node[scale=0.6]          {$h$}     (0)
 
          (0)   edge      [out=-100,in=100]         node[scale=0.6]          {$g_1$}     (1)

                edge                                                                     (fictif1)

                edge                                                                     (fictif2)

          (1)   edge      [->, dashed]             node[left, scale=0.6]    {}        (3)
				
				edge                               node[left, scale=0.6]    {}        (1bis)

          (4)   edge      [->, dashed]             node[left, scale=0.6]    {}        (5)

                edge                               node[left, scale=0.6]    {}        (3bis)

          (3)   edge      [out=-10,in=100]         node[right, scale=0.6]   {}        (4)

          (7)   edge             node[right, scale=0.6]   {}         (8)
 
                edge                                                                  (fictif5)

                edge                                                                  (fictif6)

	   	(fictif3)   edge      [dashed]                 (fictif4)

	      (10)   edge     [out=-35,in=100]         node[left, scale=0.6]    {}        (11);

\end{tikzpicture}
\end{center}
\caption{A strategy $\sigma'_i$ with a minimum number $n$ of deviation steps.}
\label{fig:weak-veryweak}
\end{figure}

Notice that $n \geq 2$. Indeed, if $n = 1$, then $\rho = g_1\varrho$, $\rho' = g_1\varrho'$, and $\g_i(h\rho) \leq \g_i(h\rho')$ by (\ref{lastDeviation}). Therefore $\sigma'_i$ is not a profitable deviation in $(\SB{\G}{h}, v)$, in contradiction with its definition (\ref{finitelyDeviating}). We can thus construct a strategy $\tau'_i$ from $\sigma'_i$ such that these two strategies are the same except in the subgame $(\SB{\G}{hg_n}, v_n)$ where $\SB{\tau'_i}{g_n}$ and $\SB{\sigma_i}{hg_n}$ coincide. In other words $\tau'_i$ has $n-1$ deviation steps from $\out{\SB{\bar \sigma}{h}}_v$, that are exactly the \devstep{g_kv_k}s, $1 \leq k \leq n-1$, of $\sigma'_i$. Moreover, in the subgame $(\SB{\G}{h}, v)$, we have $\out{\tau'_i, \SB{\sigma_{-i}}{h}}_v = g_n\varrho$, and 
$$\g_i(hg_n\varrho) \leq \g_i(hg_n\varrho') < \g_i(h\rho)$$
by (\ref{finitelyDeviating}), (\ref{lastDeviation}), and $g_n\varrho' = \rho'$. It follows that $\tau'_i$ is a finitely deviating strategy that is profitable for player~$i$ in $(\SB{\G}{h}, v)$, with less deviation steps than $\sigma'_i$, a contradiction. This completes the proof of the first statement of Proposition~\ref{prop:weak-veryweak}.
For the second statement, it is enough to consider Example~\ref{ex:contrex}.
\end{proof}

Under the next hypotheses on the game or the costs, the equivalence between SPE, weak SPE, and very weak SPE holds. The first case, when the cost functions are continuous, is a classical result in game theory, see for instance  \cite{Fudenberg91}; the second case appears as a part of the proof of Kuhn's theorem \cite{kuhn53}.

\begin{proposition} \label{prop:upper}
Let $(\G, v_0)$ be an initialized game, and $\bar \sigma$ be a strategy profile. 
\begin{itemize}
\item If all cost functions $\g_i$ are continuous, or even upper-semicontinuous\footnote{In games where the players receive a payoff that they want to maximize, the hypothesis of upper-semicontinuity has to be replaced by lower-semicontinuity.}, then $\bar \sigma$ is an SPE iff $\bar \sigma$ is a weak SPE iff $\bar \sigma$ is a very weak SPE.
\item If $\G$ is a finite tree\footnote{In a finite tree game, the plays are finite sequences of vertices ending in a leaf and their cost is associated with 
the ending leaf. An example of such a game is depicted in Figure~\ref{TwoPlayerQuantitativeGame}.}, then $\bar \sigma$ is an SPE iff $\bar \sigma$ is a weak SPE iff $\bar \sigma$ is a very weak SPE. 
\end{itemize}
\end{proposition}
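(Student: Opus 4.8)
The plan is to first reduce both items to a single implication. By Definition~\ref{def:weakSPE} every SPE is a weak SPE, and every weak SPE is a very weak SPE since one-shot deviating strategies are in particular finitely deviating. Moreover, Proposition~\ref{prop:weak-veryweak} already gives the equivalence between weak SPE and very weak SPE in an arbitrary initialized game. Hence, in each of the two items, the only implication that remains to be proved is that a weak SPE is an SPE.

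For the first item (upper-semicontinuous, in particular continuous, cost functions), I would argue by contradiction. Suppose $\bar\sigma$ is a weak SPE but not an SPE. Then there is a history $hv \in \Hist(v_0)$, a player $i$, and a (possibly infinitely deviating) profitable deviation $\sigma'_i$ in $(\SB{\G}{h}, v)$, i.e. $\g_i(h\rho') < \g_i(h\rho)$, where $\rho = \out{\SB{\bar\sigma}{h}}_v$ and $\rho' = \out{\sigma'_i, \SB{\sigma_{-i}}{h}}_v$. The idea is to approximate $\sigma'_i$ by finitely deviating strategies whose outcomes converge to $\rho'$. For each $m$, let $\tau_m$ coincide with $\sigma'_i$ on every player~$i$ history that is a proper prefix of the length-$m$ prefix of $\rho'$, and coincide with $\SB{\sigma_i}{h}$ on every other history. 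Then all deviation steps of $\tau_m$ occur among these finitely many prefixes, so $\tau_m$ is finitely deviating; and its outcome $\rho'_m = \out{\tau_m, \SB{\sigma_{-i}}{h}}_v$ shares with $\rho'$ a common prefix of length at least $m$, so $\rho'_m \to \rho'$ in the product topology. Since $\SB{\bar\sigma}{h}$ is a weak NE, no $\tau_m$ is profitable, hence $\g_i(h\rho'_m) \geq \g_i(h\rho)$ for every $m$. Passing to the limit superior and invoking upper-semicontinuity of $\g_i$ at $h\rho'$ (together with continuity of $\rho \mapsto h\rho$) gives
\[
\g_i(h\rho') \;\geq\; \limsup_{m \to \infty} \g_i(h\rho'_m) \;\geq\; \g_i(h\rho),
\]
which contradicts $\g_i(h\rho') < \g_i(h\rho)$. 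Thus $\bar\sigma$ is an SPE.

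For the second item (finite tree), the argument is essentially immediate. In a finite tree game every play is a finite branch, hence of length bounded by the depth $d$ of the tree. Consequently any player~$i$ strategy $\sigma'_i$ can have at most $d$ deviation steps from any strategy profile in any subgame, so it is automatically finitely deviating. Therefore the class of all deviations coincides with the class of finitely deviating deviations, so being a weak NE in every subgame is literally the same as being a NE in every subgame. Hence weak SPE and SPE coincide, and combining this with Proposition~\ref{prop:weak-veryweak} yields the equivalence of all three notions.

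The main obstacle is the construction in the first item: one must define the approximants $\tau_m$ so that each is genuinely finitely deviating \emph{and} its outcome $\rho'_m$ converges to $\rho'$, which forces the ``revert to $\SB{\sigma_i}{h}$ after the first $m$ positions'' recipe rather than a naive truncation. The other delicate point is the direction in which semicontinuity is used: upper-semicontinuity bounds $\limsup_{m \to \infty} \g_i(h\rho'_m)$ \emph{from above} by $\g_i(h\rho')$, while the weak-NE inequalities bound each term \emph{from below} by the constant $\g_i(h\rho)$; it is precisely the chaining of these two opposite bounds that produces the contradiction. As noted in the footnote, for games where players maximize a payoff the roles of $\limsup/\liminf$ and of upper/lower-semicontinuity are exchanged, and the same argument applies verbatim.
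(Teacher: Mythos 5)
Your proof is correct, and for the first item it is essentially the paper's own argument: the paper likewise approximates an arbitrary (possibly infinitely deviating) deviation $\sigma'_i$ by finitely deviating strategies, applies the weak-NE inequality to each approximant, and concludes via upper-semicontinuity of $\g_i$. The differences are cosmetic: the paper argues directly (showing every deviation is non-profitable) rather than by contradiction, and its approximants $\tau^n_i$ revert to $\sigma_i$ after the $n$-th \emph{deviation step} of $\sigma'_i$ (having first disposed of the finitely deviating case separately), whereas your $\tau_m$ revert after the first $m$ \emph{positions} of the outcome $\rho'$; both recipes make the approximants finitely deviating with outcomes converging to $\rho'$, and yours has the mild advantage of treating finitely and infinitely deviating $\sigma'_i$ uniformly. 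For the second item you actually do more than the paper: its proof explicitly says ``we only prove the first statement'' and attributes the finite-tree case to Kuhn's backward-induction argument, while you supply the short, correct reason — in a finite tree every play has length bounded by the depth, so every strategy is automatically finitely deviating, hence NE and weak NE coincide in every subgame, and weak SPE equals SPE; combined with Proposition~\ref{prop:weak-veryweak} this gives the three-way equivalence.
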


\begin{proof}
We only prove the first statement for cost functions $\g_i$ that are upper-semicontinuous. 
Let $\bar \sigma$ be a strategy profile in an initialized game $(\G, v_0)$. By Proposition~\ref{prop:weak-veryweak}, 
it remains to prove that if $\bar \sigma$ is a weak SPE, then it is an SPE, i.e., for each subgame $(\SB{\G}{h}, v)$, 
the strategy profile $\SB{\bar\sigma}{h}$ is a NE. Let $\sigma'_i$ be a player $i$ strategy in $(\SB{\G}{h}, v)$. 
If $\sigma'_i$ is finitely deviating, then it is not a profitable deviation for player~$i$ w.r.t. $\SB{\bar\sigma}{h}$ 
by hypothesis. Therefore, suppose that 
$$\rho' = \out{\sigma'_i,\SB{\sigma_{-i}}{h}}_v = g_1g_2 \ldots g_n \ldots$$ 
such that $\sigma'_i$ has a \devstep{h_n} from $\SB{\bar \sigma}{h}$ for all $n \geq 1$, where $h_0 = \epsilon$ and $h_n =  h_{n-1}g_n$. For each $n$, we define a finitely deviating strategy $\tau^n_i$ such that its deviation steps are the first $n$ deviation steps of $\sigma'_i$, that is, $\tau^n_i$ and $\sigma'_i$ are equal except in the subgame $(\SB{\G}{h_n}, \First(g_{n+1}))$ where $\SB{\tau^n_i}{h_n} = \SB{\sigma_i}{hh_n}$. By hypothesis, $\tau_i^n$ is not a profitable deviation, and thus for $\rho_n =  \out{\tau^n_i,\SB{\sigma_{-i}}{h}}_v$ we have
\begin{eqnarray} \label{eq:upper}
\g_i(h \ccdot \out{\SB{\bar\sigma}{h}}_v) \leq \g_i(h\rho_n).
\end{eqnarray}
As $\lim_{n \rightarrow \infty} h\rho_n =  h\rho'$ and $\g_i$ is upper-semicontinuous, 
we get $\limsup_{n \rightarrow \infty} \g_i(h\rho_n) \leq \g_i(h\rho')$. Therefore by (\ref{eq:upper}), $\g_i(h \ccdot \out{\SB{\bar\sigma}{h}}_v)  \leq \g_i(h\rho')$ showing that $\sigma'_i$ is not a profitable deviation for player~$i$ w.r.t. $\SB{\bar\sigma}{h}$.
\end{proof}

Recall that discounted sum games and quantitative reachability games are continuous. Thus for these games, the three notions of SPE, weak SPE and very weak SPE, are equivalent. 

\begin{corollary} \label{cor:SPEweakSPE}
Let $(\G, v_0)$ be an initialized quantitative reachability game, and $\bar \sigma$ be a strategy profile. Then $\bar \sigma$ is an SPE iff $\bar \sigma$ is a weak SPE iff $\bar \sigma$ is a very weak SPE.
\end{corollary}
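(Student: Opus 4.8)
The plan is to derive this directly from the first statement of Proposition~\ref{prop:upper}, whose hypothesis only requires that every cost function be upper-semicontinuous. Since upper-semicontinuity is implied by continuity, it suffices to check that the reachability cost functions $\g_i$ are continuous, after which the three equivalences are immediate.

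First I would verify continuity of $\g_i$ by fixing a sequence of plays $(\rho_n)_{n\in\IN}$ with $\lim_{n\to\infty}\rho_n = \rho$ and distinguishing two cases according to whether $\g_i(\rho)$ is finite. If $\g_i(\rho) = m < +\infty$, then $\rho_m \in \Target_i$ while $\rho_k \notin \Target_i$ for $k < m$; by the definition of convergence the prefix $\rho_{\leq m}$ is a prefix of all but finitely many $\rho_n$, so for those $n$ the first $m+1$ vertices coincide and hence $\g_i(\rho_n) = m$. Thus $\g_i(\rho_n) \to m = \g_i(\rho)$. If instead $\g_i(\rho) = +\infty$, then no vertex of $\rho$ lies in $\Target_i$; for every $N$, all but finitely many $\rho_n$ share the prefix $\rho_{\leq N}$ and therefore satisfy $\g_i(\rho_n) > N$, which gives $\g_i(\rho_n) \to +\infty = \g_i(\rho)$. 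Hence $\g_i$ is continuous, and \emph{a fortiori} upper-semicontinuous.

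Alternatively, one can sidestep the $+\infty$ case by passing to the order-isomorphic cost $\g'_i(\rho) = 1 - \frac{1}{\g_i(\rho)+1}$ (with $\g'_i(\rho)=1$ when $\g_i(\rho)=+\infty$) recalled in the preliminaries: it is continuous, and since it is a strictly increasing reparametrisation of $\g_i$ it induces exactly the same profitable deviations, hence the same NE, weak NE and very weak NE in every subgame, and therefore the same SPE, weak SPE and very weak SPE. Either route establishes upper-semicontinuity of the cost functions governing the equilibrium notions.

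Finally I would invoke Proposition~\ref{prop:upper}: its first statement applies because each $\g_i$ is upper-semicontinuous, yielding that $\bar \sigma$ is an SPE iff it is a weak SPE iff it is a very weak SPE, which is exactly Corollary~\ref{cor:SPEweakSPE}. There is essentially no obstacle here, the statement being a straightforward specialisation; the only point requiring a little care is the treatment of plays that never reach their target, i.e. the $+\infty$ value, in the continuity verification — and this is precisely the situation where the transformed cost $\g'_i$ is the more convenient object to reason about.
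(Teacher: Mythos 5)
Your proposal is correct and follows exactly the paper's route: the paper derives Corollary~\ref{cor:SPEweakSPE} by remarking that quantitative reachability cost functions are continuous (hence upper-semicontinuous) and then invoking the first statement of Proposition~\ref{prop:upper}. Your explicit case analysis for the continuity check (including the $+\infty$ case via extended-real limits) and the alternative argument through the monotone reparametrisation $\g'_i$ merely spell out details the paper leaves implicit; both are sound.
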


On the opposite, the initialized game of Figure~\ref{fig:gameNoSPE} has a weak SPE but no SPE. Its cost function $\g_2$ is not upper-semicontinuous as $\lim_{n \rightarrow \infty} (v_0v_1)^nv_3^\omega = (v_0v_1)^\omega$ and $\lim_{n \rightarrow \infty} \g_2((v_0v_1)^nv_3^\omega)$ $=  1 > 0 = \g_2((v_0v_1)^\omega)$.

\section{Folk Theorems}\label{section:FolkTheorem}

\subsection{Folk Theorem for Weak SPEs}

In this section, we characterize in the form of a Folk Theorem the set of all outcomes of weak SPEs. Our approach is inspired\footnote{Our approach is however different.} by work~\cite{Flesch10} where a Folk Theorem is given for the set of outcomes of SPEs in games with cost functions that are upper-semicontinuous and have finite range. In this aim we define a nonincreasing sequence of sets of plays that initially contain all the plays, and then loose, step by step, some plays that for sure are not outcomes of a weak SPE, until finally reaching a fixpoint.

Let $(\G,v_0)$ be a game. For an ordinal $\alpha$ and a history $hv \in \Hist(v_0)$, let us consider the set $\PotPlay_\alpha(hv)= \{ \rho \mid \rho$ is  a \emph{potential} outcome of a weak NE in $(\SB{\G}{h},v)$ at step $\alpha \}$. This set is defined by induction on $\alpha$ as follows:

\begin{definition} \label{def:Palpha}
Let $(\G,v_0)$ be a quantitative game. The set $\PotPlay_\alpha(hv)$ is defined as follows for each ordinal $\alpha$ and history $hv \in \Hist(v_0)$:
\begin{itemize}
\item For $\alpha = 0$, 
\begin{eqnarray} \label{eq:P0}
\PotPlay_\alpha(hv) = \{ \rho \mid \rho \mbox{ is a play in } (\SB{\G}{h},v) \}.
\end{eqnarray}

\item For a successor ordinal $\alpha + 1$,
\begin{eqnarray} \label{eq:Palpha}
\PotPlay_{\alpha + 1}(hv) = \PotPlay_{\alpha}(hv) \setminus \ErasePlay_{\alpha}(hv)
\end{eqnarray}
such that $\rho \in \ErasePlay_{\alpha}(hv)$ (see Figure~\ref{fig:Ealpha}) iff 
\begin{itemize}
\item there exists a history $h'$, $hv \leq h' < h\rho$, and $\Last(h') \in V_i$ for some $i$,
\item there exists a vertex $v'$, $h'v' \not < h\rho$, 
\item such that $\forall \rho' \in \PotPlay_{\alpha}(h'v')$: $\g_i(h\rho) > \g_i(h'\rho')$. 
\end{itemize}

\item For a limit ordinal $\alpha$: 
\begin{eqnarray} \label{eq:Pbeta}
\PotPlay_\alpha(hv) = \bigcap_{\beta < \alpha} \PotPlay_\beta(hv).
\end{eqnarray}
\end{itemize}
\end{definition}

\begin{figure}[h!]
\begin{center}
\begin{tikzpicture}[initial text=,auto, node distance=1cm, shorten >=1pt, scale=0.8] 

\node[state, scale=0.5]               (0)                          {$v_0$};
\node[state, scale=0.5]               (1)    [below=2cm of 0]      {$v$};
\node[state, scale=0.5]               (2)    [below=1cm of 1]      {};
\node[state, scale=0.5]               (3)    [below right=of 2]    {$v'$};

\node[scale=0.6]                                        [right=0.5mm of 2]    {$\in V_i$};

\node[scale=0.6] (fictif) at (-1.5,-10)  {$\rho \in \ErasePlay_\alpha(hv)$};
\node[scale=0.6] (fictifbis) at (2.5,-10)  {$\forall \rho'$};
\node (fictif1) at (-5,-10)  {};
\node[scale=0.6] (fictif2) at (7,-10)  {$~~~~~~(\SB{\G}{h}, v)$};
\node (fictif1bis) at (0,-10)  {};
\node[scale=0.6] (fictif2bis) at (5,-10)  {$~~~~\PotPlay_\alpha(h'v')$};

\node (4) at (4,0.4) {};
\node (5) at (4,-5.3) {};

\path[-] (0)  edge      [out=-60,in=120]        node [scale=0.6]         {$h$}              (1)

          (1)  edge     [out=-80,in=100, thick]        node[midway, scale=0.6]  {}      (2)
			   edge                                                              (fictif1)
			   edge                                                              (fictif2)

          (2)  edge     [->, dashed]                node[left, scale=0.6]    {}      (3)

          (3)   edge    [out=-80,in=100, thick]                                         (fictifbis)
				edge                                                             (fictif1bis)
		   		edge                                                             (fictif2bis)

          (2)   edge    [out=-80,in=100, thick]                                         (fictif)

          (4)   edge    [<->]                    node [scale=0.6]    {$h'$}      (5);

\end{tikzpicture}
\end{center}
\caption{$\rho \in \ErasePlay_{\alpha}(hv)$.}
\label{fig:Ealpha}
\end{figure}

\noindent
Notice that an element $\rho$ of $\PotPlay_\alpha(hv)$ is a play in $(\SB{\G}{h},v)$ (and not in $(\G,v_0)$). Therefore it starts with vertex $v$, and $h \rho$ is a play in $(\G, v_0)$. For $\alpha + 1$ being a successor ordinal, play $\rho \in \ErasePlay_{\alpha}(hv)$ is erased from $\PotPlay_{\alpha}(hv)$ because for all $\rho' \in \PotPlay_\alpha(h'v')$, player~$i$ pays a lower cost $\g_i(h'\rho') < \g_i(h\rho)$, which means that $\rho$ is no longer a potentiel outcome of a weak NE in $(\SB{\G}{h},v)$.

The sequence $(\PotPlay_\alpha(hv))_\alpha$ is nonincreasing by definition, and reaches a fixpoint in the following sense.

\begin{proposition} \label{prop:fixpoint}
There exists an ordinal $\alpha_\ast$ such that $\PotPlay_{\alpha_\ast}(hv) = \PotPlay_{\alpha_\ast+1}(hv)$ for all histories $hv \in \Hist(v_0)$.
\end{proposition}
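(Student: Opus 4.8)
The plan is to recast the whole indexed family $(\PotPlay_\alpha(hv))_{hv \in \Hist(v_0)}$ as a single nonincreasing transfinite chain in a complete lattice and then to bound its length by a cardinality argument. Concretely, I would regard the step-$\alpha$ data as one element $P_\alpha$ of the product lattice $L = \prod_{hv \in \Hist(v_0)} \mathcal{P}(V^\omega)$ ordered by componentwise inclusion, and observe that Definition~\ref{def:Palpha} defines a single operator $\Phi$ on $L$ with $P_{\alpha+1} = \Phi(P_\alpha)$ at successors and $P_\alpha = \bigcap_{\beta<\alpha} P_\beta$ at limits. Since $\ErasePlay_\alpha(hv) \subseteq \PotPlay_\alpha(hv)$, the operator $\Phi$ only removes plays, so $(P_\alpha)_\alpha$ is nonincreasing, as already noted just before the statement.

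The first point to get right --- and the step I expect to be the main obstacle --- is that stabilisation must be \emph{simultaneous} for all histories by a single ordinal $\alpha_\ast$, whereas the components are genuinely coupled. One cannot simply fix a history $hv$, argue that the decreasing chain $(\PotPlay_\alpha(hv))_\alpha$ of subsets of $V^\omega$ stabilises, and take a supremum over $\Hist(v_0)$: the erasing set $\ErasePlay_\alpha(hv)$ is defined in terms of the sets $\PotPlay_\alpha(h'v')$ at \emph{other} histories, so a component that has momentarily stopped shrinking can start losing plays again once a neighbouring component shrinks (a play $\rho$ becomes erasable precisely when the witnesses $\rho'$ that protected it disappear from some $\PotPlay_\alpha(h'v')$). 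Hence per-history stabilisation need not persist, and the argument has to be carried out globally.

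To do this I would track the cumulative set of removed pairs $R_\alpha = \{(hv,\rho) \mid hv \in \Hist(v_0),\ \rho \in \PotPlay_0(hv) \setminus \PotPlay_\alpha(hv)\}$, a subset of $U = \{(hv,\rho) \mid hv \in \Hist(v_0),\ \rho \in \PotPlay_0(hv)\}$. Because $\Hist(v_0)$ is countable and each $\PotPlay_0(hv) \subseteq V^\omega$ has cardinality at most $2^{\aleph_0}$, we have $|U| \leq 2^{\aleph_0}$. Since $(P_\alpha)_\alpha$ is nonincreasing, $(R_\alpha)_\alpha$ is nondecreasing (at limits $R_\alpha = \bigcup_{\beta<\alpha} R_\beta$), and the key bookkeeping identity is that, as $\PotPlay_\alpha(hv) = \PotPlay_0(hv) \setminus \{\rho \mid (hv,\rho) \in R_\alpha\}$, the equality $R_{\alpha+1} = R_\alpha$ holds \emph{if and only if} $\PotPlay_{\alpha+1}(hv) = \PotPlay_\alpha(hv)$ for every history $hv$ at once. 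It then suffices to invoke the elementary fact that a nondecreasing transfinite sequence of subsets of a set of cardinality $\kappa$ is eventually constant --- its strictly increasing part has order type at most $\kappa^+$ --- to obtain an ordinal $\alpha_\ast$ (bounded by $(2^{\aleph_0})^+$) with $R_{\alpha_\ast} = R_{\alpha_\ast+1}$, which is exactly the claimed $\PotPlay_{\alpha_\ast}(hv) = \PotPlay_{\alpha_\ast+1}(hv)$ for all $hv \in \Hist(v_0)$. As a side remark, $\Phi$ is in fact monotone (shrinking the sets only makes the universally quantified erasing condition easier to satisfy), so one could alternatively appeal to the general theory of fixpoints of monotone operators on complete lattices; but the explicit count via $(R_\alpha)_\alpha$ is self-contained and yields the simultaneity across histories for free.
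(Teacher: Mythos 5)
Your proof is correct, and it takes a genuinely different route from the paper's. The paper argues \emph{history by history}: it fixes $hv$, claims that as soon as $\PotPlay_{\alpha}(hv) = \PotPlay_{\alpha+1}(hv)$ holds for that single history the sequence $(\PotPlay_{\beta}(hv))_{\beta}$ stays constant for all $\beta \geq \alpha$, concludes that each component stabilises by an ordinal bounded by the cardinality of its set of plays, and then takes $\alpha_\ast = |V^\omega|$ uniformly. You instead count removals \emph{globally}, over all histories at once, via the cumulative set $R_\alpha$, and extract a single ordinal $\alpha_\ast < (2^{\aleph_0})^+$ at which no component shrinks. The contrast is instructive, because the paper's intermediate claim is exactly the step you single out as the obstacle, and it is asserted without justification; indeed it can fail: since $\ErasePlay_\alpha(hv)$ quantifies universally over the sets $\PotPlay_\alpha(h'v')$ at deviation histories, a momentarily stable component can shrink again one step after a protecting play disappears from some $\PotPlay_\alpha(h'v')$, and a small game with a chain of deviation points (three consecutive decision vertices $a \to c \to e$, each with an alternative sink, and suitably chosen sink costs) yields $\PotPlay_1(a) = \PotPlay_2(a)$ but $\PotPlay_2(a) \neq \PotPlay_3(a)$. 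What does persist is \emph{simultaneous} stability --- if no component changes at step $\alpha$, the erasing sets, which are determined by the current family, do not change either --- and that simultaneous one-step stability is precisely what Proposition~\ref{prop:fixpoint} asserts and what the sequel uses (e.g.\ the existence of the witness $\rho'$ in the proof of Lemma~\ref{NotEmptyToSPE}); your bookkeeping identity for $R_\alpha$ delivers it directly. Your cardinal accounting is also the careful one: a strictly decreasing chain of subsets of a set of cardinality $\kappa$ can have any order type below $\kappa^+$, so the correct uniform bound is $(2^{\aleph_0})^+$ rather than an ordinal of cardinality $2^{\aleph_0}$ as in the paper. Two trivial remarks only: Definition~\ref{def:Palpha} does not literally guarantee $\ErasePlay_\alpha(hv) \subseteq \PotPlay_\alpha(hv)$, but you do not need that inclusion, since $\PotPlay_{\alpha+1}(hv) = \PotPlay_\alpha(hv) \setminus \ErasePlay_\alpha(hv)$ is nonincreasing regardless; and the strictly increasing part of $(R_\alpha)_\alpha$ has order type strictly less than $(2^{\aleph_0})^+$, not merely ``at most''. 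In short, your global argument establishes the proposition rigorously where the paper's per-history shortcut has a gap.
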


\begin{proof}
Let us fix a history $hv \in \Hist(v_0)$. The sequence $(\PotPlay_{\alpha}(hv))_\alpha$ reaches a fixpoint as soon as there exists $\alpha$ such that $\PotPlay_{\alpha}(hv) = \PotPlay_{\alpha+1}(hv)$. Indeed it follows that $\PotPlay_{\alpha +1}(hv) = \PotPlay_{\alpha+2}(hv)$ and then $\PotPlay_{\alpha}(hv) = \PotPlay_{\beta}(hv)$ for all $\beta > \alpha$. As the sequence $(\PotPlay_{\alpha}(hv))_\alpha$ is nonincreasing, this happens at the latest with $\alpha$ being equal to the cardinality of $\PotPlay_{0}(hv)$. Therefore with $\alpha_\ast = |V^\omega|$ being an ordinal greater than or equal to the cardinality of the set of all plays of $\G$, we get $\PotPlay_{\alpha_\ast}(hv) = \PotPlay_{\alpha_\ast+1}(hv)$ for all $hv \in \Hist(v_0)$. 
\end{proof}

\noindent
In the sequel, $\alpha_\ast$ always refers to the ordinal mentioned in Proposition~\ref{prop:fixpoint}. 

Our Folk Theorem for weak SPEs is the next one.

\begin{theorem} \label{thm:FolkThmWeakSPE}
Let $(\G,v_0)$ be a quantitative game.
There exists a weak SPE in $(\G,v_0)$ with outcome $\rho$ iff $\PotPlay_{\alpha_\ast}(hv) \neq \emptyset$ for all~$hv \in \Hist(v_0)$, and $\rho \in \PotPlay_{\alpha_\ast}(v_0)$.  
\end{theorem}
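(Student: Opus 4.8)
The plan is to prove the two directions of the equivalence separately, and in each direction to exploit the inductive structure of the sets $\PotPlay_\alpha(hv)$ together with Corollary/Proposition~\ref{prop:weak-veryweak} reducing weak SPE to very weak SPE (i.e.\ resistance against one-shot deviations only).

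\medskip
\noindent\textbf{The easy direction ($\Rightarrow$).} Suppose $\bar\sigma$ is a weak SPE with outcome $\rho = \out{\bar\sigma}_{v_0}$. I would first show, by transfinite induction on $\alpha$, that for \emph{every} history $hv \in \Hist(v_0)$ the play $\out{\SB{\bar\sigma}{h}}_v$ survives in $\PotPlay_\alpha(hv)$. The base case $\alpha = 0$ and the limit case are immediate. For the successor step, I argue by contraposition: if $\out{\SB{\bar\sigma}{h}}_v \in \ErasePlay_\alpha(hv)$, then there is a history $h'$ with $hv \le h' < h\rho$, $\Last(h') \in V_i$, and a vertex $v'$ with $h'v' \not< h\rho$ such that $\g_i(h\out{\SB{\bar\sigma}{h}}_v) > \g_i(h'\rho')$ for all $\rho' \in \PotPlay_\alpha(h'v')$. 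By the induction hypothesis applied to the history $h'v'$ (more precisely to the subgame at $h'$), the play $\out{\SB{\bar\sigma}{h'}}_{\Last(h')}$ lies in $\PotPlay_\alpha(h'v'')$ for its own continuation vertex $v''$; combining this with player~$i$ switching at $\Last(h')$ to go to $v'$ and then following $\SB{\bar\sigma}{h'v'}$ yields a one-shot deviation in the subgame $(\SB{\G}{h'_{\mathrm{pre}}},\Last(h'))$ that strictly decreases player~$i$'s cost, contradicting that $\SB{\bar\sigma}{h'_{\mathrm{pre}}}$ is a very weak NE. Taking $\alpha = \alpha_\ast$ and $h = \epsilon$ gives $\rho \in \PotPlay_{\alpha_\ast}(v_0)$; taking arbitrary $hv$ gives nonemptiness of every $\PotPlay_{\alpha_\ast}(hv)$.

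\medskip
\noindent\textbf{The hard direction ($\Leftarrow$).} This is where the real work lies: assuming $\PotPlay_{\alpha_\ast}(hv) \ne \emptyset$ for all $hv$, I must \emph{construct} a weak SPE whose outcome is a prescribed $\rho \in \PotPlay_{\alpha_\ast}(v_0)$. The key structural fact I would extract from the fixpoint property is a \emph{local} characterization: since $\PotPlay_{\alpha_\ast} = \PotPlay_{\alpha_\ast+1}$, no play in $\PotPlay_{\alpha_\ast}(hv)$ gets erased, which means for every surviving play $\pi \in \PotPlay_{\alpha_\ast}(hv)$, every intermediate history $h'$ along it with $\Last(h') \in V_i$, and every alternative vertex $v'$ leaving $\pi$, there \emph{exists} $\rho' \in \PotPlay_{\alpha_\ast}(h'v')$ with $\g_i(h'\rho') \ge \g_i(h\pi)$ (the negation of the erasing condition). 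The strategy profile is then defined by fixing, for each history $hv$, a chosen ``reference'' surviving play, and having players follow it; whenever a player~$i$ deviates by one step at some $h'$ to a vertex $v'$, the profile switches to following a surviving play $\rho'\in\PotPlay_{\alpha_\ast}(h'v')$ that witnesses $\g_i(h'\rho') \ge \g_i(\text{old outcome})$, thereby punishing the deviator. I expect the main obstacle to be \textbf{coherently defining these strategies so that the single profile $\bar\sigma$ simultaneously induces, in every subgame, an outcome lying in $\PotPlay_{\alpha_\ast}$, and so that each one-shot deviation is met by a response keeping the deviator's cost non-decreasing}; this requires a careful simultaneous/recursive definition of $\bar\sigma(hv)$ indexed by histories, together with an invariant (proved by induction on history length) stating that $\out{\SB{\bar\sigma}{h}}_v \in \PotPlay_{\alpha_\ast}(hv)$ for every reachable $hv$.

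\medskip
\noindent Finally I would verify that the constructed $\bar\sigma$ is a very weak SPE: in each subgame, a one-shot deviation of player~$i$ at the initial vertex leads, by construction, to an outcome $\rho'$ chosen precisely so that $\g_i(h'\rho') \ge \g_i(\out{\SB{\bar\sigma}{h}}_v)$, so the deviation is not profitable. By Proposition~\ref{prop:weak-veryweak} this makes $\bar\sigma$ a weak SPE, and by the invariant its outcome from $v_0$ is the prescribed $\rho$, completing the proof. The subtle points to handle with care are the bookkeeping when a play leaves the reference play (ensuring the witness play $\rho'$ is itself used consistently in all \emph{its} subgames) and checking that the fixpoint indeed yields witnesses with the non-strict inequality $\ge$ rather than only ruling out the strict-improvement condition.
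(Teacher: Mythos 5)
Your proposal is correct and follows essentially the same route as the paper's proof (Lemmas~\ref{SPEToNotEmpty} and~\ref{NotEmptyToSPE}): the forward direction by transfinite induction showing $\out{\SB{\bar\sigma}{h}}_{v} \in \PotPlay_{\alpha}(hv)$ with a contradiction via a single-deviation-step profitable strategy, and the backward direction by negating the erasing condition at the fixpoint to obtain witness plays $\rho'$ with $\g_i(h'\rho') \geq \g_i(h\ccdot\out{\SB{\bar\sigma}{h}}_{v})$, assembling them into a very weak SPE via exactly the history-indexed bookkeeping (the paper's labeling $\gamma$) you identify as the main obstacle, and concluding with Proposition~\ref{prop:weak-veryweak}. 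The only blemish is notational: in the forward direction the induction hypothesis should be applied to the deviation target, i.e. $\out{\SB{\bar\sigma}{h'}}_{v'} \in \PotPlay_{\alpha}(h'v')$, rather than to the on-path continuation vertex $v''$ as written.
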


Before proving this theorem, we illustrate it with an example.

\begin{example} \label{ex:weakSPE}
Consider the example of Figure~\ref{fig:gameNoSPE}. Clearly, as $\PotPlay_0(hv_2)$ only contains the play $v_2^\omega$, then $\PotPlay_\alpha(hv_2) = \{v_2^\omega\}$ for all $\alpha$. Similarly $\PotPlay_\alpha(hv_3) = \{v_3^\omega\}$ for all $\alpha$. Let us detail the computation of $\PotPlay_\alpha(hv_0)$ and $\PotPlay_\alpha(hv_1)$.
\begin{itemize}
\item $\alpha = 0$.
For history $hv_0$, we have $\rho = (v_0v_1)^\omega \in \ErasePlay_0(hv_0)$, since $\g_1(h\rho) > \g_1(hv_0\rho')$ where $\rho' = v_2^\omega$ is the unique play of $\PotPlay_0(hv_0v_2)$. Similarly, for all $n \geq 1$, we have $\rho = (v_0v_1)^nv_0v_2^\omega \in \ErasePlay_0(hv_0)$, since $\g_2(h\rho) > \g_2(hv_0v_1\rho')$ where $\rho' = v_3^\omega$ is the unique play of $\PotPlay_0(hv_0v_1v_3)$. Thus $\ErasePlay_0(hv_0) = \{(v_0v_1)^\omega\} \cup (v_0v_1)^+v_0v_2^\omega$ and 
$$\PotPlay_1(hv_0) = \{v_0v_2^\omega\} \cup (v_0v_1)^+v_3^\omega.$$
For history $hv_1$, with the same kind of computations, we get $\ErasePlay_0(hv_1) = \{(v_1v_0)^\omega\} \cup (v_1v_0)^+v_2^\omega$ and 
$$\PotPlay_1(hv_1) =  v_1(v_0v_1)^\ast v_3^\omega.$$
\item $\alpha = 1$.
For history $hv_0$, we have $\rho = v_0v_2^\omega \in \ErasePlay_1(hv_0)$. Indeed $\g_1(h\rho) > \g_1(hv_0\rho')$ for all $\rho' \in \PotPlay_1(hv_0v_1) =  v_1(v_0v_1)^\ast v_3^\omega$. Notice that at the previous step, $\rho = v_0v_2^\omega \not\in \ErasePlay_0(hv_0)$. Indeed $\PotPlay_1(hv_0v_1) \subsetneq  \PotPlay_0(hv_0v_1)$, and $\g_1(h\rho) \leq \g_1(hv_0\rho')$ for $\rho' = (v_1v_0)^\omega \in \PotPlay_0(hv_0v_1)$.\footnote{This shows that as the sequence $(\PotPlay_\alpha(hv))_\alpha$ is nonincreasing for each $hv$, a play that is not removed from some $\PotPlay_\alpha(hv)$ can be removed later from some $\PotPlay_\beta(hv)$ with $\beta > \alpha$.} Play $v_0v_2^\omega$ is the only one that is removed from $\PotPlay_1(hv_0)$, and no play can be removed from $\PotPlay_1(hv_1)$. Therefore:
$$\PotPlay_2(hv_0) = (v_0v_1)^+v_3^\omega, \quad \PotPlay_2(hv_1) =  v_1(v_0v_1)^\ast v_3^\omega.$$
\item $\alpha = 2$. 
One checks that $\PotPlay_3(hv_0)  = \PotPlay_2(hv_0)$, and $\PotPlay_3(hv_1)  = \PotPlay_2(hv_1)$. Hence the fixpoint is reached with $\alpha_\ast = 2$, with $\PotPlay_{\alpha_\ast}(hv_0) = (v_0v_1)^+ v_3^\omega$, $\PotPlay_{\alpha_\ast}(hv_1) = v_1(v_0v_1)^\ast v_3^\omega$, $\PotPlay_{\alpha_\ast}(hv_2) = \{v_2^\omega\}$, and $\PotPlay_{\alpha_\ast}(hv_3) = \{v_3^\omega\}$. Therefore, the set of outcomes of weak SPEs in this game is equal to $(v_0v_1)^+ v_3^\omega$. The weak SPE depicted in Figure~\ref{fig:gameNoSPE} has outcome $v_0v_1v_3^\omega$.
\end{itemize}
\end{example}

The proof of Theorem~\ref{thm:FolkThmWeakSPE} follows from Lemmas~\ref{SPEToNotEmpty} and~\ref{NotEmptyToSPE}. 

\begin{lemma}{\label{SPEToNotEmpty}}
If $(\G, v_0)$ has a weak SPE $\bar\sigma$, then $\PotPlay_{\alpha_\ast}(hv) \neq \emptyset$ for all~$hv \in \Hist(v_0)$, and $\out{\bar\sigma}_{v_0} \in \PotPlay_{\alpha_\ast}(v_0)$.
\end{lemma}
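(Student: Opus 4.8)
The plan is to prove the contrapositive-flavored claim directly: assuming $\bar\sigma$ is a weak SPE, I want to show by transfinite induction on $\alpha$ that for every history $hv \in \Hist(v_0)$, the outcome $\out{\SB{\bar\sigma}{h}}_v$ of the subgame profile stays inside $\PotPlay_\alpha(hv)$. Since the sequence is nonincreasing and stabilizes at $\alpha_\ast$ (Proposition~\ref{prop:fixpoint}), the limit case $\alpha = \alpha_\ast$ then yields that $\out{\SB{\bar\sigma}{h}}_v \in \PotPlay_{\alpha_\ast}(hv)$, so in particular this set is nonempty for every $hv$, and taking $h = \epsilon$ gives $\out{\bar\sigma}_{v_0} \in \PotPlay_{\alpha_\ast}(v_0)$, which is exactly the conclusion.

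The base case $\alpha = 0$ is immediate, since $\PotPlay_0(hv)$ contains every play of $(\SB{\G}{h},v)$ and $\out{\SB{\bar\sigma}{h}}_v$ is one such play. The limit case is also routine: if $\out{\SB{\bar\sigma}{h}}_v \in \PotPlay_\beta(hv)$ for all $\beta < \alpha$, then by definition~(\ref{eq:Pbeta}) it lies in the intersection $\PotPlay_\alpha(hv)$. The heart of the argument is the successor case, and this is where I expect the main obstacle. Suppose toward a contradiction that $\rho := \out{\SB{\bar\sigma}{h}}_v$ lies in $\PotPlay_\alpha(hv)$ (by the induction hypothesis) but is erased, i.e. $\rho \in \ErasePlay_\alpha(hv)$. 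Unfolding the definition of $\ErasePlay_\alpha$, there exist a history $h'$ with $hv \leq h' < h\rho$ and $\Last(h') \in V_i$, and a vertex $v'$ with $h'v' \not< h\rho$, such that $\g_i(h\rho) > \g_i(h'\rho')$ for \emph{all} $\rho' \in \PotPlay_\alpha(h'v')$.

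The key step is to use this erasing witness to manufacture a one-shot (hence finitely deviating) profitable deviation for player~$i$, contradicting that $\SB{\bar\sigma}{h}$ is a weak NE. The natural candidate is the subgame $(\SB{\G}{h'}, v')$, which sits below $h'$ along $h\rho$; note that because $\Last(h')\in V_i$ and $h'v'\not<h\rho$, the profile $\SB{\bar\sigma}{h}$ directs player~$i$ away from $v'$ at the relevant point, so switching to $v'$ is a genuine deviation step. Here I would invoke the induction hypothesis \emph{at $h'v'$}: it guarantees that the honest subgame outcome $\out{\SB{\bar\sigma}{h'}}_{v'}$ belongs to $\PotPlay_\alpha(h'v')$, so it is a legal choice of $\rho'$ in the erasing condition, giving $\g_i(h\rho) > \g_i(h' \cdot \out{\SB{\bar\sigma}{h'}}_{v'})$. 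I would then define a one-shot deviating strategy $\sigma'_i$ that, starting from the subgame rooted at $h'$ (viewed inside $(\SB{\G}{h},v)$), deviates once to $v'$ and thereafter follows $\SB{\sigma_i}{h'}$, so that its outcome is precisely $h' v' \cdot (\text{tail of } \out{\SB{\bar\sigma}{h'}}_{v'})$ prefixed appropriately, realizing cost $\g_i(h'\cdot\out{\SB{\bar\sigma}{h'}}_{v'})$. Since this is strictly below $\g_i(h\rho)$, player~$i$ has a finitely deviating profitable deviation in the subgame $(\SB{\G}{h_0},\First(h'))$ for the appropriate prefix $h_0$, contradicting the weak NE property of the corresponding restricted profile. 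The delicate bookkeeping — keeping the subgame indices, the prefixes $h$, $h'$, and the concatenations straight, and confirming that the constructed deviation really is one-shot and really has the claimed outcome — is the part that will require the most care, but it is essentially the same prefix/outcome manipulation already exercised in the proof of Proposition~\ref{prop:weak-veryweak}.
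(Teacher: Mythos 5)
Your proposal is correct and takes essentially the same route as the paper: the same transfinite induction showing $\out{\SB{\bar\sigma}{h}}_v \in \PotPlay_\alpha(hv)$ for all $hv \in \Hist(v_0)$, with trivial base and limit cases, and a successor case that applies the induction hypothesis at $h'v'$ to make $\out{\SB{\bar\sigma}{h'}}_{v'}$ a legal witness $\rho'$ and then converts the erasing condition into a profitable single-deviation-step strategy contradicting the weak SPE property (the paper phrases this deviation as a finitely deviating strategy with its unique deviation step at $g$, where $h' = hg$, inside $(\SB{\G}{h},v)$, while you phrase it as a one-shot deviation in the subgame rooted at the end of $h'$ — interchangeable formulations). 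The only slip is notational: that subgame is rooted at $\Last(h')$, not $\First(h')$, which does not affect the argument.
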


\begin{proof}
Let us show, by induction on $\alpha$, that $\out{\SB{\bar\sigma}{h}}_{v} \in \PotPlay_{\alpha}(hv)$ for all~$hv \in \Hist(v_0)$. 

For $\alpha = 0$, we have $\out{\SB{\bar\sigma}{h}}_{v} \in \PotPlay_{\alpha}(hv)$ by definition of $\PotPlay_0(hv)$. 

Let $\alpha + 1$ be a successor ordinal. By induction hypothesis, we have that $\out{\SB{\bar\sigma}{h}}_{v} \in \PotPlay_{\alpha}(hv)$ for all $hv \in  \Hist(v_0)$. Suppose that there exists $hv$ such that $\out{\SB{\bar\sigma}{h}}_{v} \not\in \PotPlay_{\alpha + 1}(hv)$, i.e. $\out{\SB{\bar\sigma}{h}}_{v} \in \ErasePlay_{\alpha}(hv)$. This means that there is a history $h' = hg \in \Hist_i$ for some $i \in \Pi$ with $hv \leq h' < h\rho$, and there exists a vertex $v'$ with $h'v' \not < h\rho$, such that $\forall \rho' \in \PotPlay_{\alpha}(h'v')$, $\g_i(h \ccdot \out{\SB{\bar\sigma}{h}}_{v}) > \g_i(h'\rho')$. In particular, by induction hypothesis
\begin{eqnarray} \label{eq:profith'v'}
\g_i(h \ccdot \out{\SB{\bar\sigma}{h}}_{v}) > \g_i(h' \ccdot \out{\SB{\bar\sigma}{h'}}_{v'}).
\end{eqnarray} 
Let us consider the player $i$ strategy $\sigma'_i$ in  $(\SB{\G}{h},v)$ such that $g \ccdot \out{\SB{\bar\sigma}{h'}}_{v'}$ is consistent with $\sigma'_i$. Then $\sigma'_i$ is a finitely deviating strategy with the (unique) \devstep{g} from $\SB{\bar\sigma}{h}$. This strategy is a profitable deviation for player~$i$ in $(\SB{\G}{h},v)$ by (\ref{eq:profith'v'}), 
a contradiction with $\bar\sigma$ being a weak SPE. 

Let $\alpha$ be a limit ordinal. By induction hypothesis $\out{\SB{\bar\sigma}{h}}_{v} \in \PotPlay_{\beta}(hv), \forall \beta < \alpha$. Therefore $\out{\SB{\bar\sigma}{h}}_{v} \in \PotPlay_{\alpha}(hv) = \bigcap_{\beta < \alpha} \PotPlay_{\beta}(hv)$. 
\end{proof}

\begin{lemma}{\label{NotEmptyToSPE}}
Suppose that $\PotPlay_{\alpha_\ast}(hv) \neq \emptyset$ for all~$hv \in \Hist(v_0)$, and let $\rho \in \PotPlay_{\alpha_\ast}(v_0)$. Then $(\G, v_0)$ has a weak SPE with outcome $\rho$. 
\end{lemma}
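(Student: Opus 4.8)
The plan is to construct a weak SPE explicitly, using the plays that survive in the fixpoint as prescribed behaviour and threatening any player who deviates with a surviving continuation that costs him at least as much. Since weak and very weak SPE coincide (Proposition~\ref{prop:weak-veryweak}), it is enough to produce a strategy profile $\bar\sigma$ with outcome $\rho$ such that $\SB{\bar\sigma}{h}$ resists \emph{one-shot} deviations in every subgame $(\SB{\G}{h},v)$. The engine of the whole argument is the fixpoint equality $\PotPlay_{\alpha_\ast}(hv)=\PotPlay_{\alpha_\ast+1}(hv)$ of Proposition~\ref{prop:fixpoint}, i.e. $\ErasePlay_{\alpha_\ast}(hv)=\emptyset$: no surviving play admits an erasing witness.

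First I would prove a suffix-stability lemma: if $\rho\in\PotPlay_{\alpha_\ast}(hv)$ with first step $\rho_1$, then $\rho_{\geq1}\in\PotPlay_{\alpha_\ast}(hv\rho_1)$, and hence by iteration every suffix of $\rho$ stays in the fixpoint along the matching extended history. The proof just unwinds Definition~\ref{def:Palpha}: an erasing witness $(h'',v'')$ for $\rho_{\geq1}$ at $hv\rho_1$ satisfies $hv\le hv\rho_1\le h''<h\rho$ and makes its cost comparison against $\g_j(h\rho)$ (where $j$ owns $\Last(h'')$), so it is verbatim an erasing witness for $\rho$ at $hv$, contradicting $\ErasePlay_{\alpha_\ast}(hv)=\emptyset$. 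I expect this transfer to be the main obstacle, since one has to check that the player, the branching vertex, and all three quantified clauses of $\ErasePlay$ carry over unchanged.

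Next I would define, by induction on $|hv|$, a reference play $\mu_{hv}\in\PotPlay_{\alpha_\ast}(hv)$ for every $hv\in\Hist(v_0)$. Set $\mu_{v_0}=\rho$. Given $\mu_{hv}$ whose first edge leads to $v'$, let $\mu_{hvv'}=(\mu_{hv})_{\geq1}$, which survives by the stability lemma; for any other successor $w\ne v'$ of $v$, apply $\ErasePlay_{\alpha_\ast}(hv)=\emptyset$ to the pair $(h'=hv,\,w)$ — admissible because $hv<h\mu_{hv}$, $\Last(hv)=v\in V_i$ for $i$ the owner of $v$, and $hvw\not<h\mu_{hv}$ — to obtain a play $\mu_{hvw}\in\PotPlay_{\alpha_\ast}(hvw)$ with $\g_i(h\,\mu_{hv})\le\g_i(hv\,\mu_{hvw})$. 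Then define $\bar\sigma$ by letting $\sigma_i(hv)$ be the first step of $\mu_{hv}$ whenever $v\in V_i$.

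Finally I would check the two conclusions. A direct induction shows $\out{\SB{\bar\sigma}{h}}_v=\mu_{hv}$ for all $hv$ — the profile follows the current reference, which updates to its own suffix — so in particular $\out{\bar\sigma}_{v_0}=\rho$. For very weak perfection, fix a subgame $(\SB{\G}{h},v)$ with $v\in V_i$ and a one-shot deviating strategy $\sigma'_i$ sending $v$ to some $w\ne v'$; its outcome in $(\SB{\G}{h},v)$ is $v\cdot\mu_{hvw}$, so the inequality built into $\mu_{hvw}$ gives $\g_i(h\,\out{\SB{\bar\sigma}{h}}_v)=\g_i(h\,\mu_{hv})\le\g_i(hv\,\mu_{hvw})=\g_i(h\,(v\cdot\mu_{hvw}))$, i.e. the deviation is not profitable. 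Thus $\SB{\bar\sigma}{h}$ is a very weak NE in each subgame, so $\bar\sigma$ is a very weak SPE and, by Proposition~\ref{prop:weak-veryweak}, a weak SPE with outcome $\rho$.
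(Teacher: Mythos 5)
Your construction and verification are, in substance, the paper's own proof: you build the profile from plays surviving the fixpoint, use non-erasure to supply a punishing continuation that costs the deviator at least as much as the play currently followed, check only one-shot deviations, and invoke Proposition~\ref{prop:weak-veryweak}. Indeed, since along a reference play your $\mu_{hv}$ is a suffix of the play chosen earlier, $h\mu_{hv}$ equals that earlier play measured from where it was chosen, so your inequality $\g_i(h\mu_{hv})\le\g_i(hv\,\mu_{hvw})$ is exactly the paper's~(\ref{eq:chosen}). The difference is that you route everything through suffix stability of the fixpoint sets, and that is where there is a genuine gap.

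Your proof of the suffix-stability lemma does not work as stated. Transferring an erasing witness and ``contradicting $\ErasePlay_{\alpha_\ast}(hv)=\emptyset$'' only shows that $\rho_{\geq 1}$ is not erased \emph{at stage $\alpha_\ast$}; but $\rho_{\geq1}\in\PotPlay_{\alpha_\ast}(hv\rho_1)$ means that $\rho_{\geq1}$ survives \emph{every} stage $\alpha\le\alpha_\ast$, and the universal quantifier in the erasure condition ranges over sets $\PotPlay_\alpha(h''v'')$ that change with $\alpha$. The repair is to run your witness transfer (which is stage-uniform and otherwise correct) at each stage: by induction on $\alpha$, if $\rho\in\PotPlay_{\alpha+1}(hv)$ then $\rho_{\geq1}\in\PotPlay_{\alpha+1}(hv\rho_1)$, the contradiction at stage $\alpha$ being with $\rho\notin\ErasePlay_{\alpha}(hv)$ rather than with the fixpoint equation. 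This is precisely the paper's Lemma~\ref{lem:utile}, proved there by induction on $\alpha$ (and stated only after the present lemma, for use in later sections). Alternatively, your one-stage argument could be salvaged by observing that $\ErasePlay_\alpha(hv)$ is nondecreasing in $\alpha$ (the quantifier ranges over shrinking sets), so absence of a witness at $\alpha_\ast$ implies absence at all earlier stages; but that monotonicity observation, and the transfinite induction turning ``never erased'' into membership, are nowhere in your argument and must be made explicit. Note finally that the paper's own proof sidesteps suffix stability entirely: its labeling $\gamma$ remembers the history $hv$ at which the currently-followed play was chosen, and non-erasure is applied at that $hv$ with the deviation point $h'$ lying deep inside the play, so the followed play never needs to be re-certified as a member of $\PotPlay_{\alpha_\ast}$ at later histories.
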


\begin{proof}
We are going to show how to construct a very weak SPE $\bar\sigma$ (and thus a weak SPE by Proposition~\ref{prop:weak-veryweak}) with outcome $\rho$. The construction of $\bar\sigma$ is done step by step thanks to a progressive labeling of the histories $hv \in \Hist(v_0)$. Let us give an intuitive idea of the construction of $\bar\sigma$. Initially, we partially construct $\bar\sigma$ such that it produces an outcome in $(\G,v_0)$ equal to $\rho \in \PotPlay_{\alpha_\ast}(v_0)$; we also label each non-empty prefix of $\rho$ by $\rho$. Then we consider a shortest non-labeled history $h'v'$, and we correctly choose some $\rho' \in \PotPlay_{\alpha_\ast}(h'v')$  (we will see later how). We continue the construction of $\bar\sigma$ such that it produces the outcome $\rho'$ in $(\SB{\G}{h'},v')$, and for each non-empty prefix $g$ of $\rho'$, we label $h'g$ by $\rho'$ (notice that the prefixes of $h'$ have already been labeled by choice of $h'$). And so on. In this way, the labeling is a map $\gamma : \Hist(v_0) \rightarrow \bigcup_{hv} \,\PotPlay_{\alpha_\ast}(hv)$ that allows to recover from $h'g$ the outcome $\rho'$ of $\SB{\bar\sigma}{h'}$ in $(\SB{\G}{h'},v')$ of which $g$ is prefix. Let us now go into the details.

Initially, none of the histories is labeled. We start with history $v_0$ and the given play $\rho \in\PotPlay_{\alpha_\ast}(v_0)$. The strategy profile $\bar\sigma$ is partially defined such that $\out{\bar\sigma}_{v_0} = \rho$, that is, if $\rho = \rho_0 \rho_1 \ldots$, then $\sigma_i(\rho_{\leq n}) = \rho_{n+1}$ for all $\rho_n \in V_i$ and $i \in \Pi$. The non-empty prefixes $h$ of $\rho$ are all labeled with $\gamma(h) = \rho$.

At the following steps, we consider a history $h'v'$ that is not yet labeled, but such that $h'$ has already been labeled. By induction, $\gamma(h') = \out{\SB{\bar\sigma}{h}}_{v}$ such that $hv \leq h'$. Suppose that $\Last(h') \in V_i$, we then choose a play $\rho' \in \PotPlay_{\alpha_\ast}(h'v')$ such that (see Figure~\ref{fig:weakSPE})
\begin{eqnarray} \label{eq:chosen}
\g_i(h \ccdot \out{\SB{\bar\sigma}{h}}_{v}) \leq \g_i(h'\rho').
\end{eqnarray}

\begin{figure}

\begin{center}
\begin{tikzpicture}[initial text=,auto, node distance=1cm, shorten >=1pt, scale=0.8] 

\node[state, scale=0.5]               (0)                            {$v_0$};
\node[state, scale=0.5]               (1)    [below=1.2cm of 0]      {$v$};
\node[state, scale=0.5]               (2)    [below=1.2cm of 1]      {};
\node[state, scale=0.5]               (3)    [below right=of 2]      {$v'$};
\node[state, scale=0.5]               (4)    [below=0.6cm of 2]      {};

\node[scale=0.6]                             [right=0.5mm of 2]      {$\in V_i$};

\node[scale=0.6]    (fictif)       at (-1,-10)   {$\out{\SB{\bar\sigma}{h}}_v$};
\node[scale=0.6]    (fictifbis)    at (3,-10)    {$\rho' = \out{\SB{\bar\sigma}{h'}}_{v'}$};
\node               (fictif1bis)   at (1,-10)    {};
\node[scale=0.6]    (fictif2bis)   at (5,-10)    {$\PotPlay_{\alpha_\ast}(h'v')$};

\node (7) [right=2cm of 0] {};
\node (5) [above=0.1cm of 7]  {};
\node (6) [below=3.7cm of 5] {};

\path[-] (0)  edge     [out=-80,in=100]               node[midway, scale=0.6]  {$h$}      (1)
 		
	     (1)  edge     [out=-80,in=100, thick]        node[midway, scale=0.6]  {$g$}      (2)

         (2)  edge     [->, dashed]                       node[left, scale=0.6]    {}         (3)

         (3)   edge    [out=-80,in=100, thick]                                            (fictifbis) 
		   
		       edge                                                                       (fictif1bis)
		   	
		       edge                                                                       (fictif2bis)

         (2)   edge    [out=-80,in=90, thick]                                             (4)

         (4)   edge    [out=-80,in=100, thick]                                            (fictif)

         (5)  edge     [<->]                            node[scale=0.6]  {$h'$}      (6);

\end{tikzpicture}
\end{center}
\caption{Construction of a very weak SPE $\bar\sigma$.}
\label{fig:weakSPE}
\end{figure}

\noindent
Such a play $\rho'$ exists for the following reasons. By induction, we know that 
$\out{\SB{\bar\sigma}{h}}_{v} \in \PotPlay_{\alpha_\ast}(hv)$. Since $\PotPlay_{\alpha_\ast}(hv) = \PotPlay_{\alpha_\ast + 1}(hv)$ by Proposition~\ref{prop:fixpoint}, we have $\out{\SB{\bar\sigma}{h}}_{v} \not\in \ErasePlay_{\alpha_\ast}(hv)$, 
and we get the existence of $\rho'$ by definition of $\ErasePlay_{\alpha_\ast}(hv)$. We continue to construct 
$\bar\sigma$ such that $\out{\SB{\bar\sigma}{h'}}_{v'} = \rho'$, i.e. if $\rho' = \rho'_0 \rho'_1 \ldots$, 
then $\sigma_i(h'\rho'_{\leq n}) = h'\rho'_{n+1}$ for all $\rho'_n \in V_i$ and $i \in \Pi$. For all non-empty prefixes $g$ of $\rho'$, 
we define $\gamma(h'g) = \rho'$ (notice that the prefixes of $h'$ have already been labeled).  

Let us show that the constructed profile $\bar\sigma$ is a very weak SPE. Consider a history $hv \in \Hist_i$ for some $i \in \Pi$, and a one-shot 
deviating strategy $\sigma'_i$ from $\SB{\bar\sigma}{h}$ in the subgame $(\SB{\G}{h},v)$. 
Let $v'$ be such that $\sigma'_i(v) = v'$. By definition of $\bar\sigma$, 
we have $\gamma(hv) =  \out{\SB{\bar\sigma}{g}}_{u}$ for some history $gu \leq hv$ 
and $h \ccdot \out{\SB{\bar\sigma}{h}}_v = g \ccdot \out{\SB{\bar\sigma}{g}}_{u}$; and we have also $\gamma(hvv') = \out{\SB{\bar\sigma}{hv}}_{v'}$. Moreover $\g_i(g \ccdot \out{\SB{\bar\sigma}{g}}_{u})  \leq \g_i(hv \ccdot \out{\SB{\bar\sigma}{hv}}_{v'})$ by (\ref{eq:chosen}), and $\g_i(hv \ccdot \out{\SB{\bar\sigma}{hv}}_{v'})  = \g_i(h \ccdot \out{\sigma'_i,\SB{\sigma_{-i}}{h}}_{v})$ because $\sigma'_i$ is one-shot deviating. Therefore
$$\g_i(h \ccdot \out{\SB{\bar\sigma}{h}}_{v}) = \g_i(g \ccdot \out{\SB{\bar\sigma}{g}}_{u})  \leq \g_i(hv \ccdot \out{\SB{\bar\sigma}{hv}}_{v'})  = \g_i(h \ccdot \out{\sigma'_i,\SB{\sigma_{-i}}{h}}_{v})$$
which shows that $\SB{\bar\sigma}{h}$ is a very weak NE in $(\SB{\G}{h},v)$. Hence $\bar\sigma$ is a very weak SPE, and thus also a weak SPE.
\end{proof}

The next lemma will be useful in Sections~\ref{section:reach} and~\ref{section:prefixind}. It states that if a play $\rho$ belongs to $\PotPlay_\alpha(hv)$, then each of its suffixes $\rho_1$ also belongs to $\PotPlay_\alpha(hh_1v_1)$ such that $h_1\rho_1 = \rho$ and $v_1 = \First(\rho_1)$.

\begin{lemma} \label{lem:utile} 
Let $\rho \in \PotPlay_\alpha(hv)$. Then for all $h_1\rho_1 = \rho$, we have $\rho_1 \in \PotPlay_\alpha(hh_1v_1)$ with $v_1 = \First(\rho_1)$.
\end{lemma}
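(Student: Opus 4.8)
The plan is to prove the statement by transfinite induction on $\alpha$, uniformly over all histories $hv \in \Hist(v_0)$ and all decompositions $\rho = h_1\rho_1$. Fix such a decomposition, set $v_1 = \First(\rho_1)$, and record the two bookkeeping facts that drive everything: since $\rho$ starts in $v$, the prefix $h_1$ starts in $v$ as well, so $hv \leq hh_1v_1$, and moreover $hh_1\rho_1 = h\rho$. The base case $\alpha = 0$ is immediate: by~(\ref{eq:P0}) the set $\PotPlay_0(hv)$ consists of all plays in $(\SB{\G}{h},v)$, and any suffix $\rho_1$ of such a play is again a genuine play, now in $(\SB{\G}{hh_1},v_1)$, whence $\rho_1 \in \PotPlay_0(hh_1v_1)$. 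The limit case is equally routine: if $\alpha$ is a limit ordinal and $\rho \in \PotPlay_\alpha(hv)$, then $\rho \in \PotPlay_\beta(hv)$ for every $\beta < \alpha$, so the induction hypothesis gives $\rho_1 \in \PotPlay_\beta(hh_1v_1)$ for every $\beta < \alpha$, and intersecting over $\beta$ yields $\rho_1 \in \PotPlay_\alpha(hh_1v_1)$ by~(\ref{eq:Pbeta}).

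The core of the argument will be the successor step $\alpha+1$. Assuming $\rho \in \PotPlay_{\alpha+1}(hv)$, we have $\rho \in \PotPlay_\alpha(hv)$ and $\rho \notin \ErasePlay_\alpha(hv)$; by the induction hypothesis $\rho_1 \in \PotPlay_\alpha(hh_1v_1)$ is already in hand, so it remains only to rule out $\rho_1 \in \ErasePlay_\alpha(hh_1v_1)$. I would argue this by contraposition, showing that an erasing witness for $\rho_1$ from the deeper history $hh_1v_1$ is automatically an erasing witness for $\rho$ from the shallower history $hv$. Concretely, unfolding Definition~\ref{def:Palpha}, membership $\rho_1 \in \ErasePlay_\alpha(hh_1v_1)$ supplies a history $h'$ with $hh_1v_1 \leq h' < hh_1\rho_1$, a player $j$ with $\Last(h') \in V_j$, and a vertex $w$ with $h'w \not< hh_1\rho_1$ such that $\g_j(hh_1\rho_1) > \g_j(h'\rho')$ for every $\rho' \in \PotPlay_\alpha(h'w)$. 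Since $hh_1\rho_1 = h\rho$ and $hv \leq hh_1v_1 \leq h'$, the very same triple $(h',j,w)$ satisfies the clauses of $\ErasePlay_\alpha(hv)$ verbatim: $hv \leq h' < h\rho$, $\Last(h') \in V_j$, $h'w \not< h\rho$, and $\g_j(h\rho) > \g_j(h'\rho')$ for all $\rho' \in \PotPlay_\alpha(h'w)$. This forces $\rho \in \ErasePlay_\alpha(hv)$, contradicting $\rho \notin \ErasePlay_\alpha(hv)$; hence $\rho_1 \notin \ErasePlay_\alpha(hh_1v_1)$ and so $\rho_1 \in \PotPlay_{\alpha+1}(hh_1v_1)$ by~(\ref{eq:Palpha}).

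The single conceptual point making this transfer legitimate — and the only place I expect to require care — is that the operator $\ErasePlay_\alpha(\cdot)$ scrutinizes only deviation points $h'$ lying \emph{at or beyond} the current position, and refers to the sets $\PotPlay_\alpha(h'w)$ which depend on $h'w$ alone, not on the outer history through which $h'$ was reached; thus the family of erasing witnesses can only grow as one slides the reference position earlier. I do not anticipate a real obstacle, but the plan does demand scrupulous handling of the prefix relations: the identifications $hh_1\rho_1 = h\rho$ and $hv \leq h'$ are exactly what let the two cost inequalities coincide, and one should note in passing the degenerate case $h_1 = \epsilon$ (where $v_1 = v$, $\rho_1 = \rho$, and the claim is trivial) so that the chain $hv \leq hh_1v_1$ is never vacuous.
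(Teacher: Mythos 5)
Your proposal is correct and follows essentially the same route as the paper's proof: induction on $\alpha$ with the successor step handled by transferring an erasing witness $(h',j,v')$ for $\rho_1$ at history $hh_1v_1$ into one for $\rho$ at history $hv$, using $hh_1\rho_1 = h\rho$ and $hv \leq hh_1v_1$. The paper leaves this transfer as an ``easy check,'' whereas you spell it out; the content is identical.
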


\begin{proof}
The proof is by induction on $\alpha$. The lemma trivially holds for $\alpha = 0$ by definition of $\PotPlay_0(hv)$. 

Let $\alpha + 1$ be a successor ordinal. Let $\rho \in \PotPlay_{\alpha +1}(hv)$ and $h_1\rho_1 = \rho$ with $v_1 = \First(\rho_1)$. As $\PotPlay_{\alpha +1}(hv) \subseteq \PotPlay_\alpha(hv)$, by induction hypothesis, we have $\rho_1 \in \PotPlay_\alpha(hh_1v_1)$. Suppose that $\rho_1 \in \ErasePlay_{\alpha}(hh_1v_1)$ (hence using a history $h'$ and a vertex $v'$ as in Definition~\ref{def:Palpha}). Then one can easily check by definition of $\ErasePlay_{\alpha}(hh_1v_1)$ that $\rho \in \ErasePlay_\alpha(hv)$ (by using the same $h'$ and $v'$), which is a contradiction with $\rho \in \PotPlay_{\alpha + 1}(hv)$. Therefore $\rho_1 \in \PotPlay_\alpha(hh_1v_1) \setminus \ErasePlay_{\alpha}(hh_1v_1) = \PotPlay_{\alpha+1}(hh_1v_1)$. 

Let $\alpha$ be a limit ordinal, and suppose that $\rho \in \PotPlay_\alpha(hv)$. As $\rho \in \PotPlay_\beta(hv)$ for all $\beta < \alpha$, we have $\rho_1 \in \PotPlay_\beta(hh_1v_1)$ by induction hypothesis. It follows that $\rho_1 \in \PotPlay_\alpha(hh_1v_1) = \bigcap_{\beta < \alpha}\PotPlay_\beta(hh_1v_1)$.
\end{proof}

\subsection{Folk Theorem for  SPEs}

In this section, as for weak SPEs, we characterize in the form of a Folk Theorem the set of all outcomes of SPEs. 
Nevertheless, we here need a more complex characterization with adapted sets $\PotPlay'_\alpha(hv)$, and this characterization only holds for cost functions that are upper-semicontinuous.\footnote{For games where players receive a payoff that they want to maximize, a similar Folk Theorem also exists for lower-semicontinuous cost functions.} The main difference appears in the definition of sets $\ErasePlay'_\alpha(hv)$ that will be used in place of $\ErasePlay_\alpha(hv)$. Indeed we will see that the set $\PotPlay_\alpha(h'v')$ of Figure~\ref{fig:Ealpha} has to be replaced by a more complex set $\ID_\alpha^{H,i}(h'v')$.

Let $(\G,v_0)$ be a game. For an ordinal $\alpha$ and a history $hv \in \Hist(v_0)$, as in the previous section, we consider the set $\PotPlay'_\alpha(hv) = \{ \rho \mid \rho$ is  a potential outcome of a NE\footnote{instead of a weak NE} in $(\SB{\G}{h},v)$ at step $\alpha \}$. In order to define these sets $\PotPlay'_\alpha(hv)$, we need to introduce new definitions. In Definition~\ref{def:devstep}, we have introduced the notion of deviation step of a strategy from a given strategy profile. We here propose another concept of deviation step in relation with two plays (see Figure~\ref{fig:devstepbis}). 

\begin{definition}
Let $h'v' \in \Hist(v_0)$ and $\rho' \in (\SB{\G}{h'},v')$. Let $h_1u_1, h_2u_2 \in \Hist(v_0)$ with $u_1, u_2 \in V$, and $\rho_1$ in $\PotPlay'_\alpha(h_1u_1)$. We say that $\rho'$ has a \emph{\devstep{h_2u_2}} from~$\rho_1$ if $h'v' \leq h_1u_1 < h_2u_2 < h' \rho'$ and $\prefix{h'\rho'}{h_1\rho_1} = h_2$. 
\end{definition}

\begin{figure}[h!]
\begin{center}
\begin{tikzpicture}[initial text=,auto, node distance=1cm, shorten >=1pt, scale=0.8] 

\node[state, scale=0.5]               (1)    [below=2cm of 0]      {$v_0$};
\node[state, scale=0.5]               (2)    [below=1cm of 1]      {$v'$};
\node[state, scale=0.5]               (5)    [below=0.8cm of 2]      {$u_1$};
\node[fill=black, state, scale=0.15]  (4)    [below=0.6cm of 5]    {};
\node[state, scale=0.5]               (3)    [below right=0.6cm of 4]    {$u_2$};

\node[scale=0.5] (fictif1) at (2,-2.8) {};
\node[scale=0.5] (fictif2) at (2,-6.3) {};

\node[scale=0.5] (fictif3) at (3,-2.8) {};
\node[scale=0.5] (fictif4) at (3,-7.75) {};

\node[scale=0.6] (fictif) at (-2,-12)  {$h_1\rho_1$};
\node[scale=0.6] (fictifbis) at (2,-12)  {$h'\rho'$};

\path[-]  (1)  edge             node[midway, scale=0.6]  {$h'$}      (2)

          (2)   edge                                                 (5)

          (5)   edge                                                 (4)

          (4)   edge    [out=-90,in=100]                      (fictif)

                edge    [->, dashed]                                 (3)

          (3)   edge    [out=-60,in=100]                      (fictifbis) 

          (fictif1)   edge     [<->]            node [scale=0.6]      {$h_1$}           (fictif2)

          (fictif3)   edge     [<->]            node [scale=0.6]      {$h_2$}            (fictif4);

\end{tikzpicture}
\end{center}	
\caption{Deviation step.}
\label{fig:devstepbis}
\end{figure}

Let us denote by $\IH(h'v')$ the set of all histories $h_2u_2$ such that $h'v' < h_2u_2$. Given a player~$i$ and $H \subseteq \IH(hv)$, the next definition introduces the notion of $(H,i)$-decomposition of a play $\rho'$. Such a play has a finite or infinite number of deviation steps such that for each {\devstep{h_nu_n}, the history $h_nu_n$ belongs to $H$. Figure~\ref{fig:Dalpha} illustrates the second case of Definition~\ref{def:Dalpha}, with the deviation steps highlighted with dashed edges. This definition also introduces the set $\ID_\alpha^{H,i}(h'v')$ composed of plays with a maximal $(H,i)$-decomposition.

\begin{figure}[h!] 	
\begin{center}
\begin{tikzpicture}[initial text=,auto, node distance=1cm, shorten >=1pt, scale=0.1] 

\node[state, scale=0.5]                           (0)                                     {$v_0$};
\node                                             (0bbis)    [right=0.2cm of 0]           {};
\node[state, scale=0.5]                           (1)        [below=1cm of 0bbis]         {$u_1$};
\node                                             (10)       [right=0.2cm of 1]           {};
\node[fill=black, state, scale=0.15]              (3bbis)    [below=1cm of 10]            {};
\node[scale=0.6]                                  (1bis)     [below=2cm of 3bbis]         {$\rho_1 \in \PotPlay'_\alpha(h_1u_1)$};
\node[state, scale=0.5]                           (3)        [below right=0.7cm of 3bbis] {$u_2$};
\node                                             (1bbis)    [right=0.2cm of 3]           {};
\node[fill=black, state, scale=0.15]              (4)        [below=1cm of 1bbis]         {};
\node[scale=0.6]                                  (2bis)     [below=2cm of 4]             {$\rho_2 \in \PotPlay'_\alpha(h_2u_2)$};
\node[state, scale=0.5]                           (5)        [below right=0.7cm of 4]     {$u_3$};
\node                                             (12)       [below right=0.1cm of 5]  {};
\node                                             (12bis)    [below right=0.7cm of 12]    {};
\node[state, scale=0.5]                           (7)        [below right=0.1cm of 12bis] {$u_{m}$};
\node                                             (8)        [right=0.2cm of 7]           {};
\node[fill=black, state, scale=0.15]              (9)        [below=1cm of 8]             {};
\node[scale=0.6]                                  (8bis)     [below=2cm of 9]             {$\rho_{m} \in \PotPlay'_\alpha(h_{m} u_{m})$};
\node[state, scale=0.5]                           (11)       [below right=0.7cm of 9, scale=0.8]     {$u_{m+1}$};
\node                                             (7bis)     [right=0.2cm of 11]           {};
\node[scale=0.6]                                  (8)        [below=3cm of 7bis]          {$\varrho' \in \PotPlay'_\alpha(h_{m+1} u_{m+1})$};

\node                [right=0.1cm of 3bbis, scale=0.6]     {$\in V_i$};
\node                [right=0.1cm of 4, scale=0.6]         {$\in V_i$};
\node                [right=0.1cm of 9, scale=0.6]         {$\in V_i$};

\node  (fictif)   [right=3cm of 0] {};
\node  (fictif1)  [above=0.1cm of fictif] {};
\node  (fictif2)  [below=2.8cm of fictif1] {};
\node  (fictifbis)  [right=4.0cm of 0] {};
\node  (fictif3)  [above=0.1cm of fictifbis] {};
\node  (fictif4)  [below=4.8cm of fictif3] {};
\node  (fictifbisbis)  [right=6.5cm of 0] {};
\node  (fictif5)  [above=0.1cm of fictifbisbis] {};
\node  (fictif6)  [below=8.1cm of fictif5] {};

\path[-]  (0)   edge      [out=-10,in=100]                node[scale=0.6]          {$h'$}        (1)

          (1)   edge      [out=-10,in=100, thick]         node[right, scale=0.6]   {$g_1$}     (3bbis)

          (4)   edge      [->, dashed]                    node[left, scale=0.6]    {}           (5)

                edge      [thick]                         node[left, scale=0.6]    {}           (2bis)

          (3)   edge      [out=-10,in=100, thick]         node[right, scale=0.6]   {$g_2$}      (4)

          (3bbis)	edge  [thick]                         node[left, scale=0.6]    {}           (1bis)
		
		     	edge      [->, dashed]                    node[left, scale=0.6]    {}           (3)

          %


          (12)  edge      [dashed]                        node[left, scale=0.6]    {}           (12bis)

          (7)   edge      [out=-10,in=100, thick]         node[right, scale=0.6]   {$g_{m}$}      (9)

          (9)   edge      [->, dashed]                    node[left, scale=0.6]    {}           (11)

		    	edge      [thick]                         node[left, scale=0.6]    {}           (8bis)
 
          (11)   edge     [out=-35,in=95, thick]          node[right, scale=0.6]   {}           (8)

          (fictif1)   edge      [<->]                     node[scale=0.6]          {$h_2$}        (fictif2)

          (fictif3)   edge      [<->]                     node[scale=0.6]          {$h_3$}        (fictif4)

          (fictif5)   edge      [<->]                     node[scale=0.6]          {$h_{m+1}$}        (fictif6);

\end{tikzpicture}
\end{center}	
\caption{$\rho' = g_1g_2 \ldots g_{m} \varrho'$ with a finite $(H,i)$-decomposition.}
\label{fig:Dalpha}
\end{figure}

%

\begin{definition} \label{def:Dalpha}
Let $h'v' \in \Hist(v_0)$ and $\rho' \in (\SB{\G}{h'},v')$. Let $i$ be a player, and $H \subseteq \IH(h'v')$. 
\begin{itemize}
\item $\rho'$ has an \emph{infinite $(H,i)$-decomposition}  $\rho' = g_1g_2 \ldots g_n \ldots$ if 
\begin{itemize}
\item for all $n \geq 1$, $\Last(g_n) \in V_i$
\item for all $n \geq 2$, $h_nu_n \in H$, and $\rho'$ has a \devstep{h_{n}u_{n}} from some $\rho_{n-1} \in \PotPlay'_\alpha(h_{n-1}u_{n-1})$
\end{itemize}
where $h_1 = h'$,  $h_{n+1} = h_ng_n$, and $u_n = \First(g_n)$ $\forall n \geq 1$. 
\item $\rho'$ has a \emph{finite $(H,i)$-decomposition}  $\rho' = g_1g_2 \ldots g_m \varrho'$ if
\begin{itemize}
\item for all $n$, $1 \leq n \leq m$,  $\Last(g_{n}) \in V_i$
\item for all $n$, $2 \leq n \leq m+1$, $h_nu_n \in H$, and $\rho'$ has a \devstep{h_nu_{n}} from some $\rho_{n-1} \in \PotPlay'_\alpha(h_{n-1}u_{n-1})$
\item $\varrho' \in \PotPlay'_\alpha(h_{m+1}u_{m+1})$
\end{itemize}
where $h_1 = h'$, $h_{n+1} = h_ng_n$, $u_n = \First(g_n)$ $\forall n, 1 \leq n \leq m$, and $u_{m+1} = \First(\varrho')$.
\end{itemize}
\noindent
We denote by $\ID_\alpha^{H,i}(h'v')$ the set of plays $\rho'$ with a \emph{maximal $(H,i)$-decomposition} in the following sense:
\begin{itemize}
\item $\rho'$ has an infinite $(H,i)$-decomposition, or
\item $\rho'$ has a finite $(H,i)$-decomposition $\rho' = g_1g_2 \ldots g_m \varrho'$, and there exists no $\rho''$ with a $(H,i)$-decomposition $\rho'' = g_1g_2 \ldots g_{m'}  \ldots$ or $\rho'' = g_1g_2 \ldots g_{m'}  \varrho''$ such that $\prefix{\rho'}{\rho''} = g_1g_2 \ldots g_{m+1}$.
\end{itemize}
\end{definition}


In the previous definition, $H$ can be chosen finite, infinite, or empty. If $H = \emptyset$, then for all $i \in \Pi$, for each $\rho' \in \ID_\alpha^{H,i}(h'v')$, $\rho'$ has no deviation step and thus $\rho' \in \PotPlay'_\alpha(h'v')$. This means that $\ID_\alpha^{H,i}(h'v') = \PotPlay'_\alpha(h'v')$ in this case. 

We are ready to define the sets $\PotPlay'_\alpha(hv)$ by induction on $\alpha$. The definition is similar to the one of $\PotPlay_\alpha(hv)$, except that when we erase $\rho \in \ErasePlay'_{\alpha}(hv)$ from $\PotPlay'_{\alpha}(hv)$, we use some set $\ID^{H,i}_{\alpha}(h'v')$ in place of $\PotPlay'_{\alpha}(h'v')$:

\begin{definition}
Let $(\G,v_0)$ be a quantitative game. The set $\PotPlay'_\alpha(hv)$ is defined as follows for each ordinal $\alpha$ and history $hv \in \Hist(v_0)$:
\begin{itemize}
\item For $\alpha = 0$, 
\begin{eqnarray} \label{eq:P'0}
\PotPlay'_\alpha(hv) = \{ \rho \mid \rho \mbox{ is a play in } (\SB{\G}{h},v) \}.
\end{eqnarray}

\item For a successor ordinal $\alpha + 1$,
\begin{eqnarray} \label{eq:P'alpha}
\PotPlay'_{\alpha + 1}(hv) = \PotPlay'_{\alpha}(hv) \setminus \ErasePlay'_{\alpha}(hv)
\end{eqnarray}
such that $\rho \in \ErasePlay'_{\alpha}(hv)$ iff 
\begin{itemize}
\item there exists a history $h'$, $hv \leq h' < h\rho$, and $\Last(h') \in V_i$ for some $i$,
\item there exists a vertex $v'$, $h'v' \not < h\rho$,
\item there exists $H \subseteq \IH(h'v')$,
\item such that $\forall \rho' \in \ID^{H,i}_{\alpha}(h'v')$: $\g_i(h\rho) > \g_i(h'\rho')$. 
\end{itemize}

(see Figure~\ref{fig:Ealpha} where $\ErasePlay_{\alpha}(hv)$ is replaced by $\ErasePlay'_{\alpha}(hv)$, and $\PotPlay'_{\alpha}(h'v')$ is replaced by $\ID^{H,i}_{\alpha}(h'v')$).

\item For a limit ordinal $\alpha$: 
\begin{eqnarray} \label{eq:P'beta}
\PotPlay'_\alpha(hv) = \bigcap_{\beta < \alpha} \PotPlay'_\beta(hv).
\end{eqnarray}
\end{itemize}
\end{definition}

Let us comment the case $\alpha + 1$. When $H= \emptyset$, we have $\ID^{H,i}_{\alpha}(h'v') = \PotPlay'_\alpha(h'v')$. Hence we recover the previous situation of weak SPEs. Using different sets $H \subseteq \IH(h'v')$ and $\ID^{H,i}_{\alpha}(h'v')$ allow to have sets $\ErasePlay'_{\alpha}(hv)$ bigger than $\ErasePlay_{\alpha}(hv)$, and thus more plays removed from $\PotPlay'_\alpha(hv)$ than in $\PotPlay_\alpha(hv)$. This situation will be illustrated in Example~\ref{ex:noSPE} hereafter.

The sequence $(\PotPlay'_\alpha(hv))_\alpha$ is nonincreasing by definition, and reaches a fixpoint in the following sense (the proof is the same as for Proposition~\ref{prop:fixpoint}).

\begin{proposition} \label{prop:fixpointSPE}
There exists an ordinal $\beta_\ast$ such that $\PotPlay'_{\beta_\ast}(hv) = \PotPlay'_{\beta_\ast+1}(hv)$ for all histories $hv \in \Hist(v_0)$.
\end{proposition}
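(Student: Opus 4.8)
The plan is to follow the proof of Proposition~\ref{prop:fixpoint} essentially verbatim, the key observation being that that argument never uses the precise form of the erased sets: it relies only on the fact that the sequence of potential outcomes is nonincreasing, which here again holds purely by construction. In particular, the more involved definition of $\ErasePlay'_{\alpha}(hv)$ through the sets $\ID^{H,i}_{\alpha}(h'v')$ plays no role whatsoever in the fixpoint argument.

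First I would record that, for each fixed history $hv \in \Hist(v_0)$, the sequence $(\PotPlay'_{\alpha}(hv))_{\alpha}$ is nonincreasing. At a successor ordinal we have $\PotPlay'_{\alpha+1}(hv) = \PotPlay'_{\alpha}(hv) \setminus \ErasePlay'_{\alpha}(hv) \subseteq \PotPlay'_{\alpha}(hv)$ by (\ref{eq:P'alpha}), and at a limit ordinal $\PotPlay'_{\alpha}(hv) = \bigcap_{\beta < \alpha} \PotPlay'_{\beta}(hv) \subseteq \PotPlay'_{\beta}(hv)$ for every $\beta < \alpha$ by (\ref{eq:P'beta}). A straightforward transfinite induction then yields $\PotPlay'_{\alpha}(hv) \subseteq \PotPlay'_{\beta}(hv)$ whenever $\beta \leq \alpha$. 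Crucially, this monotonicity of the sequence is structural and does not depend on how $\ErasePlay'_{\alpha}(hv)$ is defined.

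Next I would bound the number of strict decreases. Each $\PotPlay'_{\alpha}(hv)$ is a subset of the fixed set $V^\omega$ of all plays, and, the sequence being nonincreasing, a play that leaves $\PotPlay'_{\alpha}(hv)$ at some step never re-enters it. Hence every strict decrease of $(\PotPlay'_{\alpha}(hv))_{\alpha}$ removes at least one fresh play, so there are at most $|V^\omega|$ ordinals at which this per-history sequence strictly decreases. Since $\Hist(v_0)$ is countable, the set $S$ of all ordinals at which some history strictly decreases has cardinality at most $\aleph_0 \cdot |V^\omega| = |V^\omega|$, and is therefore bounded by an ordinal of cardinality at most $|V^\omega|$. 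Choosing $\beta_\ast$ to be any ordinal strictly above $\sup S$ — for instance $\beta_\ast = |V^\omega|$ as in Proposition~\ref{prop:fixpoint} — no history strictly decreases at the step $\beta_\ast \to \beta_\ast + 1$, which is exactly the assertion that $\PotPlay'_{\beta_\ast}(hv) = \PotPlay'_{\beta_\ast+1}(hv)$ for all $hv \in \Hist(v_0)$.

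The only genuine temptation to avoid is to re-derive the fixpoint from a greatest-fixpoint (Knaster--Tarski) argument for the operator $\PotPlay'_{\alpha} \mapsto \PotPlay'_{\alpha+1}$. This would require that operator to be monotone, but the dependence of $\ID^{H,i}_{\alpha}(h'v')$ on the family $(\PotPlay'_{\alpha})$ — through the existence and the maximality (non-extendability) of $(H,i)$-decompositions — is intricate and not transparently monotone. The point of the argument above is that one does not need monotonicity of the operator at all: the sequence is nonincreasing simply because each successor step subtracts a set and each limit step intersects, and that alone suffices for the cardinality bound. This is the mild subtlety to keep in mind, and it is precisely what lets the proof of Proposition~\ref{prop:fixpoint} transfer unchanged to $\PotPlay'$.
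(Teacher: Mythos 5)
Your proof is correct and takes essentially the paper's approach: the paper proves Proposition~\ref{prop:fixpointSPE} by noting that $(\PotPlay'_\alpha(hv))_\alpha$ is nonincreasing by construction and then reusing verbatim the cardinality argument of Proposition~\ref{prop:fixpoint}, exactly as you do, and indeed the particular definition of $\ErasePlay'_\alpha(hv)$ plays no role. One small caveat: your parenthetical claims that $\sup S$ is bounded by an ordinal of cardinality at most $|V^\omega|$ and that one may take $\beta_\ast = |V^\omega|$ do not follow from the cardinality of $S$ alone (a set of few ordinals can still have a supremum of large cardinality), but this is inessential since your conclusion only uses that the decrease ordinals form a bounded \emph{set}, and the paper's own proof of Proposition~\ref{prop:fixpoint} carries the same imprecision.
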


Our Folk Theorem for SPEs is the next one. The second statement requires to work with upper-semicontinuous cost functions $\g_i$, $i \in \Pi$.

\begin{theorem} \label{thm:FolkSPE}
Let $(\G,v_0)$ be a quantitative game.
\begin{itemize}
\item If there exists an SPE in $(\G,v_0)$ with outcome $\rho$, then $\PotPlay'_{\beta_\ast}(hv) \neq \emptyset$ for all~$hv \in \Hist(v_0)$, and $\rho \in \PotPlay'_{\beta_\ast}(v_0)$.  
\item Suppose that all cost functions $\g_i$ are upper-semicontinuous. If $\PotPlay'_{\beta_\ast}(hv) \neq \emptyset$ for all~$hv \in \Hist(v_0)$, then there exists an SPE in $(\G,v_0)$ with outcome~$\rho$, for all $\rho \in \PotPlay'_{\beta_\ast}(v_0)$.
\end{itemize}
\end{theorem}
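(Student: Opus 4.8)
The plan is to follow the two-lemma structure used for Theorem~\ref{thm:FolkThmWeakSPE}, adapting each direction to the refined sets $\PotPlay'_\alpha$ and $\ID^{H,i}_\alpha$. The first statement plays the role of Lemma~\ref{SPEToNotEmpty} and holds for arbitrary cost functions, while the second plays the role of Lemma~\ref{NotEmptyToSPE} and is where upper-semicontinuity enters, through a reduction to Proposition~\ref{prop:upper}.

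For the first statement (necessary condition), I would prove by transfinite induction on $\alpha$ that for any SPE $\bar\sigma$ one has $\out{\SB{\bar\sigma}{h}}_v \in \PotPlay'_\alpha(hv)$ for every $hv \in \Hist(v_0)$; instantiating at $\alpha = \beta_\ast$ then yields both the nonemptiness of all $\PotPlay'_{\beta_\ast}(hv)$ and $\rho = \out{\bar\sigma}_{v_0} \in \PotPlay'_{\beta_\ast}(v_0)$. The base case is immediate by definition, and the limit case follows at once from~(\ref{eq:P'beta}). For the successor case, assuming toward a contradiction that $\rho := \out{\SB{\bar\sigma}{h}}_v \in \ErasePlay'_\alpha(hv)$, I obtain a history $h'$, a vertex $v'$ with $h'v' \not< h\rho$, and a set $H \subseteq \IH(h'v')$ witnessing erasure, so that $\g_i(h\rho) > \g_i(h'\rho')$ for \emph{every} $\rho' \in \ID^{H,i}_\alpha(h'v')$. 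The core of the argument is to realize one such element by a genuine player~$i$ deviation: player~$i$ first deviates at $h'$ towards $v'$, and thereafter plays so as to produce a maximal $(H,i)$-decomposition, deviating exactly at the histories of $H$ reachable along the current play and following $\SB{\sigma_i}{h}$ between two consecutive deviation steps. Since the other players stick to $\SB{\sigma_{-i}}{h}$, each segment between two deviations coincides with an SPE subgame outcome $\out{\SB{\bar\sigma}{h_n}}_{u_n}$, which lies in $\PotPlay'_\alpha(h_n u_n)$ by the induction hypothesis; hence the produced play belongs to $\ID^{H,i}_\alpha(h'v')$ and is a profitable deviation, contradicting that $\bar\sigma$ is an SPE.

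For the second statement (sufficient condition, under upper-semicontinuity), I would mirror the construction of Lemma~\ref{NotEmptyToSPE}, carried out with the sets $\PotPlay'_{\beta_\ast}$ in place of $\PotPlay_{\alpha_\ast}$. Progressively labelling histories, at each newly reached history $h'v'$ I choose a play $\rho' \in \PotPlay'_{\beta_\ast}(h'v')$ satisfying the inequality~(\ref{eq:chosen}) and set $\out{\SB{\bar\sigma}{h'}}_{v'} = \rho'$. Such a $\rho'$ exists by the fixpoint property (Proposition~\ref{prop:fixpointSPE}): since $\out{\SB{\bar\sigma}{h}}_v \in \PotPlay'_{\beta_\ast}(hv) = \PotPlay'_{\beta_\ast+1}(hv)$, this outcome is not in $\ErasePlay'_{\beta_\ast}(hv)$, and instantiating the erasure condition with $H = \emptyset$ — for which $\ID^{\emptyset,i}_{\beta_\ast}(h'v') = \PotPlay'_{\beta_\ast}(h'v')$ — provides precisely a play $\rho'$ with $\g_i(h \ccdot \out{\SB{\bar\sigma}{h}}_v) \leq \g_i(h'\rho')$. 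As in Lemma~\ref{NotEmptyToSPE}, this guarantees that no one-shot deviation is profitable in any subgame, so $\bar\sigma$ is a very weak SPE and hence, by Proposition~\ref{prop:weak-veryweak}, a weak SPE; by construction every subgame outcome lies in $\PotPlay'_{\beta_\ast}$ and $\out{\bar\sigma}_{v_0} = \rho$. Finally, as all cost functions are upper-semicontinuous, Proposition~\ref{prop:upper} upgrades this weak SPE into a genuine SPE with outcome $\rho$.

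The main obstacle is the successor step of the first statement: one must convert the abstract witness $(h',v',H)$ of erasure into a concrete, well-defined player~$i$ strategy whose outcome provably has a \emph{maximal} $(H,i)$-decomposition in the exact sense of Definition~\ref{def:Dalpha}. The delicate points are verifying maximality (player~$i$ deviates at every reachable history of $H$, and the tail segment admits no further $H$-deviation) and checking that the base outcomes of the decomposition are exactly the SPE subgame outcomes supplied by the induction hypothesis, so that the realized play truly lies in $\ID^{H,i}_\alpha(h'v')$. By contrast, the infinite-deviation analysis one would otherwise need for full SPEs in the second statement is entirely absorbed by the reduction to Proposition~\ref{prop:upper}, which is precisely where upper-semicontinuity is indispensable.
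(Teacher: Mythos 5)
Your proposal follows essentially the same route as the paper's own proof: for the first statement, the same transfinite induction in which the erasure witness $(h',v',H)$ is turned into a deviation strategy $\sigma'_i$ that agrees with $\SB{\sigma_i}{h}$ except at $H \cup \{h'v'\}$, whose outcome has a maximal $(H,i)$-decomposition built from the SPE subgame outcomes (in $\PotPlay'_\alpha$ by induction hypothesis) and hence lies in $\ID^{H,i}_\alpha(h'v')$, yielding a profitable deviation; and for the second statement, the same reduction via Proposition~\ref{prop:upper} to constructing a very weak SPE as in Lemma~\ref{NotEmptyToSPE}, using Proposition~\ref{prop:fixpointSPE} and the instantiation $H=\emptyset$ (so that $\ID^{\emptyset,i}_{\beta_\ast}(h'v') = \PotPlay'_{\beta_\ast}(h'v')$) to supply the required play $\rho'$. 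The delicate points you flag (maximality of the decomposition, identification of its base plays with subgame outcomes) are exactly the ones the paper's proof handles at the corresponding step.
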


\begin{example} \label{ex:noSPE}
Before proving this theorem, we illustrate it with the example of Figure~\ref{fig:gameNoSPE} by showing that this game has no SPE (as stated in~\cite{SV03}). Let us compute the sets $\PotPlay'_{\alpha}(hv)$ and let us show that $\PotPlay'_3(hv_2) = \emptyset$. By the first statement of Theorem~\ref{thm:FolkSPE}, we will get that there is no SPE.

We have to do computations that are more complex than the ones of Example~\ref{ex:weakSPE}, due to the usage of sets $\ID^{H,i}_\alpha(h'v')$, with $H \subseteq \IH(h'v')$, instead of $\PotPlay_\alpha(h'v')$. Clearly, by definition $\PotPlay'_0(hv) = \PotPlay_0(hv)$ for all $hv$, and we have $\PotPlay'_\alpha(hv_2) = \{v_2^\omega\}$ and $\PotPlay'_\alpha(hv_3) = \{v_3^\omega\}$ for all $\alpha$ as in Example~\ref{ex:weakSPE}. 

Let us first illustrate Definition~\ref{def:Dalpha} with the computation of $\ID^{H,2}_0(v_0)$ with $H = \{v_0v_1v_0,$ $(v_0v_1)^2v_0\}$. The play $\rho'_1 = (v_0v_1)^2v_0v_2^\omega$ has a $(H,2)$-decomposition $g_1g_2\varrho'_1$ with $g_1 = g_2 = v_0v_1$ and $\varrho' = v_2^\omega$ (two deviation steps). Indeed $\rho'_1$ has a \devstep{v_0v_1v_0} from $v_0v_1v_3^\omega \in \PotPlay'_\alpha(v_0)$ and a \devstep{(v_0v_1)^2v_0} from $(v_0v_1)^2v_3^\omega \in \PotPlay'_\alpha(v_0v_1v_0)$, $\Last(g_1), \Last(g_2) \in V_2$, and $\varrho'_1 \in  \PotPlay'_\alpha((v_0v_1)^2v_0)$. This play $\rho'_1$ belongs to $\ID^{H,2}_0(v_0)$ because its $(H,2)$-decomposition uses the two possible steps of $H$. On the contrary $\rho'_2 = (v_0v_1)^2v_3^\omega$ has a $(H,2)$-decomposition $g_1\varrho'_2$ with $\varrho'_2 =  v_0v_1v_3^\omega$, and does not belong to $\ID^{H,2}_0(v_0)$ because $\prefix{\rho'_1}{\rho'_2} = g_1g_2$. One can check that $\ID^{H,2}(v_0) = \{ v_0v_2^\omega, 	v_0v_1v_0v_2^\omega \} \cup \{ (v_0v_1)^2\varrho' \mid \varrho' \in (\SB{\G}{(v_0v_1)^2},v_0) \}$. 

We can now detail the computation of $\PotPlay_\alpha'(hv_0)$ and $\PotPlay_\alpha'(hv_1)$.
\begin{itemize}
\item $\alpha = 0$.
For history $hv_0$, we have $\ErasePlay'_0(hv_0) \supseteq \ErasePlay_0(hv_0) = \{(v_0v_1)^\omega\} \cup (v_0v_1)^+v_0v_2^\omega$ because $\ID^{H,i}_0(h'v') = \PotPlay'_0(h'v') = \PotPlay_0(h'v')$ when $H = \emptyset$. In fact, one checks that $\ErasePlay'_0(hv_0) = \ErasePlay_0(hv_0)$, and thus 
$$\PotPlay'_1(hv_0) = \PotPlay_1(hv_0) = \{v_0v_2^\omega\} \cup (v_0v_1)^+v_3^\omega.$$
For instance, if we try to remove $v_0v_2^\omega$ with cost $(1,2)$ from $\PotPlay'_0(hv_0)$, we have to use some $H$ such that $\ID_0^{H,1}(hv_0v_1) \subseteq v_1(v_0v_1)^\ast v_3^\omega$ (with cost $(0,1)$). Such a $H$ does not exist. 
For history $hv_1$, we also have $\ErasePlay'_0(hv_1) = \ErasePlay_0(hv_1)$ and
$$\PotPlay'_1(hv_1) = \PotPlay_1(hv_1) =  v_1(v_0v_1)^\ast v_3^\omega.$$
\item $\alpha = 1$.
Again we have 
$$\PotPlay'_2(hv_0) = \PotPlay_2(hv_0) = (v_0v_1)^+v_3^\omega, \quad \PotPlay'_2(hv_1) = \PotPlay_2(hv_1) =  v_1(v_0v_1)^\ast v_3^\omega.$$
\item $\alpha = 2$. 
A difference appears at this step: $\PotPlay_3(hv_1) = v_1(v_0v_1)^\ast v_3^\omega$ whereas
$$\PotPlay'_3(hv_1) = \emptyset.$$
Indeed $\ErasePlay'_2(hv_1) =  \PotPlay'_2(hv_1)$. Consider for instance $\rho = v_1v_3^\omega \in \PotPlay'_2(hv_1)$, and $H = hv_1v_0(v_1v_0)^+ \subseteq \IH(hv_1v_0)$. Then $\ID^{H,2}_2(hv_1v_0)$ has a unique play $\rho' = (v_0v_1)^\omega$, and $\g_2(h\rho) > \g_2(hv_1\rho')$.
\end{itemize}
\end{example}

\begin{proof}[Proof of Theorem~\ref{thm:FolkSPE}] 
We begin by the first statement. Let $\bar\sigma$ be an SPE. As in the proof of Lemma~\ref{SPEToNotEmpty}, let us show by induction on $\alpha$ that $\out{\SB{\bar\sigma}{h}}_{v} \in \PotPlay'_{\alpha}(hv)$ for all~$hv \in \Hist(v_0)$. 

For $\alpha = 0$, we have $\out{\SB{\bar\sigma}{h}}_{v} \in \PotPlay'_{\alpha}(hv)$ by definition of $\PotPlay'_0(hv)$. 

Let $\alpha + 1$ be a successor ordinal. By induction hypothesis, we have that $\out{\SB{\bar\sigma}{h}}_{v} \in \PotPlay'_{\alpha}(hv)$ for all $hv$. Suppose that $\out{\SB{\bar\sigma}{h}}_{v} \not\in \PotPlay'_{\alpha + 1}(hv)$, i.e. $\out{\SB{\bar\sigma}{h}}_{v} \in \ErasePlay'_{\alpha}(hv)$. This means that there is a history $h' = hg \in \Hist_i$ for some $i$ with $hv \leq h' < h\rho$, there exists a vertex $v'$ with $h'v' \not < h\rho$, and there exists $H \subseteq \IH(h'v')$, such that $\forall \rho' \in \ID_{\alpha}^{H,i}(h'v')$, 
\begin{eqnarray} \label{eq:rho'}
\g_i(h \ccdot \out{\SB{\bar\sigma}{h}}_{v}) > \g_i(h'\rho').
\end{eqnarray}
Let us consider player~$i$ strategy $\sigma'_i$ in the subgame $(\SB{\G}{h},v)$ such that $\sigma'_i$ coincide with $\SB{\sigma_i}{h}$ except that 
\begin{eqnarray} \label{eq:h'v'}
\sigma'_i(h'_1) = v'_1, \mbox{ for all } hh'_1v'_1 \in H \cup \{h'v'\}.
\end{eqnarray}
Let $g \rho^\ast = \out{\sigma'_i,\SB{\sigma_{-i}}{h}}_{v}$. We get that $\rho^\ast$ has a maximal $(H,i)$-decomposition such that each $\rho_{n-1} \in \PotPlay'_{\alpha}(h_{n-1}u_{n-1})$ of Definition~\ref{def:Dalpha} is equal to $\out{\SB{\bar\sigma}{h_{n-1}}}_{u_{n-1}}$ (this play belongs to $\PotPlay'_{\alpha}(h_{n-1}u_{n-1})$ by induction hypothesis). Each deviation step of $\rho^\ast$ in the sense of Definition~\ref{def:Dalpha} corresponds to a deviation step of $\rho^\ast$ in the sense of Definition~\ref{def:devstep}\footnote{History $h'v'$ leads to one additional deviation step of $\rho^\ast$ in the sense of Definition~\ref{def:devstep} (see (\ref{eq:h'v'})).}. Moreover the $(H,i)$-decomposition of $\rho^\ast$ is finite (resp. infinite) iff $\sigma'_i$ is finitely (resp. infinitely) deviating from $\SB{\bar \sigma}{h}$. Thus this play $\rho^\ast$ belongs to  $\ID_{\alpha}^{H,i}(h'v')$, and by (\ref{eq:rho'}) we get $\g_i(h \ccdot \out{\SB{\bar\sigma}{h}}_{v}) > \g_i(h g \rho^\ast)$. Hence $\sigma'_i$ is a profitable deviation for player~$i$ in $(\SB{\G}{h},v)$, a contradiction with $\bar\sigma$ being an SPE. 

Let $\alpha$ be a limit ordinal. By induction hypothesis $\out{\SB{\bar\sigma}{h}}_{v} \in \PotPlay'_{\beta}(hv), \forall \beta < \alpha$. Therefore $\out{\SB{\bar\sigma}{h}}_{v} \in \PotPlay'_{\alpha}(hv) = \bigcap_{\beta < \alpha} \PotPlay'_{\beta}(hv)$. 
 
Let us now turn to the second statement of Theorem~\ref{thm:FolkSPE}. Let $\rho \in \PotPlay'_{\beta_\ast}(v_0)$. By Proposition~\ref{prop:upper}, it is enough to construct a very weak SPE $\bar\sigma$ with outcome $\rho$. The proof is very similar to the one of Lemma~\ref{NotEmptyToSPE}, where the construction of $\bar\sigma$ is done step by step thanks to a labeling~$\gamma$ of the histories. We briefly recall this proof and insist on the differences.
 
Initially, no history is labeled. We start with the play $\rho \in \PotPlay'_{\beta_\ast}(v_0)$, $\bar\sigma$ is partially defined such that $\out{\bar\sigma}_{v_0} = \rho$, and $\gamma(h) = \rho$ for all non-empty prefixes $h$ of~$\rho$.

At the following steps, let $h'v'$ be a history that is not yet labeled, but such that $h'$ has already been labeled. Suppose that $\Last(h') \in V_i$. By induction, $\gamma(h') = \out{\SB{\bar\sigma}{h}}_{v}$ such that $hv \leq h'$, and $\out{\SB{\bar\sigma}{h}}_{v} \in \PotPlay'_{\beta_\ast}(hv)$. Since $\PotPlay'_{\beta_\ast}(hv) = \PotPlay'_{\beta_\ast + 1}(hv)$ by Proposition~\ref{prop:fixpointSPE}, we have $\out{\SB{\bar\sigma}{h}}_{v} \not\in \ErasePlay'_{\beta_\ast}(hv)$. Therefore, with $H = \emptyset$ and $\ID^{H,i}_{\beta_\ast}(h'v') = \PotPlay'_{\beta_\ast}(h'v')$, we know that there exists  a play $\rho' \in \PotPlay'_{\beta_\ast}(h'v')$ such that $\g_i(h \ccdot \out{\SB{\bar\sigma}{h}}_{v}) \leq \g_i(h'\rho')$. Hence, we continue to construct $\bar\sigma$ such that $\out{\SB{\bar\sigma}{h'}}_{v'} = \rho'$, and all non-empty prefixes $g$ of $\rho'$ are labeled by $\gamma(h'g) = \rho'$. And so on. 

The constructed $\bar\sigma$ is a very weak SPE as in the proof of Lemma~\ref{NotEmptyToSPE}.
\end{proof}

The next proposition states that for cost functions that are upper-semicontinuous, sets $\PotPlay'_{\beta_\ast}(hv)$ and $\PotPlay_{\alpha_\ast}(hv)$ are all equal. This no longer the case as soon as one cost function is not upper-semicontinuous as shown by Examples~\ref{ex:weakSPE} and~\ref{ex:noSPE}.

\begin{proposition}
Let $(\G,v_0)$ be a quantitative game such that all its cost functions are upper-semicontinuous. Then for all $hv \in \Hist(v_0)$, $\PotPlay'_{\beta_\ast}(hv) = \PotPlay_{\alpha_\ast}(hv)$.
\end{proposition}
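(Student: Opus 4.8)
The plan is to prove the two inclusions separately, for every history $hv \in \Hist(v_0)$. The inclusion $\PotPlay'_{\beta_\ast}(hv) \subseteq \PotPlay_{\alpha_\ast}(hv)$ is the easy one, and I would get it from the stronger fact that $\PotPlay'_\alpha(hv) \subseteq \PotPlay_\alpha(hv)$ holds for \emph{every} ordinal $\alpha$, proved by transfinite induction. The base and limit cases are immediate, and for a successor it suffices to check $\ErasePlay_\alpha(hv) \subseteq \ErasePlay'_\alpha(hv)$: indeed, taking $H=\emptyset$ in the definition of $\ErasePlay'_\alpha$ gives $\ID^{\emptyset,i}_\alpha(h'v') = \PotPlay'_\alpha(h'v')$, and since by the induction hypothesis $\PotPlay'_\alpha(h'v') \subseteq \PotPlay_\alpha(h'v')$, any witness $(h',v')$ that erases $\rho$ from $\PotPlay_\alpha(hv)$ also erases it from $\PotPlay'_\alpha(hv)$ (the condition holds \emph{a fortiori} on the smaller set). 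Hence $\PotPlay'_{\alpha+1}(hv) = \PotPlay'_\alpha(hv)\setminus\ErasePlay'_\alpha(hv) \subseteq \PotPlay_\alpha(hv)\setminus\ErasePlay_\alpha(hv) = \PotPlay_{\alpha+1}(hv)$. Evaluating at a common ordinal $\gamma \geq \max(\alpha_\ast,\beta_\ast)$, where both nonincreasing sequences are stationary, yields $\PotPlay'_{\beta_\ast}(hv) \subseteq \PotPlay_{\alpha_\ast}(hv)$.

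For the reverse inclusion $\PotPlay_{\alpha_\ast}(hv) \subseteq \PotPlay'_{\beta_\ast}(hv)$ I would pass to the subgame. Fix $hv$, set $\G_0 = (\SB{\G}{h},v)$, and note that its cost functions $\g_i(h\,\cdot\,)$ are again upper-semicontinuous. First I would record a routine \emph{locality lemma}: for each history $\eta$ of $\G_0$ the set $\PotPlay_\alpha$ computed inside $\G_0$ at $\eta$ coincides with the one computed inside $\G$ at the prefixed history $h\eta$, and likewise for $\PotPlay'_\alpha$; this is a transfinite induction in which the deviation data $(h',v')$ and the cost comparisons appearing in the two erasing operators correspond under prefixing by $h$. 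Consequently $\PotPlay_{\alpha_\ast}(hv)$ and $\PotPlay'_{\beta_\ast}(hv)$ are exactly the fixpoint sets of $\G_0$ at its initial vertex $v$. The reverse inclusion then reduces to the following: given $\rho \in \PotPlay_{\alpha_\ast}(hv)$, produce an SPE of $\G_0$ with outcome $\rho$. Once this is done, the first statement of Theorem~\ref{thm:FolkSPE} (which holds for arbitrary cost functions) applied to $\G_0$ gives $\rho \in \PotPlay'_{\beta_\ast}(hv)$, finishing the proof by upper-semicontinuity equivalence.

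To build this SPE I would re-run the construction in the proof of Lemma~\ref{NotEmptyToSPE} inside $\G_0$, but launched from the single play $\rho$ rather than under a global non-emptiness assumption. The construction commits to $\out{\bar\tau}_v = \rho$ and then, for each deviation it must answer, consults some set $\PotPlay_{\alpha_\ast}(h'v')$. The point is that every such target is non-empty: every outcome the construction commits to lies in $\PotPlay_{\alpha_\ast}$ (starting from $\rho$, and using Lemma~\ref{lem:utile} for its suffixes), hence lies outside $\ErasePlay_{\alpha_\ast}$ by the fixpoint property $\PotPlay_{\alpha_\ast}=\PotPlay_{\alpha_\ast+1}$; were some consulted $\PotPlay_{\alpha_\ast}(h'v')$ empty, the erasing clause ``$\forall \rho'\in\PotPlay_{\alpha_\ast}(h'v')$'' would be vacuously satisfied and the committed outcome would have been erased, a contradiction. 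The construction therefore yields a weak SPE $\bar\tau$ of $\G_0$ with $\out{\bar\tau}_v = \rho$, and by Proposition~\ref{prop:upper} this $\bar\tau$ is an SPE of $\G_0$ under upper-semicontinuity, exactly as required.

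The main obstacle is precisely this emptiness bookkeeping: Lemma~\ref{NotEmptyToSPE} is stated under the hypothesis that all the sets $\PotPlay_{\alpha_\ast}$ are non-empty, so I must re-inspect its proof to confirm that, when started from a surviving play $\rho$, it only ever references the non-empty deviation targets reachable from the profile it builds. Conceptually, the whole reverse inclusion says that, under upper-semicontinuity, the additional erasing power gained by allowing a non-empty $H$ (that is, multiply- or infinitely-deviating strategies) evaporates, which is the content of Proposition~\ref{prop:upper}. A self-contained alternative would be a direct transfinite induction proving $\ErasePlay'_\alpha(hv) = \ErasePlay_\alpha(hv)$; there the hard direction would take a play $\rho^0 \in \PotPlay_\alpha(h'v')$ of cost at least $\g_i(h\rho)$ and manufacture a play of the same cost in $\ID^{H,i}_\alpha(h'v')$ by following high-cost potential outcomes through the forced $H$-deviations and, in the infinitely-deviating case, invoking upper-semicontinuity to bound the cost of the limit play from below by the ($\geq \g_i(h\rho)$) costs of its finite truncations, mirroring the $\limsup$ argument used in the proof of Proposition~\ref{prop:upper}.
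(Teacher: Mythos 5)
Your proof is correct and follows essentially the same route as the paper's: the easy inclusion $\PotPlay'_{\beta_\ast}(hv) \subseteq \PotPlay_{\alpha_\ast}(hv)$ by transfinite induction using $H = \emptyset$, and the reverse inclusion by passing to the subgame $(\SB{\G}{h},v)$, producing a weak SPE with outcome $\rho$, upgrading it to an SPE via Proposition~\ref{prop:upper}, and concluding with the first statement of Theorem~\ref{thm:FolkSPE}. The only difference is that the paper obtains the weak SPE by citing Theorem~\ref{thm:FolkThmWeakSPE} outright, leaving its global non-emptiness hypothesis implicit, whereas you re-run the construction of Lemma~\ref{NotEmptyToSPE} and check on the fly that every consulted set $\PotPlay_{\alpha_\ast}(h'v')$ is non-empty (otherwise the committed outcome would be vacuously erased) --- a legitimate tightening of the same argument rather than a different one.
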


\begin{proof}
Let us prove by induction on $\alpha$ that $\PotPlay'_{\alpha}(hv) \subseteq \PotPlay_{\alpha}(hv)$ for all $hv$. These two sets are equal for $\alpha = 0$. Let $\alpha +1$ be a successor ordinal and suppose that $\PotPlay'_{\alpha}(hv) \subseteq \PotPlay_{\alpha}(hv)$. We have $\ErasePlay'_{\alpha}(hv) \supseteq \ErasePlay_{\alpha}(hv)$, and thus $\PotPlay'_{\alpha +1}(hv) \subseteq \PotPlay_{\alpha +1}(hv)$, because $\ID^{H,i}_{\alpha}(h'v') = \PotPlay'_{\alpha}(h'v')$ when $H = \emptyset$. For $\alpha$ being a limit ordinal, we easily have $\PotPlay'_{\alpha}(hv) \subseteq \PotPlay_{\alpha}(hv)$ by induction hypothesis.

Suppose now that $\PotPlay'_{\beta_\ast}(hv) \subsetneq \PotPlay_{\alpha_\ast}(hv)$ for some $hv$. Let $\rho \in \PotPlay_{\alpha_\ast}(hv) \setminus \PotPlay'_{\beta_\ast}(hv)$, and consider the initialized game $(\G',v'_0) = (\SB{\G}{h},v)$. Notice that the sets $\PotPlay_{\alpha}(h'v')$ and $\ErasePlay_{\alpha}(h'v')$ of this game $(\G',v'_0)$ are exactly the sets $\PotPlay_{\alpha}(hh'v')$ and $\ErasePlay_{\alpha}(hh'v')$ of $(\SB{\G}{h},v)$. By Theorem~\ref{thm:FolkThmWeakSPE}, there exists a weak SPE $\bar \sigma$ in $(\G',v'_0)$ with outcome $\rho$. Since the cost functions are upper-semicontinuous, $\sigma$ is also an SPE by Proposition~\ref{prop:upper}. Therefore, $\rho \in \PotPlay'_{\beta_\ast}(hv)$ by Theorem~\ref{thm:FolkSPE}, which is a contradiction.
%
\end{proof}

\section{Quantitative Reachability Games} \label{section:reach}

In this section, we focus on quantitative reachability games. Recall that in this case, the cost of a play for player $i$ is the number of edges to reach his target set of vertices $\Target_i$ (see Definition~\ref{def:reach}). Recall also that for quantitative reachability games, SPEs, weak SPEs, and very weak SPEs, are equivalent notions (see Corollary~\ref{cor:SPEweakSPE}).

It is known that there always exists an SPE in quantitative reachability games~\cite{BBD,Fudenberg83}.

\begin{theorem} \label{thm:ReachSPE}
Each quantitative reachability game $(\G,v_0)$ has an SPE. 
\end{theorem}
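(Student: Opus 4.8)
The plan is to derive this existence result from the Folk Theorem for weak SPEs. By Corollary~\ref{cor:SPEweakSPE}, in a quantitative reachability game the notions of SPE, weak SPE and very weak SPE coincide, so it suffices to produce a weak SPE. By Theorem~\ref{thm:FolkThmWeakSPE}, a weak SPE exists (and then there is one whose outcome is any $\rho \in \PotPlay_{\alpha_\ast}(v_0)$) precisely when $\PotPlay_{\alpha_\ast}(hv) \neq \emptyset$ for every history $hv \in \Hist(v_0)$. Hence the whole task reduces to showing that the fixpoint sets of Definition~\ref{def:Palpha} are all non-empty.

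Before attacking this, I would replace each reachability cost $\g_i$ by the continuous cost $\g'_i(\rho) = 1 - \tfrac{1}{\g_i(\rho)+1}$ introduced earlier. Since $x \mapsto 1 - \tfrac{1}{x+1}$ is strictly increasing on $\IN \cup \{+\infty\}$, every strict inequality $\g_i(h\rho) > \g_i(h'\rho')$ appearing in the definition of $\ErasePlay_\alpha$ is equivalent to $\g'_i(h\rho) > \g'_i(h'\rho')$; thus the sets $\PotPlay_\alpha(hv)$ are literally unchanged whether computed with $\g_i$ or with $\g'_i$, and I may assume the cost functions are continuous. As $V$ is finite, $V^\omega$ is compact, which is what makes the continuous setting convenient.

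I would then prove by transfinite induction on $\alpha$ that every $\PotPlay_\alpha(hv)$ is non-empty and closed (hence compact) in $V^\omega$. The base case is immediate, since $\PotPlay_0(hv)$ is the closed non-empty set of plays from $v$. For a limit ordinal, $\PotPlay_\alpha(hv) = \bigcap_{\beta<\alpha}\PotPlay_\beta(hv)$ is closed, and is non-empty by the finite intersection property: the family is nested, each member is a non-empty closed subset of the compact space $V^\omega$, so the intersection is non-empty. For a successor $\alpha+1$, closedness follows from showing that $\ErasePlay_\alpha(hv)$ is relatively open in $\PotPlay_\alpha(hv)$: if $\rho$ is erased through a witness $(h',v',i)$, then by compactness of $\PotPlay_\alpha(h'v')$ and continuity of $\g'_i$ the value $M = \max_{\rho'\in\PotPlay_\alpha(h'v')}\g'_i(h'\rho')$ is attained, the erasing condition collapses to the single strict inequality $\g'_i(h\rho) > M$, and $\{\rho'' : h'< h\rho'',\ h'v'\not<h\rho'',\ \g'_i(h\rho'')>M\}$ is an open neighbourhood of $\rho$ all of whose points are erased through the same witness.

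The crux, and the step I expect to be the main obstacle, is the non-emptiness at successor ordinals: removing the relatively open set $\ErasePlay_\alpha(hv)$ from a non-empty compact set can a priori leave nothing, so compactness alone is insufficient and one must exhibit a surviving play $\rho^\ast \in \PotPlay_\alpha(hv) \setminus \ErasePlay_\alpha(hv)$. The natural attempt is a greedy best-response construction: at each vertex $u = \Last(hg) \in V_j$ reached along the play, let player $j$ move to a successor $w^\ast$ minimising $\max_{\rho'\in\PotPlay_\alpha(hgw)}\g'_j(hg\rho')$ over the successors $w$ (these maxima exist by the compactness just established), and continue $\rho^\ast$ inside $\PotPlay_\alpha(hgw^\ast)$, using Lemma~\ref{lem:utile} to keep the successive suffixes in the relevant potential sets. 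Because $w^\ast$ minimises the worst achievable cost for $j$, any off-path successor $v'$ satisfies $\g_j(h\rho^\ast) \le \max_{\rho'\in\PotPlay_\alpha(hgv')}\g_j(hg\rho')$, so no one-shot deviation at $u$ is profitable. The delicate point to discharge is the \emph{coherence} of this procedure: a single play $\rho^\ast$ must simultaneously defeat the deviations of \emph{all} players at \emph{all} of its decision points, and the value that is optimal for the current mover must be compatible with the no-deviation requirement of the players who move later. This is exactly where the reachability structure must be exploited, namely that the costs are non-negative integers (equivalently, continuous and bounded) so that the per-vertex values are well-founded and a globally value-optimal play can be assembled; once such a $\rho^\ast \in \PotPlay_{\alpha+1}(hv)$ is secured, the induction closes and the theorem follows.
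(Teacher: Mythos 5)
Your reduction is sound as far as it goes: by Corollary~\ref{cor:SPEweakSPE} and Theorem~\ref{thm:FolkThmWeakSPE} the theorem is equivalent to the non-emptiness of all sets $\PotPlay_{\alpha_\ast}(hv)$; the passage to the continuous costs $\g'_i$ leaves every set of Definition~\ref{def:Palpha} unchanged; and your transfinite induction correctly yields closedness (hence compactness) of each $\PotPlay_\alpha(hv)$, non-emptiness at limit ordinals, and relative openness of $\ErasePlay_\alpha(hv)$. But the proof stops exactly at its crux: non-emptiness at successor ordinals is never established, and this is not a detail --- it is the entire content of the theorem. Your greedy best-response sketch does not close it, for two concrete reasons. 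First, Lemma~\ref{lem:utile} goes only from a play of $\PotPlay_\alpha(hv)$ to its suffixes, never conversely; the play $\rho^\ast$ you assemble from locally optimal pieces is a diagonal limit of plays lying in \emph{different} sets $\PotPlay_\alpha(hgw^\ast)$, so you have not shown $\rho^\ast \in \PotPlay_\alpha(hv)$ at all (closedness only helps for limits of plays already in that one set), let alone $\rho^\ast \notin \ErasePlay_\alpha(hv)$. Second, survival requires $\g_j(h\rho^\ast) \leq \maxg_j$ of the corresponding deviation set for \emph{every} player $j$ at \emph{every} decision point of $\rho^\ast$ simultaneously; since reachability costs are determined by the whole play, later greedy choices made by other movers can postpone or destroy player $j$'s visit to $T_j$ and push $\g_j(h\rho^\ast)$ above the bound attached to an earlier decision point of $j$. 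This is precisely the coherence problem you name but do not discharge, and observing that costs are well-founded non-negative integers does not discharge it: one must argue that the locally optimal values can all be realized by a single play, which is where the whole difficulty lies.

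It is also worth stressing how your route relates to the paper. The paper does not prove Theorem~\ref{thm:ReachSPE}: it quotes it from~\cite{BBD,Fudenberg83}, whose proofs are non-constructive topological arguments (existence of SPEs for continuous cost functions). Moreover, the paper's inference runs in the direction opposite to yours: it uses the cited theorem, via Lemma~\ref{SPEToNotEmpty}, to conclude that the sets $\PotPlay_{\alpha_\ast}(hv)$ are non-empty (Corollary~\ref{cor:existence}), and only then extracts a finite-memory SPE from them. Consequently nothing in the paper can be invoked to fill your gap without circularity, and completing your plan would amount to a new, self-contained proof of the known existence result. That the missing step is genuinely hard is confirmed by the paper itself: for the prefix-independent class of Section~\ref{section:prefixind}, where the very same fixpoint machinery applies, the authors explicitly state that they do not know whether a weak SPE always exists --- that is, non-emptiness of the fixpoint sets is exactly what the machinery does not give for free.
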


As SPEs and weak SPEs coincide in quantitative reachability games, we get the next result by Theorem~\ref{thm:FolkThmWeakSPE}.

\begin{corollary} \label{cor:existence}
Let $(\G, v_0)$ be a quantitative reachability game.
The sets $\PotPlay_{\alpha_\ast}(hv)$ are non-empty, for all $hv \in \Hist(v_0)$, and $\PotPlay_{\alpha_\ast}(v_0)$ is the set of outcomes of SPEs in $(\G,v_0)$.
\end{corollary}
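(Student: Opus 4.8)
The plan is to derive Corollary~\ref{cor:existence} as an immediate consequence of two facts already established in the excerpt: the existence result of Theorem~\ref{thm:ReachSPE} and the Folk Theorem characterization of Theorem~\ref{thm:FolkThmWeakSPE}, bridged by the equivalence of Corollary~\ref{cor:SPEweakSPE}. There is essentially no new mathematical content to prove; the corollary is a repackaging of these results specialized to quantitative reachability games. So the proof will be short, and the only care required is in correctly invoking the ``iff'' direction of the Folk Theorem and verifying that the nonemptiness of \emph{all} the sets $\PotPlay_{\alpha_\ast}(hv)$ follows.

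First I would recall that, by Corollary~\ref{cor:SPEweakSPE}, in a quantitative reachability game the notions of SPE and weak SPE coincide; hence it suffices to reason about weak SPEs and apply the Folk Theorem for weak SPEs. Next I would invoke Theorem~\ref{thm:ReachSPE}, which guarantees that $(\G,v_0)$ has at least one SPE, and therefore (by the coincidence above) at least one weak SPE. By the forward direction of Theorem~\ref{thm:FolkThmWeakSPE}, the existence of a weak SPE immediately yields that $\PotPlay_{\alpha_\ast}(hv) \neq \emptyset$ for \emph{all} histories $hv \in \Hist(v_0)$. This settles the first assertion of the corollary; the key point worth stressing is that the nonemptiness holds for every history, not merely for the initial one, which is exactly what the statement of Theorem~\ref{thm:FolkThmWeakSPE} provides.

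For the second assertion, I would argue the two inclusions by quoting the biconditional of Theorem~\ref{thm:FolkThmWeakSPE}. If $\rho$ is the outcome of an SPE, then it is the outcome of a weak SPE, and the ``only if'' direction gives $\rho \in \PotPlay_{\alpha_\ast}(v_0)$. Conversely, if $\rho \in \PotPlay_{\alpha_\ast}(v_0)$, then since we have already established that all the sets $\PotPlay_{\alpha_\ast}(hv)$ are nonempty, the ``if'' direction of Theorem~\ref{thm:FolkThmWeakSPE} produces a weak SPE with outcome $\rho$, which by Corollary~\ref{cor:SPEweakSPE} is an SPE. Thus $\PotPlay_{\alpha_\ast}(v_0)$ coincides exactly with the set of outcomes of SPEs.

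I do not anticipate any real obstacle here, since every ingredient is already in place; the corollary is a direct instantiation. The one thing to be careful about is logical hygiene: the ``if'' direction of the Folk Theorem requires the global hypothesis that all $\PotPlay_{\alpha_\ast}(hv)$ are nonempty, so the argument must establish that global nonemptiness \emph{first} (via Theorem~\ref{thm:ReachSPE}) before using it to realize an arbitrary $\rho \in \PotPlay_{\alpha_\ast}(v_0)$ as an SPE outcome. Sequencing the two parts of the corollary in this order is what makes the second part go through cleanly.
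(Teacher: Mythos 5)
Your proposal is correct and follows exactly the paper's own (implicit) argument: the paper derives Corollary~\ref{cor:existence} by combining Theorem~\ref{thm:ReachSPE} (existence of an SPE), Corollary~\ref{cor:SPEweakSPE} (SPE and weak SPE coincide for quantitative reachability games), and the biconditional of Theorem~\ref{thm:FolkThmWeakSPE}, precisely as you do. Your extra care in establishing the global nonemptiness of all $\PotPlay_{\alpha_\ast}(hv)$ before invoking the ``if'' direction is exactly the right logical sequencing, even though the paper leaves it unstated.
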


The proof provided for Theorem~\ref{thm:ReachSPE} is non constructive since it relies on topological arguments.
Our main result is that one can algorithmically construct an SPE in $(\G,v_0)$ that is moreover finite-memory, thanks to the sets $\PotPlay_{\alpha_\ast}(hv)$.

\begin{theorem} \label{thm:main}
Each quantitative reachability initialized game $(\G,v_0)$ has a finite-memory SPE. Moreover there is an algorithm to construct such an SPE.
\end{theorem}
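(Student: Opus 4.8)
The plan is to make the Folk Theorem for weak SPEs (Theorem~\ref{thm:FolkThmWeakSPE}, equivalently Corollary~\ref{cor:existence} for reachability games) \emph{effective} in the reachability setting, and then to extract finite memory from the effective description. The central obstacle is that the sets $\PotPlay_{\alpha}(hv)$ are defined over an a priori transfinite chain indexed by ordinals and depend on the full history $h$, so neither termination of the chain at a finite stage nor finiteness of memory is immediate. The key observation that unlocks everything is \emph{prefix-independence up to a shift}: in a quantitative reachability game, once a player~$i$ has already visited $\Target_i$ along $h$, his cost on $h\rho$ no longer depends on $\rho$, and for players who have not yet reached their target, the cost $\g_i(h\rho)$ is $|h| + \g_i^{v}(\rho)$ where $\g_i^{v}$ is the reachability cost computed from $v$ in the subgame. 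Consequently the sets $\PotPlay_{\alpha}(hv)$ depend on $h$ only through the finite information ``which players have already reached their target'' (a subset $I \subseteq \Pi$) together with the current vertex $v$. This collapses the dependence on $h$ to the finite set $V \times 2^{\Pi}$ of \emph{configurations}, so there are only finitely many distinct sets $\PotPlay_\alpha$ to track.

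First I would formalize this collapse: define, for each configuration $(v, I)$, a set $\PotPlay_{\alpha}(v, I)$ and show by induction on $\alpha$ (using Lemma~\ref{lem:utile} and the shift relation on costs above) that $\PotPlay_{\alpha}(hv) = \PotPlay_{\alpha}(v, I_h)$ where $I_h$ records the targets reached along $h$. Since there are finitely many configurations and each $\PotPlay_{\alpha}(v,I)$ is a subset of plays in a subgame, the nonincreasing chain is governed by the erasing operator $\ErasePlay$, which at each successor step removes plays; the real content is to argue that $\PotPlay_\alpha$ stabilizes after \emph{finitely} many steps. Here I would bound the cost range: since each finite cost $\g_i$ is a natural number and the only plays that matter at a given configuration are those where the relevant reachability costs are bounded (a player who can profitably deviate does so within a number of steps bounded by $|V|$ and the current finite costs, because a target either is reachable within $|V|$ steps or is unreachable, giving cost $+\infty$), each erasing step is controlled by a comparison between finitely many bounded-or-infinite integer costs. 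This should let me show the chain stabilizes after a number of steps bounded polynomially or at worst exponentially in $|V|$ and $|\Pi|$, so $\alpha_\ast$ is finite and effectively bounded.

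Next I would establish $\omega$-regularity and MSO-definability of each $\PotPlay_\alpha(v,I)$, proceeding by induction on $\alpha$. The base set $\PotPlay_0(v,I)$ is the set of all plays from $v$, manifestly $\omega$-regular. For the successor step, I must express $\rho \in \ErasePlay_\alpha(hv)$ in MSO over the infinite word $\rho$: the statement quantifies over a position $h'$ along $\rho$ with $\Last(h') \in V_i$, over an alternative successor $v'$ off the play, and asserts that \emph{for all} $\rho' \in \PotPlay_\alpha(h'v')$ the cost inequality $\g_i(h\rho) > \g_i(h'\rho')$ holds. The two costs are reachability costs, which are MSO-expressible (``the first position in a target set''), the configuration update $h \mapsto h'$ is a finite-memory/MSO-definable bookkeeping, and crucially the inner universal quantifier ranges over an already-constructed $\omega$-regular set $\PotPlay_\alpha(v', I')$, so ``$\forall \rho' \in \PotPlay_\alpha(v',I')\colon \ldots$'' is an MSO sentence relativized to an MSO-definable family. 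Because MSO over infinite words is decidable and closed under these operations (Büchi), each $\PotPlay_{\alpha+1}(v,I)$ is again $\omega$-regular and its automaton is effectively computable from that of $\PotPlay_\alpha$. Iterating up to the finite $\alpha_\ast$ yields effective automata for $\PotPlay_{\alpha_\ast}(v,I)$ for every configuration.

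Finally I would extract a finite-memory SPE from these automata by making the construction of Lemma~\ref{NotEmptyToSPE} effective. By Corollary~\ref{cor:existence} the sets $\PotPlay_{\alpha_\ast}(v,I)$ are nonempty and any $\rho \in \PotPlay_{\alpha_\ast}(v_0)$ is a weak-SPE (hence SPE, by Corollary~\ref{cor:SPEweakSPE}) outcome. The labeling strategy in that lemma repeatedly selects, at each newly reached configuration, a play $\rho' \in \PotPlay_{\alpha_\ast}(h'v')$ satisfying the cost constraint~(\ref{eq:chosen}); I would show that since $\PotPlay_{\alpha_\ast}(h'v')$ depends only on the finite configuration $(v', I')$ and is $\omega$-regular, one can pick $\rho'$ to be an \emph{ultimately periodic} play produced by a fixed finite automaton uniformly over configurations, so that the whole strategy profile is realized by a finite-state machine whose memory is $V \times 2^{\Pi}$ augmented by the automaton states for the chosen plays. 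Verifying that this finite-memory profile is a very weak SPE reduces to rechecking the one-shot deviation inequality exactly as in Lemma~\ref{NotEmptyToSPE}, now with the selections being automaton outputs. The hardest part, I expect, is the MSO-encoding of the nested ``$\forall \rho' \in \PotPlay_\alpha$'' quantifier together with the simultaneous tracking of costs and configurations, and cleanly arguing the finiteness of $\alpha_\ast$; once those are in place, $\omega$-regularity propagates mechanically and finite memory follows from ultimate periodicity of regular choices.
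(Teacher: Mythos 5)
Your outline reproduces the paper's skeleton: the collapse of $\PotPlay_\alpha(hv)$ to finitely many sets indexed by $(v,I)$ is exactly Lemma~\ref{lem:hversI}, and your MSO route is workable --- indeed, encoding the inner ``$\forall \rho' \in \PotPlay_\alpha(h'v')$'' by universal quantification over monadic predicates $(X_u)_{u\in V}$ that encode $\rho'$ is a legitimate alternative to the paper's trick, which instead replaces that quantifier by a comparison with precomputed integer constants $\barmaxg(\PotS{J'}{\alpha}{v'})$. The genuine gap is your argument that the chain stabilizes at a finite $\alpha_\ast$. Your justification (``the only plays that matter are those with costs bounded via reachability within $|V|$ steps'') is not valid: the sets $\PotS{I}{\alpha}{v}$ contain plays of arbitrarily large finite cost, and the erasure condition compares a play against \emph{all} plays of another such set, so the thresholds governing erasure are not drawn from any a priori finite range. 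Nor can $\omega$-regularity rescue this: there exist infinite strictly decreasing chains of $\omega$-regular languages, so regularity of each stage alone gives no stabilization. The missing idea is the compactness result, Proposition~\ref{prop:const}: by K\"onig's lemma, if every play in $\PotS{I}{\alpha}{v}$ has finite cost for player~$i$, these costs are uniformly bounded; equivalently, the supremum of the costs is attained and lies in $\IN \cup \{+\infty\}$. This makes the vectors $\barmaxg(\PotS{I}{\alpha}{v},v')$ well defined; Lemma~\ref{lem:lien} then ties every strict shrink of some set to a strict decrease of one of these finitely many nonincreasing sequences valued in the well-founded order $(\IN \cup \{-1,+\infty\})^{\Pi}$, which is what yields a finite $\alpha_\ast$ (Corollary~\ref{cor:integer}). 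Note also that your claimed polynomial/exponential bound on $\alpha_\ast$ is an overclaim: the argument gives finiteness, not an explicit bound.

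The same missing ingredient undermines your last step. A single ultimately periodic play per configuration must satisfy inequality~(\ref{eq:chosen}) for \emph{infinitely many} histories reaching that configuration, each imposing a different threshold (the cost of the current reference outcome, shifted by the deviation point); non-erasure only guarantees, for each individual history, that \emph{some} suitable play exists, which does not by itself produce one play good for all of them. The paper closes this by selecting, for each configuration and each potential deviator $i$, the lasso play of \emph{maximal} cost for $i$ (Corollary~\ref{cor:lasso}): maximality is precisely what makes a uniform choice sound, and both the existence of a maximal-cost play and its effective extraction as a lasso again rest on Proposition~\ref{prop:const} together with the pumping argument in Lemma~\ref{lem:costmaxcomput}. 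So the repair is concrete: prove the K\"onig's-lemma compactness proposition first, derive from it the max-cost vectors, the Lemma~\ref{lem:lien}-style bookkeeping for finite stabilization, and the maximal-cost lasso selection; with those in place, the rest of your outline goes through essentially as the paper's proof does.
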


We can also decide whether there exists a (finite-memory) SPE such that the cost of its outcome is component-wise bounded by a given constant vector.

\begin{corollary} \label{cor:main}
Let $(\G,v_0)$ be a quantitative reachability initialized game, and let $\bar \const \in \IN^{|\Pi|}$ be a given $|\Pi|$-uple of integers. Then one can decide whether there exists  a (finite-memory) SPE $\bar \sigma$ such that $\g_i(\out{\bar \sigma}_{v_0}) \leq \const_i$ for all $i \in \Pi$.
\end{corollary}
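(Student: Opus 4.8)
The plan is to reduce the constrained existence problem to an emptiness test for an effectively computable $\omega$-regular language, exploiting the constructive Folk theorem underlying Theorem~\ref{thm:main}. By Corollary~\ref{cor:existence} (itself a consequence of Theorem~\ref{thm:FolkThmWeakSPE} and Corollary~\ref{cor:SPEweakSPE}), the set of outcomes of SPEs in $(\G,v_0)$ is exactly $\PotPlay_{\alpha_\ast}(v_0)$. Hence a (weak, and thus plain) SPE $\bar\sigma$ with $\g_i(\out{\bar\sigma}_{v_0}) \leq \const_i$ for all $i \in \Pi$ exists if and only if there is a play $\rho \in \PotPlay_{\alpha_\ast}(v_0)$ with $\g_i(\rho) \leq \const_i$ for all $i$. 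The corollary therefore amounts to deciding whether $\PotPlay_{\alpha_\ast}(v_0)$ contains such a bounded play, and, if so, to producing a finite-memory SPE realizing it.

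First I would record the two languages involved. On one hand, the proof of Theorem~\ref{thm:main} shows that the chain $(\PotPlay_\alpha(hv))_\alpha$ stabilizes after finitely many steps and that each set $\PotPlay_{\alpha_\ast}(hv)$ is an $\omega$-regular set of plays, effectively described by an MSO sentence (equivalently, accepted by a computable automaton over the alphabet $V$); in particular $\PotPlay_{\alpha_\ast}(v_0)$ is effectively $\omega$-regular. On the other hand, the constraint language $L_{\bar\const} = \{\rho \in V^\omega : \g_i(\rho) \leq \const_i \text{ for all } i \in \Pi\}$ is $\omega$-regular as well: for a fixed player $i$, the condition $\g_i(\rho) \leq \const_i$ states that some vertex of $\Target_i$ occurs among the positions $\rho_0, \ldots, \rho_{\const_i}$, a constraint on a bounded prefix that is trivially MSO-definable; the conjunction over the finitely many players $i \in \Pi$ keeps $L_{\bar\const}$ MSO-definable, and an automaton for it is computable from $\bar\const$ and the target sets.

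Next I would intersect and test for emptiness. Since $\omega$-regular languages are effectively closed under intersection and emptiness of an $\omega$-regular language (satisfiability of an MSO sentence over infinite words) is decidable, I can decide whether $\PotPlay_{\alpha_\ast}(v_0) \cap L_{\bar\const} \neq \emptyset$, which by the reduction above is precisely the constrained existence of an SPE. For the finite-memory refinement, if this intersection is non-empty then, being $\omega$-regular, it contains an ultimately periodic play $\rho$ computable from the product automaton. Feeding this particular $\rho \in \PotPlay_{\alpha_\ast}(v_0)$ into the finite-memory construction of Theorem~\ref{thm:main} — which builds an SPE with a prescribed outcome in $\PotPlay_{\alpha_\ast}(v_0)$ using finite memory — yields a finite-memory SPE whose outcome is $\rho$ and hence satisfies the bound $\bar\const$.

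The automata-theoretic closure and emptiness argument is routine; the step to argue with care is that the effective $\omega$-regularity of $\PotPlay_{\alpha_\ast}(v_0)$ and the finite-memory construction survive being specialized to a single, externally prescribed ultimately periodic outcome. Concretely, the construction of Lemma~\ref{NotEmptyToSPE} fixes the main outcome first and then chooses continuations in each subgame; I would verify that, when the main outcome is chosen to be the ultimately periodic witness $\rho$, the finite-memory bookkeeping of Theorem~\ref{thm:main} (tracking the relevant finite information about visited targets and the current potential level) still suffices, so that the overall profile remains finite-memory while guaranteeing outcome $\rho$. This is the only place where the interaction between the prescribed bound and the memory structure needs checking; the decidability itself follows immediately from the emptiness test.
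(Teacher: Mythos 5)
Your proposal is correct and follows essentially the same route as the paper: express the bound $\g_i(\rho) \leq \const_i$ in MSO (some vertex of $\Target_i$ occurs within the first $\const_i$ positions), conjoin with the MSO sentence defining the $\omega$-regular set $\PotPlay_{\alpha_\ast}(v_0)$ of SPE outcomes, decide non-emptiness, and if non-empty extract a lasso play and feed it as the prescribed initial outcome into the finite-memory construction of Theorem~\ref{thm:main}. The point you flag for verification is handled in the paper exactly as you suggest: the prescribed ultimately periodic witness replaces the initial lasso play in that construction, and all off-path continuations still use the finitely many lasso plays $\hg{i}{I}{v}$, so the profile remains finite-memory.
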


The main ingredients of the proof of Theorem~\ref{thm:main} are the next ones; they will be detailed in the sequel of this section. We will give afterwards the proof of Corollary~\ref{cor:main}.

\begin{itemize}
\item Given $\alpha$, the infinite number of sets $\PotPlay_\alpha(hv)$ can be replaced by the finite number of sets $\PotS{I}{\alpha}{v}$ where $I$ is the set of players that did not reach their target set along history $h$.
\item The fixpoint of Proposition~\ref{prop:fixpoint} is reached with some natural number $\alpha_\ast \in \IN$.
\item Each $\PotS{I}{\alpha}{v}$ is a non-empty $\omega$-regular set, thus containing a ``lasso play" of the form $h \ccdot g^\omega$.
\item The lasso plays of each $\PotS{I}{\alpha}{v}$ allow to construct a finite-memory SPE.
\end{itemize}

The next lemma highlights a simple useful property of the cost functions $\g_i$ used in quantitative reachability games. The proof is immediate. 
\begin{lemma} \label{lem:costreach}
Let $i \in \Pi$ and $\rho \in \PotPlay_\alpha(hv)$. If player~$i$ did not reach his target set along history $h$, then $\g_i(h\rho) = \g_i(\rho) + |hv|$.
\end{lemma}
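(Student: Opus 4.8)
The plan is to prove this by a direct index computation, unpacking the definition of the reachability cost function and tracking how positions in the subgame play $\rho$ correspond to positions in the full play $h\rho$. The membership $\rho \in \PotPlay_\alpha(hv)$ itself plays almost no role: by Definition~\ref{def:Palpha} it only guarantees that $\rho$ is a play of the subgame $(\SB{\G}{h},v)$, so that $\First(\rho) = v$ and the concatenation $h\rho$ is a genuine play of $(\G,v_0)$ having $hv$ as a prefix.

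First I would set up the position bookkeeping. Writing $h\rho = p_0 p_1 \ldots$, the history $h$ contributes exactly the $|hv|$ vertices occupying positions $0, \ldots, |hv|-1$ (recall that $h$ has $|hv|-1$ edges and hence $|hv|$ vertices), while $\First(\rho)=v$ sits at position $|hv|$. Thus the $j$-th vertex $\rho_j$ of $\rho$ appears at position $|hv| + j$ in $h\rho$.

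Then I would invoke the hypothesis that player~$i$ does not reach $\Target_i$ along $h$, i.e. none of the vertices $p_0, \ldots, p_{|hv|-1}$ lies in $\Target_i$. Consequently the least index at which $h\rho$ visits $\Target_i$, if it exists, must be realized inside the $\rho$-part. If $\g_i(\rho) = n < +\infty$, then $\rho_n$ is the first vertex of $\rho$ in $\Target_i$, which occurs at position $|hv| + n$ of $h\rho$; since every earlier position of $h\rho$ avoids $\Target_i$, we get $\g_i(h\rho) = |hv| + n = \g_i(\rho) + |hv|$. If instead $\g_i(\rho) = +\infty$, then no vertex of $\rho$, and hence no vertex of $h\rho$, belongs to $\Target_i$, so $\g_i(h\rho) = +\infty = \g_i(\rho) + |hv|$ under the usual convention that $+\infty + c = +\infty$.

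There is no real obstacle here, in line with the paper's remark that the proof is immediate; the only points requiring care are the index offset — it is $|hv|$ rather than $|h|$, because $\First(\rho)=v$ is already counted as the final vertex of the history $hv$ — and the boundary case in which $v$ itself lies in $\Target_i$, which corresponds to $\g_i(\rho)=0$ and is covered by the very same formula.
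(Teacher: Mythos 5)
Your proof is correct and matches the paper's intent: the paper dismisses this lemma as ``immediate,'' and your direct computation --- placing $\rho_j$ at position $|hv|+j$ of $h\rho$ and using that all vertices of $h$ avoid $\Target_i$, in both the finite and the $+\infty$ case --- is exactly that immediate argument spelled out. Your two points of care (the offset is $|hv|$ rather than $|h|$ because $\rho$ starts at $v$, and the case $v \in \Target_i$ is still covered since the hypothesis only excludes $\Target_i$ along $h$) are precisely the details that make the statement hold as written.
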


The next proposition is a key result that will be used several times later on. It states that it is impossible to have plays in $\PotPlay_\alpha(hv)$ with arbitrarily large costs for player $i$, without having a play in $\PotPlay_\alpha(hv)$ with an infinite cost for player $i$.

\begin{proposition} \label{prop:const}
Consider $\PotPlay_\alpha(hv)$ and $i \in \Pi$. If for all $\rho \in \PotPlay_\alpha(hv)$, we have $\g_i(\rho) < +\infty$, then there exists $\const$ such that for all $\rho \in \PotPlay_\alpha(hv)$, we have $\g_i(\rho) \leq \const$. The constant $\const$ only depends on $\PotPlay_\alpha(hv)$ and player~$i$.
\end{proposition}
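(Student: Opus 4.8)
The plan is to argue by contraposition using a compactness (König's lemma) argument: if the costs $\g_i(\rho)$ for $\rho \in \PotPlay_\alpha(hv)$ are all finite but unbounded, I will extract a play in $\PotPlay_\alpha(hv)$ that never visits $\Target_i$, hence has cost $+\infty$, contradicting the hypothesis. Concretely, for each $n$ pick $\rho^{(n)} \in \PotPlay_\alpha(hv)$ with $\g_i(\rho^{(n)}) > n$; by Definition~\ref{def:reach} this means $\rho^{(n)}$ avoids $\Target_i$ on its first $n+1$ vertices. Since $V$ is finite, the tree of finite prefixes of the $\rho^{(n)}$ is finitely branching, so by König's lemma there is an infinite play $\rho^\ast$ each of whose prefixes is a prefix of $\rho^{(n)}$ for infinitely many $n$. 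Such a $\rho^\ast$ avoids $\Target_i$ entirely, so $\g_i(\rho^\ast) = +\infty$, and $\rho^\ast$ is the limit of a subsequence of the $\rho^{(n)}$.

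The one missing ingredient is that $\PotPlay_\alpha(hv)$ is topologically closed in $V^\omega$ (equivalently compact), for then $\rho^\ast \in \PotPlay_\alpha(hv)$ and we reach the desired contradiction. I would isolate this as a lemma and prove it by transfinite induction on $\alpha$. For $\alpha = 0$, the set $\PotPlay_0(hv)$ of all plays from $v$ is closed; for a limit ordinal, $\PotPlay_\alpha(hv)$ is an intersection of closed sets by~(\ref{eq:Pbeta}) and is thus closed. The crux is the successor step: given a convergent sequence $\rho^{(k)} \to \rho$ with $\rho^{(k)} \in \PotPlay_{\alpha+1}(hv)$, I get $\rho \in \PotPlay_\alpha(hv)$ by closedness at level $\alpha$, and must rule out $\rho \in \ErasePlay_\alpha(hv)$. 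Assuming $\rho$ is erased via witnesses $h',v'$ as in Definition~\ref{def:Palpha}, I show the same witnesses erase $\rho^{(k)}$ for $k$ large, contradicting $\rho^{(k)} \in \PotPlay_{\alpha+1}(hv)$: for $k$ large, $h\rho^{(k)}$ agrees with $h\rho$ on a prefix extending past $h'$ and its successor, so $h' < h\rho^{(k)}$ and $h'v' \not< h\rho^{(k)}$, and, when $\g_i(h\rho) < +\infty$, the first visit to $\Target_i$ along $h\rho$ lies inside this agreement prefix, whence $\g_i(h\rho^{(k)}) = \g_i(h\rho)$ and the strict inequality $\g_i(h\rho^{(k)}) > \g_i(h'\rho')$ persists for all $\rho' \in \PotPlay_\alpha(h'v')$.

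The delicate case, and the main obstacle, is $\g_i(h\rho) = +\infty$: then the erasing condition forces every $\rho' \in \PotPlay_\alpha(h'v')$ to have finite cost, and I need $\g_i(h\rho^{(k)})$ to eventually exceed all of these finite values. This is precisely where the Proposition's own conclusion at level $\alpha$ enters: applying it to $\PotPlay_\alpha(h'v')$ yields a uniform bound $\const'$ on the costs $\g_i(h'\rho')$, and for $k$ large enough $\g_i(h\rho^{(k)}) > \const'$ because $h\rho^{(k)}$ avoids $\Target_i$ on an arbitrarily long prefix. Thus closedness and the bounded-cost property must be established by a \emph{joint} transfinite induction: closedness at $\alpha{+}1$ uses the Proposition at $\alpha$, while the Proposition at each level follows from closedness at that level via the König argument of the first paragraph. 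I expect the bookkeeping of this interplay, together with the finite/infinite case distinction and the use of Definition~\ref{def:reach} to argue that target-reaching times are determined by finite prefixes, to be the only genuine difficulty.
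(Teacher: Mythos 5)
Your proposal is correct and follows essentially the same route as the paper's own proof: the paper also argues by contradiction via K\"onig's lemma, extracts a limit play of infinite cost, and shows by a (joint) transfinite induction that this limit remains in $\PotPlay_\alpha(hv)$, using the same finite/infinite case split on $\g_i(h\rho)$ and the same appeal to the Proposition at level $\alpha$ (applied to $\PotPlay_\alpha(h'v')$) to obtain the uniform bound in the infinite-cost case. Your only departure is packaging the inductive claim as full topological closedness of $\PotPlay_\alpha(hv)$, whereas the paper states the closure property just for the specific K\"onig limit — a cosmetic strengthening rather than a different argument, since the successor-step reasoning is identical.
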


\begin{proof}
Suppose that for all $n \in \IN$, there exists $\rho_n \in \PotPlay_\alpha(hv)$ such that $\g_i(\rho_n) > n$. By K\"onig's lemma, there exists $\rho = \lim_{k \rightarrow \infty} \rho_{n_k}$ for some subsequence $(\rho_{n_k})_k$ of $(\rho_{n})_n$. By definition of $\g_i$ in quantitative reachability games, we get $\g_i(\rho) = +\infty$. Let us prove by induction on $\alpha$ that $\rho \in \PotPlay_\alpha(hv)$; this  will establish Proposition~\ref{prop:const}.

Let $\alpha = 0$. As each $\rho_n \in \PotPlay_0(hv)$, then $\rho_n$  is a play in $(\SB{\G}{h},v)$ by definition of $\PotPlay_0(hv)$ (see (\ref{eq:P0}) in Definition~\ref{def:Palpha}). Therefore $\rho$ is also a play in $(\SB{\G}{h},v)$, and $\rho \in \PotPlay_0(hv)$.

Let $\alpha + 1$ be a successor ordinal. As for all $n$, $\rho_n \in \PotPlay_{\alpha +1}(hv) \subseteq \PotPlay_\alpha(hv)$, then $\rho \in \PotPlay_\alpha(hv)$ by induction hypothesis. Let us prove that $\rho \in \PotPlay_{\alpha +1}(hv)$. Suppose on the contrary that $\rho \in \ErasePlay_{\alpha}(hv)$ (see Definition~\ref{def:Palpha} and Figure~\ref{fig:Ealpha}). Then there exists a history $h' \in \Hist_j$ with $j \in \Pi$ and $hv \leq h' < h\rho$, there exists a vertex $v'$ with $h'v' \not < h\rho$, such that $\forall \rho' \in \PotPlay_{\alpha}(h'v')$: 
\begin{eqnarray} \label{eq:strict}
\g_j(h\rho) > \g_j(h'\rho').
\end{eqnarray}
It follows that player~$j$ did not reach his target set along $h'$. Hence by Lemma~\ref{lem:costreach}, we have $\g_j(h\rho) = \g_j(\rho) + |hv|$ and $\g_j(h'\rho') = \g_j(\rho') + |h'v'|$.
By (\ref{eq:strict}), $\g_j(\rho')$ is bounded. Hence by induction hypothesis with $\PotPlay_{\alpha}(h'v')$ and $j \in \Pi$, there exists a constant $\const$ such that $\g_j(\rho') \leq \const$, $\forall \rho' \in \PotPlay_{\alpha}(h'v')$.

\noindent
Suppose first that $\g_j(\rho) < +\infty$. Then, since $\rho = \lim_{k \rightarrow \infty} \rho_{n_k}$, it follows that for a large enough $n_k$, the plays $\rho$ and $\rho_{n_k}$ share a long common prefix on which player~$j$ reaches its target set, i.e. $\g_j(\rho) = \g_j(\rho_{n_k})$. It follows that with the same history $h'v'$ as above, by (\ref{eq:strict}) and Lemma~\ref{lem:costreach}, we have $\g_j(h\rho_{n_k})  > \g_j(h'\rho')$, $\forall \rho' \in \PotPlay_{\alpha}(h'v')$, showing that $\rho_{n_k} \in \ErasePlay_{\alpha}(hv)$, a contradiction.

\noindent
Suppose next that $\g_j(\rho) = +\infty$. Then, given $\const' = |h'v'| - |hv| + \const$, we can choose a large enough $n_k$ such that the plays $\rho$ and $\rho_{n_k}$ share a common prefix of length at least~$\const'$. Moreover, as $\g_j(\rho) = +\infty$, player~$j$ does not reach its target set along this prefix, i.e.  $\g_j(\rho_{n_k}) > \const'$. Therefore, using the same history $h'v'$ as above, by (\ref{eq:strict}) and Lemma~\ref{lem:costreach}, we have $\g_j(h\rho_{n_k}) > |hv| + \const' = |h'v'| + \const \geq \g_j(h'\rho')$, $\forall \rho' \in \PotPlay_{\alpha}(h'v')$. This shows that $\rho_{n_k} \in \ErasePlay_{\alpha}(hv)$, again a contradiction.

Let $\alpha$ be a limit ordinal. As for all $n$, $\rho_n \in \PotPlay_{\alpha}(hv) = \bigcap_{\beta < \alpha} \PotPlay_\beta(hv)$ (see (\ref{eq:Pbeta}) in Definition~\ref{def:Palpha}), then $\rho \in \PotPlay_\beta(hv), \forall \beta < \alpha$, by induction hypothesis. Hence $\rho \in \PotPlay_\alpha(hv) = \bigcap_{\beta < \alpha}\PotPlay_\beta(hv)$.

We have just shown that if all plays $\rho \in \PotPlay_\alpha(hv)$ have a cost $\g_i(\rho) < +\infty$, then there exists $\const$ such that $\g_i(\rho) \leq \const$ for all such $\rho$. This constant $\const$ depends on $\PotPlay_\alpha(hv)$ and player~$i$.
\end{proof}

\noindent
As a consequence of Proposition~\ref{prop:const}, we have that $\sup \{\g_i(\rho) \mid \rho \in \PotPlay_\alpha(hv)\}$ is equal to $\max \{\g_i(\rho) \mid \rho \in \PotPlay_\alpha(hv)\}$, and that this maximum belongs to $\IN \cup \{+\infty\}$.

\subsection{Sets $\PotS{I}{\alpha}{v}$} \label{subsec:PotSI}

Let $(\G, v_0)$ be a quantitative reachability game. Given a history $h = h_0 \ldots h_n$ in $(\G, v_0)$, we denote by $\Sh{I}{h}$ the set of players $i$ such that $\forall k$, $0 \leq k \leq n$, we have $h_k \not\in \Target_i$. In other words $\Sh{I}{h}$ is the set of players that did not reach their target set along history $h$. If $h$ is empty, then $\Sh{I}{h} = \Pi$. The next lemma indicates that sets $\PotPlay_\alpha(hv)$ only depend on $v$ and  $\Sh{I}{h}$, and thus not on $h$ (we do no longer take care of players that have reached their target set along~$h$).

\begin{lemma} \label{lem:hversI}
For $h_1v, h_2v \in \Hist(v_0)$, if $\Sh{I}{h_1} = \Sh{I}{h_2}$, then $\PotPlay_\alpha(h_1v) = \PotPlay_\alpha(h_2v)$ for all~$\alpha$. 
\end{lemma}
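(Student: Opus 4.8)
The plan is to prove the statement by transfinite induction on $\alpha$, establishing it simultaneously for all pairs $h_1v, h_2v \in \Hist(v_0)$ with $\Sh{I}{h_1} = \Sh{I}{h_2}$. The base case $\alpha = 0$ is immediate: by~(\ref{eq:P0}), $\PotPlay_0(hv)$ is exactly the set of plays starting in $v$, which does not mention $h$ at all. The limit case follows at once from the induction hypothesis and~(\ref{eq:Pbeta}), since both sets are then the same intersection $\bigcap_{\beta < \alpha}\PotPlay_\beta(\cdot)$. Thus all the work lies in the successor step.

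For $\alpha+1$, since $\PotPlay_\alpha(h_1v) = \PotPlay_\alpha(h_2v)$ by induction and $\PotPlay_{\alpha+1}(hv) = \PotPlay_\alpha(hv) \setminus \ErasePlay_\alpha(hv)$ by~(\ref{eq:Palpha}), it suffices to prove $\ErasePlay_\alpha(h_1v) = \ErasePlay_\alpha(h_2v)$. Fix a play $\rho$ from $v$. A witness for $\rho \in \ErasePlay_\alpha(h_1v)$ is a nonempty prefix $g$ of $\rho$ with $\Last(g) \in V_i$ for some player $i$, together with a vertex $v'$ such that $(\Last(g), v') \in E$ and $gv' \not< \rho$; it requires $\g_i(h_1\rho) > \g_i(h_1 g \rho')$ for all $\rho' \in \PotPlay_\alpha(h_1 g v')$. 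First I would note that the admissible triples $(g, v', i)$ are the same for $h_1$ and $h_2$, since every defining condition refers only to $\rho$, $v'$, and $i$. Next, because $\Sh{I}{hg} = \Sh{I}{h} \cap \Sh{I}{g}$ and $\Sh{I}{h_1} = \Sh{I}{h_2}$, we get $\Sh{I}{h_1 g} = \Sh{I}{h_2 g}$; hence by the induction hypothesis $\PotPlay_\alpha(h_1 g v') = \PotPlay_\alpha(h_2 g v')$, so both erasure conditions quantify $\rho'$ over one and the same set.

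It then remains to show, for each $\rho'$ in that common set, the equivalence $\g_i(h_1\rho) > \g_i(h_1 g \rho') \Leftrightarrow \g_i(h_2\rho) > \g_i(h_2 g \rho')$. I would split on whether player $i$ has already reached $\Target_i$. If $i \notin \Sh{I}{h_1} = \Sh{I}{h_2}$, then $i$ reaches its target inside both $h_1$ and $h_2$, so the first hit occurs in a common prefix of $h_j\rho$ and $h_j g\rho'$; thus $\g_i(h_1\rho) = \g_i(h_1 g\rho')$ and $\g_i(h_2\rho) = \g_i(h_2 g\rho')$, and both strict inequalities are false. If $i \in \Sh{I}{h_1}$, I would apply Lemma~\ref{lem:costreach}. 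In the subcase where $i$ does not reach $\Target_i$ along $g$ either (so $i \in \Sh{I}{h_1 g} = \Sh{I}{h_2 g}$), the lemma gives $\g_i(h_j\rho) = \g_i(\rho) + |h_j v|$ and $\g_i(h_j g\rho') = \g_i(\rho') + |h_j g v'|$; the key point is that the length difference $|h_1 g v'| - |h_1 v| = |h_2 g v'| - |h_2 v|$ equals a value $\delta$ depending only on $g$ and $v'$, so both inequalities reduce to the identical prefix-free condition $\g_i(\rho) - \g_i(\rho') > \delta$ (the $+\infty$ cases being handled directly). In the remaining subcase, where $i$ reaches $\Target_i$ already inside $g$, both $h_j\rho$ and $h_j g\rho'$ share the prefix $h_j g$ on which the target is first hit, so again both sides are equalities.

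The hard part is precisely this case analysis on the cost comparison: one must verify that Lemma~\ref{lem:costreach} is applicable on both sides (which needs $i \in \Sh{I}{h_1 g}$, guaranteed by $\Sh{I}{h_1 g} = \Sh{I}{h_2 g}$), that the prefix-dependent length terms cancel so as to leave an $h$-independent condition, and that the situations where $i$ has already hit its target collapse both comparisons to equalities. Once this per-$\rho'$ equivalence is in place, the universally quantified erasure conditions for $h_1$ and $h_2$ coincide, yielding $\ErasePlay_\alpha(h_1v) = \ErasePlay_\alpha(h_2v)$ and hence $\PotPlay_{\alpha+1}(h_1v) = \PotPlay_{\alpha+1}(h_2v)$, which closes the induction.
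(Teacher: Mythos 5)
Your proof is correct and follows essentially the same route as the paper's: induction on $\alpha$ with trivial base and limit cases, reducing the successor step to $\ErasePlay_\alpha(h_1v) = \ErasePlay_\alpha(h_2v)$, invoking the induction hypothesis on the extended histories $h_1gv'$ and $h_2gv'$ (via $\Sh{I}{h_1g} = \Sh{I}{h_2g}$), and using Lemma~\ref{lem:costreach} to reduce both cost comparisons to one prefix-independent condition. The only difference is stylistic: the paper deduces $i \in \Sh{I}{h_1g}$ directly from the strict inequality (so the cases where player $i$ has already reached $\Target_i$, in which both comparisons collapse to equalities, never produce witnesses), whereas you spell out that case split explicitly; this also lets you handle the vacuous case $\PotPlay_\alpha(h_1gv') = \emptyset$ a bit more cleanly, but the core argument is the same.
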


\begin{proof}
The proof is by induction on $\alpha$. By definition, we have $P_0(h_1v) = P_0(h_2v)$. 

Suppose that $\alpha + 1$ is a successor ordinal. By induction hypothesis, $\PotPlay_\alpha(h_1v) =  \PotPlay_\alpha(h_2v)$. 
Let us prove that $\ErasePlay_\alpha(h_1v) = \ErasePlay_\alpha(h_2v)$ which will imply that $\PotPlay_{\alpha +1}(h_1v) =  \PotPlay_{\alpha +1}(h_2v)$. If $\rho \in \ErasePlay_\alpha(h_1v)$, 
it means that there exists a history $h'_1 = h_1g \in \Hist_i$ with $h'_1 < h_1\rho$, 
there exists a vertex $v'$ with $h'_1v' \not < h_1\rho$, such that $\forall \rho' \in \PotPlay_{\alpha}(h'_1v')$, 
we have $\g_i(h_1\rho) > \g_i(h'_1\rho')$, i.e.
\begin{eqnarray}\label{eq:profit}
\g_i(\rho) > \g_i(g\rho'). 
\end{eqnarray}
In particular, $i \in \Sh{I}{h'_1}$. Let us consider the history $h'_2 = h_2g$. By hypothesis, $\Sh{I}{h_1} = \Sh{I}{h_2}$, 
and therefore $\Sh{I}{h'_1} = \Sh{I}{h'_2}$ and $i \in \Sh{I}{h'_2}$. Thus by induction hypothesis 
$\PotPlay_{\alpha}(h'_1v') = \PotPlay_{\alpha}(h'_2v')$. It follows that for $\forall \rho' \in \PotPlay_{\alpha}(h'_2v')$, 
we have $\g_i(h_2\rho) > \g_i(h_2g\rho') = \g_i(h'_2\rho')$ by (\ref{eq:profit}), and then $\rho \in \ErasePlay_\alpha(h_2v)$. 
Symmetrically, if $\rho \in \ErasePlay_\alpha(h_2v)$, then $\rho \in \ErasePlay_\alpha(h_1v)$. 
We can conclude that $\PotPlay_{\alpha+1}(h_1v) = \PotPlay_{\alpha+1}(h_2v)$.

Suppose that $\alpha$ is a limit ordinal. As $\PotPlay_\alpha(h_1v) = \bigcap_{\beta < \alpha} \PotPlay_\beta(h_1v)$, and $\PotPlay_\beta(h_1v) =  \PotPlay_\beta(h_2v)$ by induction hypothesis, it follows that $\PotPlay_\alpha(h_1v) =  \PotPlay_\alpha(h_2v)$.
\end{proof}

Thanks to this lemma, we can introduce the next definitions.

\begin{definition}
Let $(\G, v_0)$ be a quantitative reachability initialized game. Let $I \subseteq \Pi$ be such that $I = \Sh{I}{h}$ for some $h \in \Hist(v_0)$. We denote by 
\begin{itemize}
\item $\PotS{I}{\alpha}{v}$ the set $\PotPlay_\alpha(hv)$, and by
\item $\EraseS{I}{\alpha}{v}$ the set $\ErasePlay_\alpha(hv)$.
\end{itemize}
\end{definition}

\noindent 
In particular, $\PotS{\Pi}{\alpha}{v_0} = \PotPlay_\alpha(v_0)$ and $\EraseS{\Pi}{\alpha}{v_0} = \ErasePlay_\alpha(v_0)$.


Given $\alpha$, the infinite number of sets $\PotPlay_\alpha(hv)$ can thus be replaced by the finite number of sets $\PotS{I}{\alpha}{v}$.
Moreover, Proposition~\ref{prop:const} can be rephrased as follows.

\begin{corollary} \label{cor:const}
Consider $\PotS{I}{\alpha}{v}$ and $i \in I$. If for all $\rho \in \PotS{I}{\alpha}{v}$, we have $\g_i(\rho) < +\infty$, then there exists $\const$ such that for all $\rho \in \PotS{I}{\alpha}{v}$, we have $\g_i(\rho) \leq \const$. The constant $\const$ only depends on $\alpha, I, v$, and $i$.
\end{corollary}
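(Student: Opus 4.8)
The plan is to observe that Corollary~\ref{cor:const} is nothing more than Proposition~\ref{prop:const} transported through the notational identification provided by Lemma~\ref{lem:hversI}. First I would fix an arbitrary history $h \in \Hist(v_0)$ witnessing $I = \Sh{I}{h}$; such an $h$ exists precisely because $I$ is assumed to be of this form. By the very definition of $\PotS{I}{\alpha}{v}$, we then have the identification $\PotS{I}{\alpha}{v} = \PotPlay_\alpha(hv)$, and likewise $\g_i(\rho)$ denotes in both statements the cost of player~$i$ on the subgame play $\rho$ (not $\g_i(h\rho)$), so no translation is needed on the cost side.

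Next, since $i \in I \subseteq \Pi$, the hypothesis of the corollary reads exactly as: $\g_i(\rho) < +\infty$ for all $\rho \in \PotPlay_\alpha(hv)$. Applying Proposition~\ref{prop:const} to this set and this player~$i$ yields a constant $\const$ with $\g_i(\rho) \leq \const$ for all $\rho \in \PotPlay_\alpha(hv) = \PotS{I}{\alpha}{v}$, which is the claimed bound. (The restriction $i \in I$ is the natural one here: it is exactly the case in which player~$i$ has not yet reached his target along $h$, so that his cost is still genuinely determined by the suffix.)

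Finally, for the dependence claim I would invoke Lemma~\ref{lem:hversI}, which guarantees that $\PotPlay_\alpha(hv)$ is independent of the particular witnessing history $h$ chosen (only $\Sh{I}{h} = I$ and $v$ matter). Since Proposition~\ref{prop:const} asserts that $\const$ depends only on the set $\PotPlay_\alpha(hv)$ and on $i$, and this set is completely determined by the data $\alpha, I, v$, the constant $\const$ depends only on $\alpha, I, v$, and $i$. There is no genuine obstacle beyond careful bookkeeping of notation: all the mathematical content, namely the limit/König's lemma argument showing that unbounded finite costs would force an infinite-cost play into $\PotPlay_\alpha(hv)$, has already been discharged in the proof of Proposition~\ref{prop:const}.
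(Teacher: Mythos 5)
Your proposal is correct and follows essentially the same route as the paper's own proof: fix a witnessing history $h$ with $I = \Sh{I}{h}$, identify $\PotS{I}{\alpha}{v}$ with $\PotPlay_\alpha(hv)$, apply Proposition~\ref{prop:const}, and then use Lemma~\ref{lem:hversI} to conclude that the constant depends only on $\alpha$, $I$, $v$, and $i$. Your added remarks (existence of the witness $h$, no translation needed on the cost side, and why $i \in I$ is the natural restriction) are sound bookkeeping that the paper leaves implicit.
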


\begin{proof} 
Let $h$ be such that $I = \Sh{I}{h}$. Consider $\PotPlay_\alpha(hv) = \PotS{I}{\alpha}{v}$, and $i \in I$. By Proposition~\ref{prop:const}, if for all $\rho \in \PotPlay_\alpha(hv)$, $\g_i(\rho) < +\infty$, then there exists $\const$ (depending on $\PotPlay_\alpha(hv)$ and $i$) such that for all $\rho \in \PotPlay_\alpha(hv)$, $\g_i(\rho) \leq \const$. By Lemma~\ref{lem:hversI}, $c$ depends on $\alpha$, $I$, $v$, and $i$. 
\end{proof}

As a consequence of Corollary~\ref{cor:const}, we give the next definition that indicates the maximum costs for plays in $\PotS{I}{\alpha}{v}$.
\begin{definition} \label{def:costmax}
Given $\PotS{I}{\alpha}{v}$, we define $\barmaxg(\PotS{I}{\alpha}{v})$ such that
$$\maxg_i(\PotS{I}{\alpha}{v}) = \left\{\begin{array}{ll}  -1 &\mbox{if } i \not\in I, \\
                                               \max \{ \g_i(\rho) \mid \rho \in \PotS{I}{\alpha}{v} \} &\mbox{if } i \in I.
\end{array}\right .$$
\end{definition}

\noindent 
In this definition, $-1$ indicates that player $i$ has already visited his target set $T_i$, and the $\max$ belongs to $\IN \cup \{+\infty\}$.

\subsection{Fixpoint with $\alpha_\ast \in \IN$} \label{subsec:fixpoint}

In this section, we aim at proving that the fixpoint (when computing the sets $\PotS{I}{\alpha}{v}$, see Proposition~\ref{prop:fixpoint}) is reached in a finite number of steps, that is $\alpha_\ast \in \IN$. 

We first need to introduce some notions about the sets $\PotS{I}{\alpha}{v}$. Let $\rho = \rho_0\rho_1 \ldots \in \PotS{I}{\alpha}{v}$. We use a map $\chi$ that \emph{decorates} each $\rho_n$ by some set $J \subseteq \Pi$. The aim of the decoration $\chi(\rho_n)$ is to indicate at vertex $\rho_n$, which players of $I$ did not reach their target set along $\rho_{<n}$. More precisely, $\chi(\rho_n) = I \cap \Sh{I}{\rho_{<n}}$. In particular $\chi(\rho_0) = I \cap \Pi = I$.


Let $\PotS{I}{\alpha}{v}$ and $(v,v') \in E$. We now adapt Definition~\ref{def:costmax} to mention the maximum costs for plays in $\PotS{I}{\alpha}{v}$ \emph{starting with edge $(v,v')$}. We define $\barmaxg(\PotS{I}{\alpha}{v},v')$ as follows:
\begin{eqnarray} \label{eq:adapt}
\maxg_i(\PotS{I}{\alpha}{v},v') = \left\{\begin{array}{ll}  -1 &\mbox{if } i \not\in I, \\
                                               \max \{ \g_i(\rho) \mid \rho \in \PotS{I}{\alpha}{v} \mbox{ and } \rho_0\rho_1 = vv' \} &\mbox{if } i \in I. 
\end{array}\right .
\end{eqnarray}

\noindent
In this definition, the $\max$ is equal to -1 when the set $\{ \g_i(\rho) \mid \rho \in \PotS{I}{\alpha}{v} \mbox{ and } \rho_0\rho_1 = vv' \}$ is empty.\footnote{Notice that as $\PotS{I}{\alpha}{v}$ is non-empty, there exists some $(v,v') \in E$ such that this set is non-empty.}

The sequence $(\barmaxg(\PotS{I}{\alpha}{v},v'))_\alpha$ is nonincreasing for the usual component-wise ordering over $(\IN \cup \{-1,+\infty\})^{\Pi}$ since $\PotS{I}{\alpha}{v}$ is nonincreasing for the inclusion by definition. Therefore it reaches a fixpoint that we want to relate to the fixpoint $\PotS{I}{\alpha_\ast}v$ of Proposition~\ref{prop:fixpoint}. This is done in the following lemma.

\begin{lemma} \label{lem:lien}
\begin{itemize}
\item If $\PotS{I}{\alpha}{v} = \PotS{I}{\alpha+1}{v}$, then for all $(v,v') \in E$, $\barmaxg(\PotS{I}{\alpha}{v},v') = \barmaxg(\PotS{I}{\alpha +1}{v},v')$.
\item If $\PotS{I}{\alpha}{v} \neq \PotS{I}{\alpha+1}{v}$, then there exist $J \subseteq \Pi$ and $(u,u') \in E$ such that $\barmaxg(\PotS{J}{\alpha}{u},u') \neq \barmaxg(\PotS{J}{\alpha +1}{u},u')$.
\end{itemize} 
\end{lemma}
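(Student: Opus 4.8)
The first item is immediate and requires no work beyond unfolding a definition: by (\ref{eq:adapt}) the vector $\barmaxg(\PotS{I}{\alpha}{v},v')$ is a function of the single set $\PotS{I}{\alpha}{v}$ and the edge $(v,v')$ only, so if $\PotS{I}{\alpha}{v} = \PotS{I}{\alpha+1}{v}$ then these vectors coincide for every outgoing edge of $v$.

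For the second item the plan is to isolate a deviating player whose \emph{unrestricted} maximal cost strictly drops between $\alpha$ and $\alpha+1$, and only then descend to a single edge. Since the sequence is nonincreasing, $\PotS{I}{\alpha+1}{v} \subsetneq \PotS{I}{\alpha}{v}$, so I may fix a play $\rho \in \PotS{I}{\alpha}{v} \cap \EraseS{I}{\alpha}{v}$. Unfolding Definition~\ref{def:Palpha}, the erasure of $\rho$ is witnessed by a player $i$, a prefix of $\rho$ ending at some $u \in V_i$, and a branching vertex $u'$ off $\rho$, such that $\g_i(\rho)$ strictly exceeds $\g_i$ of every play that branches to $u'$ and continues inside the relevant potential set. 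A profitable deviation cannot help a player who has already reached his target, so $i$ has not visited $\Target_i$ up to $u$; this is precisely where Lemma~\ref{lem:costreach} is used, to cancel the common prefix up to $u$. Writing $\sigma$ for the suffix of $\rho$ starting at $u$, Lemma~\ref{lem:utile} gives $\sigma \in \PotS{J}{\alpha}{u}$ for the set $J$ of players not yet served along that prefix (with $i \in J$), and the very same branch to $u'$ now witnesses $\sigma \in \EraseS{J}{\alpha}{u}$. Thus, after this reduction, $\sigma$ is a play of $\PotS{J}{\alpha}{u}$ that is erased by a deviation occurring at its \emph{first} vertex.

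Next I would run a maximiser argument at $u$ for player $i$. Set $M_\alpha = \maxg_i(\PotS{J}{\alpha}{u})$, which is well defined in $\IN \cup \{+\infty\}$ by Proposition~\ref{prop:const}. Consider any surviving play $\tau \in \PotS{J}{\alpha+1}{u}$ with $\g_i(\tau) \geq \g_i(\sigma)$: if $\tau$ branches to $u'$ at its first step then $\tau = u\tau'$ with $\tau'$ in the branch's potential set, so the deviation inequality forces $\g_i(\tau) < \g_i(\sigma)$, a contradiction; and if $\tau$ avoids $u'$, then the same branch to $u'$ is available for $\tau$ and, because $\g_i(\tau) \geq \g_i(\sigma)$ still dominates every branch continuation, it places $\tau$ in $\EraseS{J}{\alpha}{u}$, contradicting $\tau \in \PotS{J}{\alpha+1}{u}$. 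Hence every play of $\PotS{J}{\alpha+1}{u}$ has $i$-cost strictly below $\g_i(\sigma)$, whereas $\sigma \in \PotS{J}{\alpha}{u}$ yields $M_\alpha \geq \g_i(\sigma)$. Using that $\PotS{J}{\alpha+1}{u}$ is nonempty (it contains the fixpoint set, nonempty by Corollary~\ref{cor:existence}) together with Proposition~\ref{prop:const} to exclude arbitrarily large finite $i$-costs without an infinite one, I conclude $\maxg_i(\PotS{J}{\alpha+1}{u}) < \g_i(\sigma) \leq M_\alpha$, a strict decrease of the unrestricted maximum.

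Finally I transfer this drop to an edge. As every play of $\PotS{J}{\alpha}{u}$ begins with one of the finitely many outgoing edges $(u,u'')$, at each step one has $\maxg_i(\PotS{J}{\alpha}{u}) = \max_{(u,u'')\in E}\maxg_i(\PotS{J}{\alpha}{u},u'')$, and each edge term is itself nonincreasing in $\alpha$; a strictly smaller overall maximum therefore forces $\maxg_i(\PotS{J}{\alpha}{u},u'') > \maxg_i(\PotS{J}{\alpha+1}{u},u'')$ for at least one $u''$, giving the required $J$ and $(u,u'')$ with $\barmaxg(\PotS{J}{\alpha}{u},u'') \neq \barmaxg(\PotS{J}{\alpha+1}{u},u'')$. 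The delicate point throughout is the bookkeeping around $+\infty$: I must guarantee that discarding all $i$-costs of value at least $\g_i(\sigma)$ (which may itself be $+\infty$) genuinely lowers the maximum instead of merely thinning an infinite family of finite costs, and this is exactly what Proposition~\ref{prop:const} secures. I expect this $+\infty$ case, rather than the combinatorics of the reduction, to be the main obstacle to a fully rigorous write-up.
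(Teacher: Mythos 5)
Your proof is correct, and its skeleton is the paper's: pass to the suffix $\sigma=\rho_{\geq n}$ at the deviation point (via Lemmas~\ref{lem:utile} and~\ref{lem:costreach}), observe that the deviation inequality erases not just $\sigma$ but \emph{every} play at $u$ whose $i$-cost is at least $\g_i(\sigma)$, and turn this into a strict drop of a maximal cost, with Proposition~\ref{prop:const} securing that suprema are attained so the drop is genuine even when $\g_i(\sigma)=+\infty$. The mechanical difference is which maximum you track. The paper immediately fixes the edge $(u,u')=(\rho_n,\rho_{n+1})$ actually taken by the erased play and shows that the edge-restricted maximum $\maxg_i(\PotS{J}{\alpha}{u},u')$ drops: any maximal-cost play through that edge automatically avoids the branching vertex (the paper's $v'$, your $u'$), so it is erased by the very same branch --- no case analysis and no pigeonhole are needed. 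You instead show that the unrestricted maximum $\maxg_i(\PotS{J}{\alpha}{u})$ drops, which forces you to treat separately the surviving plays that \emph{follow} the branching vertex (your Case A, correctly dispatched by Lemma~\ref{lem:utile} together with the deviation inequality) and then to transfer the drop to a single edge using $\maxg_i(\PotS{J}{\alpha}{u})=\max_{(u,u'')\in E}\maxg_i(\PotS{J}{\alpha}{u},u'')$ and the monotonicity of each edge term. Both routes are sound; the paper's choice of edge makes the argument leaner, while yours establishes the (incomparable) additional fact that the vertex-level maximum itself decreases. Finally, your appeal to Corollary~\ref{cor:existence} for nonemptiness of $\PotS{J}{\alpha+1}{u}$ is legitimate --- that corollary rests on Theorems~\ref{thm:ReachSPE} and~\ref{thm:FolkThmWeakSPE}, not on this lemma, so there is no circularity --- but it is dispensable: if the survivor set were empty, the convention of~(\ref{eq:adapt}) would still record a strict drop.
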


\begin{proof}
The first statement is immediate from definition of $\barmaxg$. Let us prove the second statement. Consider $\rho  = \rho_0\rho_1 \ldots \in \EraseS{I}{\alpha}{v}$. Then there exist $i \in \Pi$, $n \in \IN$ and $v' \neq \rho_{n+1}$ with $\rho_n \in V_i$, $\chi(\rho_n) = J$,  $\chi(\rho_{n+1}) = J'$, such that $\forall \rho' \in \PotS{J'}{\alpha}{v'}$ we have $\g_i(\rho) > \g_i(\rho_0\ldots\rho_n\rho')$ or equivalently (by Lemma~\ref{lem:costreach})
\begin{eqnarray}\label{eq:profitnui}
\g_i(\rho) - (n+1) > \g_i(\rho'). 
\end{eqnarray}
(see the definition of $\ErasePlay_\alpha(hv)$ with $\Sh{I}{h} = I$ in Definition~\ref{def:Palpha} and Figure~\ref{fig:decore}). Notice that $i \in J'$. 
\begin{figure}[h!]
\begin{center}
\begin{tikzpicture}[initial text=,auto, node distance=1cm, shorten >=1pt, scale=0.8] 

\node[state, scale=0.5]               (1)    [below=2cm of 0]      {$v$};
\node[state, scale=0.5]               (2)    [below=1cm of 1]      {$u$};
\node[state, scale=0.5]               (3)    [below right=of 2]    {$v'$};
\node[state, scale=0.5]               (4)    [below=0.6cm of 2]      {$u'$};

\node[scale=0.6]                                        [right=0.5mm of 2]    {$\in V_i$};
\node[scale=0.6]                                        [left=0.5mm of 1]    {$I$};
\node[scale=0.6]                                        [left=0.5mm of 2]    {$J$};
\node[scale=0.6]                                        [left=0.5mm of 4]    {$J'$};

\node[scale=0.6] (fictif) at (-1,-10)  {$\rho \in \EraseS{I}{\alpha}{v}$};
\node[scale=0.6] (fictifbis) at (3,-10)  {$\forall \rho'$};
\node (fictif1bis) at (1,-10)  {};
\node[scale=0.6] (fictif2bis) at (5,-10)  {$\PotS{J'}{\alpha}{v'}$};

\path[-]  (1)  edge     [out=-80,in=100, thick]        node[midway, scale=0.6]  {}      (2)

          (2)  edge     [->, dashed]                node[left, scale=0.6]    {}      (3)

          (3)   edge    [out=-80,in=100, thick]                                         (fictifbis) 
				edge                                                             (fictif1bis)
		   		edge                                                             (fictif2bis)

          (2)   edge    [out=-80,in=90, thick]                                         (4)

          (4)   edge    [out=-80,in=100, thick]                                         (fictif);

\end{tikzpicture}
\end{center}
\caption{$\rho \in \EraseS{I}{\alpha}{v}$, with $\rho_n = u$ and $\rho_{n+1} = u'$.}
\label{fig:decore}
\end{figure}
Let us prove that $\barmaxg(\PotS{J}{\alpha}{u},u') \neq \barmaxg(\PotS{J}{\alpha +1}{u},u')$ with $u = \rho_n$ and $u' = \rho_{n+1}$. As $\rho \in \PotS{I}{\alpha}{v}$, then $\rho_{\geq n} \in \PotS{J}{\alpha}{u}$ by Lemma~\ref{lem:utile}. As $\g_i(\rho_{\geq n}) = \g_i(\rho) - n$, this implies that 
\begin{eqnarray}\label{eq:profitmax}
\maxg_i(\PotS{J}{\alpha}{u},u') \geq \g_i(\rho) - n.
\end{eqnarray}
Let $\varrho \in \PotS{J}{\alpha}{u}$ be such that $\varrho$ starts with edge $(u,u')$ and has maximal cost $\maxg_i(\PotS{J}{\alpha}{u},u')$. One gets 
$$\g_i(\varrho) = \maxg_i(\PotS{J}{\alpha}{u},u') \geq \g_i(\rho) - n > \g_i(\rho_n\rho')$$
by (\ref{eq:profitnui}) and (\ref{eq:profitmax}). 
By considering the set $\PotS{J'}{\alpha}{v'}$ in Figure~\ref{fig:decore}, it follows that $\varrho \in \EraseS{J}{\alpha}{u}$ for all such plays $\varrho$. Hence $\PotS{J}{\alpha +1}{u} \subsetneq \PotS{J}{\alpha}{u}$ and $\maxg_i(\PotS{J}{\alpha +1}{u},u') < \maxg_i(\PotS{J}{\alpha}{u},u')$. This completes the proof.
\end{proof}

We are now able to prove that the ordinal $\alpha_\ast$ of Proposition~\ref{prop:fixpoint} is an integer.

\begin{corollary} \label{cor:integer}
There exists an integer $\alpha_\ast$ such that $\PotS{I}{\alpha_\ast}{v} = \PotS{I}{\alpha_\ast +1}{v}$ for all $v \in V$ and $I \subseteq \Pi$.
\end{corollary}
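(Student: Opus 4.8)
The plan is to prove that the fixpoint is reached at a \emph{finite} stage by tracking, in place of the sets $\PotS{I}{\alpha}{v}$ themselves, the finitely many value vectors $\barmaxg(\PotS{I}{\alpha}{v},v')$, and by exploiting Lemma~\ref{lem:lien}, which couples changes of the sets to changes of these vectors. The decisive enabling fact is that there are only finitely many such vectors to watch: by Lemma~\ref{lem:hversI} the sets $\PotPlay_\alpha(hv)$ depend only on $(\Sh{I}{h},v)$, so for each fixed $\alpha$ there are at most $2^{|\Pi|}\cdot|V|$ distinct sets $\PotS{I}{\alpha}{v}$, hence finitely many triples $(I,v,v')$ with $(v,v')\in E$, and of course finitely many players $i\in\Pi$.

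First I would fix one such triple $(I,v,v')$ and a player $i$ and consider the sequence $b_n = \maxg_i(\PotS{I}{n}{v},v')$ restricted to \emph{finite} stages $n\in\IN$. Since $\PotS{I}{n+1}{v}\subseteq\PotS{I}{n}{v}$, the sequence $(b_n)_n$ is nonincreasing, and by Corollary~\ref{cor:const} each $b_n$ is a well-defined element of $\{-1\}\cup\IN\cup\{+\infty\}$. This poset, ordered by $-1<0<1<\cdots<+\infty$, is well-ordered (order type $\omega+1$), so it admits no infinite strictly descending chain; consequently the nonincreasing $\IN$-indexed sequence $(b_n)_n$ has only finitely many strict decreases and is eventually constant. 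Let $N_{I,v,v',i}\in\IN$ be a stage beyond which it is constant.

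Next I would take $N$ to be the maximum of these $N_{I,v,v',i}$ over the finitely many triples and players; this is a finite integer. By construction $\barmaxg(\PotS{I}{N}{v},v') = \barmaxg(\PotS{I}{N+1}{v},v')$ for every triple $(I,v,v')$. Now I invoke the contrapositive of the second item of Lemma~\ref{lem:lien}: if none of the vectors $\barmaxg(\PotS{J}{N}{u},u')$ changes between stages $N$ and $N+1$, then no set changes either, i.e. $\PotS{I}{N}{v} = \PotS{I}{N+1}{v}$ for all $v\in V$ and $I\subseteq\Pi$. As argued in Proposition~\ref{prop:fixpoint}, a one-step fixpoint propagates, so the sequence is constant for all $\beta\ge N$, and this finite $N$ is the required $\alpha_\ast$.

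The main obstacle is precisely to avoid reasoning about limit ordinals: a nonincreasing ordinal-indexed family of sets need not reach its fixpoint at a finite ordinal, and in principle a value $\maxg_i$ could drop from $+\infty$ to a finite number only at a limit stage (when passing to the intersection $\bigcap_{\beta<\alpha}$). The trick that sidesteps this entirely is to run the well-foundedness argument purely on the finite stages $n\in\IN$ — where eventual constancy of each of the finitely many value-sequences is automatic — and only afterwards to transfer this stabilization of the \emph{values} back to stabilization of the \emph{sets} via Lemma~\ref{lem:lien}. I would not need Proposition~\ref{prop:const}/König's lemma directly here beyond its packaging in Corollary~\ref{cor:const} guaranteeing that each $\maxg_i$ is a genuine maximum in $\IN\cup\{+\infty\}$; the heavy lifting is done once and for all by Lemma~\ref{lem:lien}.
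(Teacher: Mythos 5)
Your proof is correct and takes essentially the same route as the paper: the paper likewise tracks the finitely many nonincreasing sequences $(\barmaxg(\PotS{I}{\alpha}{v},v'))_\alpha$, obtains an integer stage $\alpha'_\ast$ at which they all stabilize (invoking that the component-wise order on $(\IN \cup \{-1,+\infty\})^{\Pi}$ is a well-quasi-ordering), and then concludes $\alpha_\ast \leq \alpha'_\ast$ via Lemma~\ref{lem:lien}. Your explicit unpacking of that well-quasi-ordering step into an eventually-constant argument for the $\IN$-indexed sequences, followed by the contrapositive of the second item of Lemma~\ref{lem:lien}, is just a more detailed rendering of the paper's argument.
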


\begin{proof}
Notice that there is a finite number of sequences $(\barmaxg(\PotS{I}{\alpha}{v},v'))_\alpha$ since they depend on $I \subseteq \Pi$ and $(v,v') \in E$. As the component-wise ordering over $(\IN \cup \{-1,+\infty\})^{\Pi}$ is a well-quasi-ordering and all these sequences are nonincreasing, there exists an integer (and not only an ordinal) $\alpha'_\ast$ such that $\barmaxg(\PotS{I}{\alpha'_\ast}{v},v') = \barmaxg(\PotS{I}{\alpha'_\ast +1}{v},v')$ for all $I \subseteq \Pi$ and $(v,v') \in E$. By Lemma~\ref{lem:lien}, we get that $\alpha_\ast \leq \alpha'_\ast$, showing that $\alpha_\ast \in \IN$.
\end{proof}

\subsection{The sets $\PotS{I}{\alpha}{v}$ are $\omega$-regular} \label{subsec:regular}

In this section, we prove that each set $\PotS{I}{\alpha}{v}$ is $\omega$-regular. Instead of providing the construction of a B\"uchi automaton (which would lead to many technical details), we prefer to show that each set $\PotS{I}{\alpha}{v}$ is MSO-definable. It is well-known that a set of $\omega$-words is $\omega$-regular iff it is MSO-definable, by B\"uchi theorem \cite{Thomas90}. Moreover from the B\"uchi automaton, one can construct an equivalent MSO-sentence, and conversely. One can also decide whether an MSO-sentence is satisfiable \cite{Thomas90}. We recall that MSO-logic uses:
\begin{itemize}
\item variables $x, y, \ldots$ ($X,Y, \ldots$ resp.) to describe a position (a set of positions resp.) in an $\omega$-word $\rho$, and relations $X(x)$ to mention that $x$ belongs to~$X$,
\item relations $Q_u(x)$, $u \in V$, to mention that such vertex $u$ is at position $x$ of $\rho$,
\item relations $x < y$ and $Succ(x,y)$ to mention that position $y$ is after position $x$, and position $y$ is successor of position $x$ respectively,
\item connectives $\vee, \wedge, \neg$ and quantifiers $\exists x$, $\forall x$, $\exists X$, $\forall X$
\end{itemize}
\noindent
Recall that constants $0, 1, \ldots$ are definable. We will use notation $x+1$ (and more generally $x+c$, with $c$ a constant) instead of $Succ(x,y)$.

\begin{proposition} \label{prop:regular}
Each $\PotS{I}{\alpha}{v}$ is an $\omega$-regular set.
\end{proposition}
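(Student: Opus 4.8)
The plan is to proceed by induction on $\alpha$, using B\"uchi's theorem (a language of $\omega$-words is $\omega$-regular iff it is MSO-definable) together with the fact, established in Corollary~\ref{cor:integer}, that the fixpoint is reached at an \emph{integer} $\alpha_\ast \in \IN$. This last point is what makes the induction feasible at all: since $\PotS{I}{\alpha}{v} = \PotS{I}{\alpha_\ast}{v}$ for every $\alpha \geq \alpha_\ast$ and every ordinal strictly below $\alpha_\ast$ is a natural number, no limit ordinal is ever encountered, and I would never need to take an infinite intersection (which would in general destroy $\omega$-regularity). It therefore suffices to prove, by ordinary induction on $\alpha \in \IN$, that each $\PotS{I}{\alpha}{v}$ is MSO-definable, uniformly in $I \subseteq \Pi$ and $v \in V$.

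For the base case $\alpha = 0$, the set $\PotS{I}{0}{v}$ is just the set of all plays issued from $v$, defined by the MSO sentence $Q_{v}(0) \wedge \forall x \bigvee_{(u,u') \in E} \big( Q_u(x) \wedge Q_{u'}(x+1) \big)$. For the inductive step I assume every $\PotS{J}{\alpha}{u}$ is MSO-definable and use the identity $\PotS{I}{\alpha+1}{v} = \PotS{I}{\alpha}{v} \setminus \EraseS{I}{\alpha}{v}$. As MSO-definable sets are closed under Boolean operations, it is enough to define $\EraseS{I}{\alpha}{v}$. Unfolding Definition~\ref{def:Palpha} in terms of the sets $\PotS{I}{\alpha}{v}$ and rewriting the cost inequality with Lemma~\ref{lem:costreach}, a play $\rho$ belongs to $\EraseS{I}{\alpha}{v}$ iff there are a position $n$, a player $i$ with $\rho_n \in V_i$, and a vertex $v'$ with $(\rho_n, v') \in E$ and $v' \neq \rho_{n+1}$ such that, setting $J' = I \cap \{ j : \Target_j \cap \{\rho_0, \ldots, \rho_n\} = \emptyset \}$ and requiring $i \in J'$, one has $\g_i(\rho) - (n+1) > \g_i(\rho')$ for all $\rho' \in \PotS{J'}{\alpha}{v'}$.

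The decisive step is to express this last, universally quantified, condition. By Corollary~\ref{cor:const}, the value $K := \maxg_i(\PotS{J'}{\alpha}{v'})$ is a fixed element of $\IN \cup \{+\infty\}$ and the supremum of the costs is attained; hence ``$\g_i(\rho) - (n+1) > \g_i(\rho')$ for all $\rho'$'' is equivalent to $\g_i(\rho) > n + 1 + K$ (and is unsatisfiable when $K = +\infty$). I would then assemble the formula as a finite disjunction over the triples $(i, v', J')$ --- finitely many, since $i$ and $v'$ range over $V$ and $J'$ over the subsets of $I$. For a triple with $K \in \IN$, the disjunct states, using one position variable $n$: that $\rho_n \in V_i$ and $(\rho_n, v') \in E$ (a disjunction of atoms $Q_w(n)$); that $\neg Q_{v'}(n+1)$; that $J'$ is the right decoration, i.e. for each $j \in I$, $j \in J'$ iff $\forall y (y \leq n \to \neg \bigvee_{u \in \Target_j} Q_u(y))$; and the cost condition $\g_i(\rho) > n+1+K$, which, as $\g_i(\rho)$ is the index of the first visit to $\Target_i$, is exactly $\forall y (y \leq n + K + 1 \to \neg \bigvee_{u \in \Target_i} Q_u(y))$, a legitimate MSO formula because $K+1$ is a constant. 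Each disjunct is thus MSO, so $\EraseS{I}{\alpha}{v}$ and hence $\PotS{I}{\alpha+1}{v}$ are MSO-definable.

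The hard part is precisely the universal quantification over $\rho' \in \PotS{J'}{\alpha}{v'}$: this is not a property of the single word $\rho$, so it cannot be transcribed verbatim into MSO over $\rho$. The whole point of Proposition~\ref{prop:const} (Corollary~\ref{cor:const}) is that it collapses to a comparison with a single constant $K$, because the relevant supremum is either attained or equal to $+\infty$; this is what turns a second-order quantification over an external language into a first-order bound on the first hitting time of $\Target_i$ along $\rho$. The only other point needing care --- the breakdown of the induction at limit ordinals, where $\PotS{I}{\alpha}{v}$ is an infinite intersection and $\omega$-regularity need not be preserved --- is sidestepped by the integrality of $\alpha_\ast$.
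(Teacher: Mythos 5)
Your proof is correct and follows essentially the same route as the paper's: induction on $\alpha \in \IN$ establishing MSO-definability, with the key step --- collapsing the universal quantification over $\rho' \in \PotS{J'}{\alpha}{v'}$ into the single comparison $\g_i(\rho) > n+1+\maxg_i(\PotS{J'}{\alpha}{v'})$ --- justified exactly as in the paper via Corollary~\ref{cor:const} (the supremum is attained as a max in $\IN \cup \{+\infty\}$). The only differences are presentational: you make explicit that Corollary~\ref{cor:integer} removes limit ordinals (the paper leaves this implicit by inducting over $\alpha \in \IN$), and you omit the computability of the constants (the paper's Lemma~\ref{lem:costmaxcomput}), which is not needed for bare $\omega$-regularity but is what makes the paper's construction effective for Theorem~\ref{thm:main}.
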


We begin with a lemma that states that if $\PotS{I}{\alpha}{v}$ is $\omega$-regular, then the maximum of its costs is computable.

\begin{lemma} \label{lem:costmaxcomput}
If $\PotS{I}{\alpha}{v}$ is MSO-definable, then $\barmaxg(\PotS{I}{\alpha}{v})$ is computable.
\end{lemma}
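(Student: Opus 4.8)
The plan is to compute the tuple $\barmaxg(\PotS{I}{\alpha}{v}) = (\maxg_i(\PotS{I}{\alpha}{v}))_{i \in \Pi}$ componentwise. For a player $i \notin I$ there is nothing to do, since $\maxg_i(\PotS{I}{\alpha}{v}) = -1$ by Definition~\ref{def:costmax}. So I would fix $i \in I$ and let $\psi$ be an MSO sentence with $L(\psi) = \PotS{I}{\alpha}{v}$, whose existence is exactly the hypothesis of the lemma. The value to be computed is then $\max \{ \g_i(\rho) \mid \rho \in \PotS{I}{\alpha}{v} \} \in \IN \cup \{+\infty\}$ (this is a genuine maximum in $\IN \cup \{+\infty\}$ by the remark following Corollary~\ref{cor:const}).

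First I would observe that the two relevant properties of a play $\rho$ are MSO- (indeed first-order-) definable over $V^\omega$. The property ``$\rho$ never visits $\Target_i$'', i.e. $\g_i(\rho) = +\infty$, is expressed by the sentence $\psi_\infty := \forall x\, \neg \bigvee_{u \in \Target_i} Q_u(x)$; and for each constant $c \in \IN$ the property ``$\g_i(\rho) \geq c$'', i.e. none of the first $c$ vertices of $\rho$ lies in $\Target_i$, is expressed by $\psi_{\geq c} := \forall x\, (x < c \rightarrow \neg \bigvee_{u \in \Target_i} Q_u(x))$ (recall that constants and $<$ are MSO-definable). Since MSO is closed under conjunction and MSO satisfiability over $\omega$-words is decidable by Büchi's theorem~\cite{Thomas90}, for any such sentence $\varphi$ I can effectively decide whether $L(\psi \wedge \varphi) \neq \emptyset$, that is, whether some play of $\PotS{I}{\alpha}{v}$ satisfies $\varphi$.

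The computation then proceeds in two stages. I would first test the satisfiability of $\psi \wedge \psi_\infty$. If it is satisfiable, some play of $\PotS{I}{\alpha}{v}$ has infinite cost for player~$i$, hence $\maxg_i(\PotS{I}{\alpha}{v}) = +\infty$, and I output $+\infty$. Otherwise every $\rho \in \PotS{I}{\alpha}{v}$ has $\g_i(\rho) < +\infty$, and I compute the exact maximum by a linear search: for $c = 0, 1, 2, \dots$ I test the satisfiability of $\psi \wedge \psi_{\geq c}$, and I return the largest $c$ for which it holds (equivalently, I stop at the first $c$ such that $\psi \wedge \psi_{\geq c+1}$ is unsatisfiable and output that $c$). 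Correctness is immediate, since $\psi \wedge \psi_{\geq c}$ is satisfiable precisely when $\max \{ \g_i(\rho) \mid \rho \in \PotS{I}{\alpha}{v} \} \geq c$.

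The one point requiring justification beyond routine MSO manipulation is the termination of this search, and this is exactly where Corollary~\ref{cor:const} (the reformulation of Proposition~\ref{prop:const}) is used: once we are in the case where all plays of $\PotS{I}{\alpha}{v}$ have finite cost for player~$i$, there is a uniform bound $\const$ with $\g_i(\rho) \leq \const$ for all such $\rho$, so $\psi \wedge \psi_{\geq c}$ becomes unsatisfiable for every $c > \const$ and the search halts. I expect this appeal to Corollary~\ref{cor:const} to be the main (and essentially only) non-mechanical ingredient; the remainder is the standard translation of reachability-type properties into MSO together with the decidability of MSO satisfiability over infinite words.
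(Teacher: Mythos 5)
Your proof is correct, and its overall skeleton matches the paper's: express the relevant cost constraints in MSO, use decidability of MSO satisfiability to test them against the sentence defining $\PotS{I}{\alpha}{v}$, first settle the $+\infty$ case, then search for the finite maximum. Where you genuinely diverge is in how the finite search is justified. The paper proves an explicit a priori bound: it takes a B\"uchi automaton $\cal B$ accepting $\PotS{I}{\alpha}{v}$ and shows by a pumping argument that, when finite, $\maxg_i(\PotS{I}{\alpha}{v})$ is strictly less than the number of states of $\cal B$ (a run witnessing a cost at least $n$ would contain a pumpable cycle before the target is reached, producing a play of the set with strictly larger cost, contradicting maximality); it then tests candidate values downward from $n-1$, so the algorithm performs an explicitly bounded number of satisfiability tests. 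You instead run an unbounded upward search and invoke Corollary~\ref{cor:const} (ultimately Proposition~\ref{prop:const}, via K\"onig's lemma) to guarantee that the search halts. This is a legitimate algorithm --- a procedure that provably terminates on every input computes the function even if no runtime bound is known in advance --- so your appeal to Corollary~\ref{cor:const} is sound and is indeed the one non-mechanical ingredient. The trade-off: your route avoids the automaton-theoretic pumping argument entirely and reuses a result the paper has already established, but it yields no effective bound on the value being computed or on the number of iterations, whereas the paper's pumping bound ties $\maxg_i$ to the size of the automaton, which is the kind of information one would want for any later complexity analysis. (One cosmetic slip: the remark that $\sup$ equals $\max$ follows Proposition~\ref{prop:const} in the paper, not Corollary~\ref{cor:const}, though the corollary implies it for the sets at hand.)
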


\begin{proof}
Before proving this lemma, we need to establish two properties. The first one states that one can decide whether $\PotS{I}{\alpha}{v}$ has a play $\rho$ with a given cost for a given player. The second one states that when $\maxg_i(\PotS{I}{\alpha}{v})$ is finite, then this number is bounded by the number of states of a B\"uchi automaton accepting $\PotS{I}{\alpha}{v}$.

\emph{(i)} Let $\const \in \IN \cup \{+\infty\}$ and $i \in I$. Let $\phi$ be an MSO-sentence 
defining $\PotS{I}{\alpha}{v}$. Let us show that one can decide whether $\PotS{I}{\alpha}{v}$ 
has a play $\rho$ with cost $\g_i(\rho) = \const$. There exists an MSO-sentence $\varphi$ 
expressing that $\g_i(\rho) = \const$. Indeed, if $\const = +\infty$, then $\varphi$ is the 
sentence $\forall x \cdot \neg (\vee_{u \in \Target_i} Q_u(x))$, and if $\const < +\infty$, 
it is the sentence $(\forall x < \const \cdot \neg (\vee_{u \in \Target_i} Q_u(x))) \wedge (\vee_{u \in \Target_i} Q_u(c))$. 
Therefore one can decide whether the MSO-sentence $\phi \wedge \varphi$ is satisfiable by some play $\rho$.

\emph{(ii)} Let $i \in I$ and suppose that $\maxg_i(\PotS{I}{\alpha}{v}) < +\infty$. Let $\cal B$ be a B\"uchi automaton accepting $\PotS{I}{\alpha}{v}$. We now show that $\maxg_i(\PotS{I}{\alpha}{v}) < n$ where $n$ is the number of states of $\cal B$. Assume the contrary and consider an accepting run $r = r_0r_1 \ldots $ of $\cal B$ on a play $\rho = \rho_0\rho_1\ldots \in \PotS{I}{\alpha}{v}$ with $\g_i(\rho) = \maxg_i(\PotS{I}{\alpha}{v})  \geq n$. The prefix $r_{\leq n}$ of $r$ has a cycle $r_k \ldots r_l$ with $0 \leq k < l \leq n$ and $r_k = r_l$. This cycle can be repeated once, while keeping an accepting run labeled by $\rho' = \rho_0 \ldots (\rho_k \ldots \rho_{l-1})^2 \rho_{\geq l}$. As $\g_i(\rho) \geq n$, it follows that $\g_i(\rho') = \maxg_i(\PotS{I}{\alpha}{v}) + (l-k)$. Therefore we get a contradiction with $\g_i(\rho) = \maxg_i(\PotS{I}{\alpha}{v})$.

Let us prove the lemma. By definition $\maxg_i(\PotS{I}{\alpha}{v})$ equals $-1$ if $i \not\in I$, and is thus computable in this case. Let $i \in I$. By \emph{(i)}, one can decide whether $\maxg_i(\PotS{I}{\alpha}{v}) = + \infty$. In case of a positive answer, $\maxg_i(\PotS{I}{\alpha}{v}) $ is thus computable. If the answer is negative, as $\maxg_i(\PotS{I}{\alpha}{v}) < n$ by \emph{(ii)}, we can similarly test whether $\maxg_i(\PotS{I}{\alpha}{v}) = \const$ by considering decreasing constants $\const$ from $n-1$ to~$0$. This prove that $\barmaxg(\PotS{I}{\alpha}{v})$ is computable.
\end{proof}

\begin{proof}[Proof of Proposition~\ref{prop:regular}] 
Let us prove that each set $\PotS{I}{\alpha}{v}$ is MSO-definable by induction on $\alpha$.

For $\alpha = 0$, recall that $\PotS{I}{0}{v}$ is the set of plays starting with $v$. 
The required sentence is thus $Q_v(0) \wedge \forall x \cdot \vee_{(u,u') \in E} (Q_u(x) \wedge Q_{u'}(x+1))$.

Let $\alpha \in \IN$ be a fixed integer. By induction hypothesis, each set $\PotS{I}{\alpha}{v}$ 
is MSO-definable, and by Lemma~\ref{lem:costmaxcomput}, $\barmaxg(\PotS{I}{\alpha}{v})$ is computable. 
These sets and constants can be considered as fixed. Let us prove that $\EraseS{I}{\alpha}{v}$ is MSO-definable. 
It will follow that $\PotS{I}{\alpha +1}{v}$ is also MSO-definable. Thanks to $\barmaxg(\PotS{I}{\alpha}{v})$, 
the definition of $\rho \in \EraseS{I}{\alpha}{v}$ can be rephrased as follows: there exist $n \in \IN$, $i \in I$, 
and $u,u',v' \in V$ with $u' \neq v'$, $(u,v') \in E$, such that $\rho_n = u \in V_i$, $\rho_{n+1} = u'$ , $\chi(\rho_{n+1}) = J'$, and 
\begin{eqnarray} \label{eq:greater}
\g_i(\rho) > \maxg_i(\PotS{J'}{\alpha}{v'}) + (n+1) 
\end{eqnarray}
(see Figure~\ref{fig:decore}). Notice that (\ref{eq:greater}) implies that $i \in J'$ and $\maxg_i(\PotS{J'}{\alpha}{v'}) < + \infty$. Moreover $\maxg_i(\PotS{J'}{\alpha}{v'})$ is a fixed integer.

Let us provide an MSO-sentence $\psi$ defining $\EraseS{I}{\alpha}{v}$. The next sentence $\phi_{J',n}$ expresses that $J' \subseteq I$ is the subset of players of $I$ that did not visit their target set along $\rho_{\leq n}$:
$$
\phi_{J',n} = \left( \forall x \cdot (x \leq n) \rightarrow \neg (\vee_{j \in J'} \vee_{r \in \Target_j} Q_r(x)) \right) \wedge \left( \wedge_{j \in I \setminus J'} \exists x\leq n \cdot \vee_{r \in \Target_j} Q_r(x) \right).
$$
The next sentence $\varphi_{J',n,v',i}$ expresses that if player $i$ visits its target set along $\rho$, it is after $\maxg_i(\PotS{J'}{\alpha}{v'}) + n+1$ edges from $\rho_0$:
$$
\varphi_{J',n,v',i} = \forall x \cdot \left( \vee_{r \in \Target_i}  Q_r(x) \rightarrow \left( x > \maxg_i(\PotS{J'}{\alpha}{v'}) + n+1 \right) \right).
$$
Notice that in the previous formula, $\maxg_i(\PotS{J'}{\alpha}{v'})$ is a constant since $\PotS{J'}{\alpha}{v'}$ is a fixed set. The required formula $\psi$ is then the following one: 
$$
\exists n \cdot \bigvee_{\substack{u,u' \neq v' \in V \\ (u,v') \in E}} \bigvee_{J' \subseteq I} \bigvee_{\substack{i \in J', u \in V_i \\ \maxg_i(\PotS{J'}{\alpha}{v'}) < +\infty}} \left( Q_u(n) \wedge Q_{u'}(n+1) \wedge \phi_{J',n} \wedge \varphi_{J',n,v',i} \right).
$$
\end{proof}

By Proposition~\ref{prop:regular}, the next corollary states that one can effectively extract a lasso play from $\PotS{I}{\alpha}{v}$ that has a maximal cost for a given player $i$.

\begin{corollary} \label{cor:lasso}
For all $i \in I$, each set $\PotS{I}{\alpha_\ast}{v}$ has a computable lasso play $h \ccdot g^{\omega}$ with $\g_i(h \ccdot g^{\omega}) = \maxg_i(\PotS{I}{\alpha_\ast}{v})$. This play depends on $i$, $I$, and $v$.
\end{corollary}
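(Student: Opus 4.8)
The plan is to combine Proposition~\ref{prop:regular} with the classical fact that every non-empty $\omega$-regular language contains a computable ultimately periodic (lasso) word. First I would fix a player $i \in I$ and a vertex $v$, and set $m := \maxg_i(\PotS{I}{\alpha_\ast}{v})$. By Proposition~\ref{prop:regular} the set $\PotS{I}{\alpha_\ast}{v}$ is $\omega$-regular, hence MSO-definable by some sentence $\phi$, and by Lemma~\ref{lem:costmaxcomput} the value $m \in \IN \cup \{+\infty\}$ is computable. Recall moreover, by the remark following Proposition~\ref{prop:const}, that this supremum is in fact \emph{attained}: there genuinely exists a play of $\PotS{I}{\alpha_\ast}{v}$ whose cost for player~$i$ equals~$m$.

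Next I would isolate the sublanguage of maximal-cost plays. Exactly as in part~\emph{(i)} of the proof of Lemma~\ref{lem:costmaxcomput}, there is an MSO-sentence $\varphi$ expressing $\g_i(\rho) = m$: when $m = +\infty$ it states that no vertex of $\Target_i$ is ever visited, and when $m < +\infty$ it states that the first visit to $\Target_i$ occurs precisely at position $m$. Conjoining $\varphi$ with $\phi$ yields a sentence $\phi \wedge \varphi$ defining
$$
L := \{ \rho \in \PotS{I}{\alpha_\ast}{v} \mid \g_i(\rho) = m \},
$$
which is therefore $\omega$-regular, and which is non-empty precisely because the maximum is attained.

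Finally I would extract the lasso. By B\"uchi's theorem~\cite{Thomas90} one can effectively build a B\"uchi automaton $\mathcal{B}$ recognizing $L$. Since $L \neq \emptyset$, a non-emptiness test on $\mathcal{B}$ returns an accepting run of lasso shape, namely a finite path from the initial state to an accepting state sitting on a reachable cycle. Reading off the vertices along the stem and the cycle produces an ultimately periodic play $h \ccdot g^{\omega} \in L$, so that $h \ccdot g^{\omega} \in \PotS{I}{\alpha_\ast}{v}$ with $\g_i(h \ccdot g^{\omega}) = m = \maxg_i(\PotS{I}{\alpha_\ast}{v})$. As this play starts in $v$ (by definition of $\PotS{I}{0}{v}$ and the fact that the sets decrease), and the whole construction is parametrized by $i$, $I$ and $v$ only, the resulting lasso play depends solely on these data, as claimed.

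The only delicate point — and the one deserving the most care — is the non-emptiness of $L$: a priori $\maxg_i(\PotS{I}{\alpha_\ast}{v})$ is a supremum and need not be realized by any single play, in which case no lasso of cost exactly $m$ could exist. This is exactly what the consequence of Proposition~\ref{prop:const} rules out, since there the supremum is shown to equal the maximum and to lie in $\IN \cup \{+\infty\}$. Everything else reduces to the effective and entirely standard translation between MSO-sentences, B\"uchi automata, and lasso witnesses of non-emptiness, so no further obstacle arises.
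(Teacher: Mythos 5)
Your proposal is correct and follows essentially the same route as the paper: part \emph{(i)} of the proof of Lemma~\ref{lem:costmaxcomput} gives MSO-definability (hence $\omega$-regularity) of the sublanguage of plays of maximal cost, and a lasso is then extracted from the corresponding B\"uchi automaton via the standard non-emptiness test. Your explicit justification that the supremum is attained (via the remark following Proposition~\ref{prop:const}), so that this sublanguage is non-empty, is a point the paper leaves implicit, but it is the same argument.
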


\begin{proof}
Part \emph{(i)} of the proof of Lemma~\ref{lem:costmaxcomput} indicates that the set of plays $\rho \in \PotS{I}{\alpha_\ast}{v}$ with maximal cost $\g_i(\rho) = \maxg_i(\PotS{I}{\alpha_\ast}{v})$ is $\omega$-regular. Therefore, from a B\"uchi automaton accepting this set, we can extract an accepted lasso play of the form $h \ccdot g^\omega$ with the required cost. Such a play depends on $i$, $I$, and $v$ ($\alpha_*$ is fixed).
\end{proof}

\subsection{Construction of a Finite-Memory SPE} \label{subsec:finite-mem}

Thanks to the results of Sections~\ref{subsec:PotSI}-\ref{subsec:regular}, we have all the ingredients to prove that each quantitative reachability game has a computable finite-memory SPE.

\begin{proof}[Proof of Theorem~\ref{thm:main}] 
Let $(\G,v_0)$ be a quantitative reachability game. Let us summarize the results obtained previously. By Corollary~\ref{cor:existence}, each set $\PotS{I}{\alpha_\ast}{v}$ is non-empty with $v \in V$ and $I = \Sh{I}{h}$ for some history $hv\in \Hist(v_0)$, and $\PotS{\Pi}{\alpha_\ast}{v_0}$ contains all the outcomes of SPEs in $(\G,v_0)$. By Corollary~\ref{cor:integer} and Proposition~\ref{prop:regular}, we know that $\alpha_\ast \in \IN$ and each $\PotS{I}{\alpha_\ast}{v}$ is an $\omega$-regular set that can be constructed. Finally by Corollary~\ref{cor:lasso}, for all $i \in I$, one can construct a lasso play $\hg{i}{I}{v} \in \PotS{I}{\alpha_\ast}{v}$ with maximal cost $\g_i(\hg{i}{I}{v}) = \maxg_i(\PotS{I}{\alpha_\ast}{v})$.

We now show how to construct a finite-memory SPE $\bar \sigma$ from the finite set of lasso plays $\hg{i}{I}{v}$. The procedure is similar to the one developed in the proof of Theorem~\ref{thm:FolkThmWeakSPE} and more particularly of Lemma~\ref{NotEmptyToSPE}. We indicate how to adapt the proof of this lemma. Again the construction of $\bar \sigma$ is done step by step, thanks to a labeling $\gamma$ of the non-empty histories.

Initially, none of the histories is labeled. We start with history $v_0$ and with any play $\hg{i}{\Pi}{v_0} \in  \PotS{\Pi}{\alpha_\ast}{v_0}$, $i \in \Pi$. The strategy profile $\bar\sigma$ is partially defined such that $\out{\bar\sigma}_{v_0} = \hg{i}{\Pi}{v_0} $, and the non-empty prefixes $h$ of $\hg{i}{\Pi}{v_0}$ are all labeled with $\gamma(h) = (i,\Pi,v_0)$.

At the following steps, we consider a history $h'v'$ that is not yet labeled, but such that $h'$ has already been labeled. By induction, $\gamma(h') = (j,I,v)$ and there exists $hv \leq h'$ such that $\out{\SB{\bar\sigma}{h}}_{v} = \hg{j}{I}{v}$. Suppose that $\Last(h') \in V_i$ and $\Sh{I}{h'}= J'$, the proof of Lemma~\ref{NotEmptyToSPE} requires to choose\footnote{This proof states that such a play always exists.} a play $\rho' \in \PotS{J'}{\alpha_\ast}{v'}$ such that $\g_i(h'\rho') \geq \g_i(h \ccdot \out{\SB{\bar\sigma}{h}}_{v})$ (see (\ref{eq:chosen})). We simply choose $\rho' = \hg{i}{J'}{v'}$ that has maximal cost $\g_i(\hg{i}{J'}{v'}) = \maxg_i(\PotS{J'}{\alpha_\ast}{v'})$. Then we continue the construction of $\bar\sigma$ such that $\out{\SB{\bar\sigma}{h'}}_{v'} = \hg{i}{J'}{v'}$, and for all non-empty prefixes $g$ of $\hg{i}{J'}{v'}$, we define $\gamma(h'g) = (i,J',v')$.

By the proof of Lemma~\ref{NotEmptyToSPE}, the strategy profile $\bar\sigma$ is an SPE. It is finite-memory since for all $h \in \Hist_i$, $\sigma_i(h)$ only depends on $\gamma(h) = (j,I,v)$ and $\hg{j}{I}{v}$. There is a finite number of lasso plays $\hg{j}{I}{v}$, and $\gamma(h)$ (as well as $\Sh{I}{h}$) can be computed inductively as follows. Initially, $\Sh{I}{v_0}= \Pi$, and $\gamma(v_0) = (i,\Pi,v_0)$ for some chosen $i \in \Pi$. Let $h' \in \Hist_i$ and suppose that $\Sh{I}{h'}= J'$ and $\gamma(h') = (j,I,v)$. Then $\Sh{I}{h'v'} = J' \setminus\{i \mid v' \in T_i\}$. If  $h'v'$ respects
$\hg{j}{I}{v}$, i.e. $\sigma_i(h') = v'$, then $\gamma(h'v') = (j,I,v)$. Otherwise $\gamma(h'v') = (i,J',v')$ with $(i,J',v')$ computed as in the previous paragraph. 
\end{proof}

\subsection{Constrained Existence}

It remains to prove the decidability of the constrained existence of SPE for quantitative reachability games, as announced in Corollary~\ref{cor:main}. This result is easily proved on the basis of some previous properties.

\begin{proof}[Proof of Corollary~\ref{cor:main}] 
Let $(\G,v_0)$ be a game and let $\bar \const \in \IN^{|\Pi|}$ be a constant vector. In the proof of Lemma~\ref{lem:costmaxcomput}, we have seen that there exists an MSO-sentence expressing that play $\rho$ has a fixed cost $\g_i(\rho) = \const_i$. Similarly, one can express that $\g_i(\rho) \leq \const_i$ by the next sentence~$\varphi_i$: $\exists x \leq \const_i \cdot (\vee_{u \in \Target_i} Q_u(x))$. By Proposition~\ref{prop:regular}, we know that the set $\PotS{\Pi}{\alpha_\ast}{v_0}$ of outcomes of SPEs in $(\G,v_0)$ is an $\omega$-regular set, and that one can construct an MSO-sentence $\phi$ defining it. Therefore the set of outcomes of SPEs with a cost component-wise bounded by $\bar \const$ is definable by $ \wedge_{i \in \Pi} \varphi_i \wedge \phi$, and is then $\omega$-regular. Moreover, one can decide whether this set is non-empty. In case of positive answer, it contains a lasso play $h \ccdot g^\omega$. Exactly as done in Section~\ref{subsec:finite-mem}, one can construct a finite-memory SPE $\bar \sigma$ such that $\out{\bar \sigma}_{v_0} = h \ccdot g^\omega$. This concludes the proof. 
\end{proof}

\section{Games with Prefix-independent Regular Cost Functions} \label{section:prefixind}

In this section, we present a class of games for which it is decidable whether there exists a weak SPE.\footnote{Contrarily to quantitative reachability games, we do not know if a weak SPE always exists for games in this class.} The hypotheses are general conditions on the cost functions $\g_i$, $i \in \Pi$: each function $\g_i$ must be prefix-independent (see Definition~\ref{def:kindCost}), $\g_i$ has to use a finite number of values (gathered in set $\Range_i$), and the set of plays $\rho$ with a given cost $\g_i(\rho) = \const_i$ must be $\omega$-regular. 

\begin{theorem} \label{thm:existsWeakSPE}
Let $(\G,v_0)$ be an initialized game such that:
\begin{itemize}
\item each cost function $\g_i$ is prefix-independent, and with finite range $\Range_i \subset \IQ$,
\item for all $i \in \Pi$, $\const_i \in \Range_i$, and $v \in V$, the set of plays $\rho$ in $(\G,v)$ with $\g_i(\rho) = \const_i$ is an $\omega$-regular set.
\end{itemize}
\noindent
Then one can decide whether $(\G,v_0)$ has a weak SPE $\bar \sigma$ (resp. such that $\g_i(\out{\bar \sigma}_{v_0}) \leq \const_i$ forall $i$ for given $\const_i \in \Range_i$, $i \in \Pi$). In case of positive answer, one can construct such a finite-memory weak SPE.
\end{theorem}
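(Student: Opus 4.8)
The plan is to mirror the development of Section~\ref{section:reach}, replacing the bookkeeping of reached targets by the far simpler behaviour of prefix-independent costs. First I would establish the analogue of Lemma~\ref{lem:hversI}: since each $\g_i$ is prefix-independent, the set $\PotPlay_\alpha(hv)$ does not depend on $h$ at all, only on $v$. This is a straightforward induction on $\alpha$, because in the erasing condition of Definition~\ref{def:Palpha} the inequality $\g_i(h\rho) > \g_i(h'\rho')$ collapses to $\g_i(\rho) > \g_i(\rho')$, so no prefix is ever consulted. Writing $\PotPlay_\alpha(v)$ for the common value, for each $\alpha$ there are only $|V|$ distinct sets, and $\rho \in \ErasePlay_\alpha(v)$ reduces to: there is a position $n$ with $\rho_n \in V_i$ and a successor $v' \neq \rho_{n+1}$ of $\rho_n$ such that $\g_i(\rho) > M_i(\alpha,v')$, where $M_i(\alpha,v') = \max\{\g_i(\rho') \mid \rho' \in \PotPlay_\alpha(v')\}$ (the maximum exists because $\Range_i$ is finite, and we set it to a bottom value $-\infty$ when the set is empty, which faithfully renders the vacuously-true universal quantifier).

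Next I would prove that the fixpoint of Proposition~\ref{prop:fixpoint} is reached at an integer $\alpha_\ast$. Following Lemma~\ref{lem:lien}, I track the per-edge profile $\maxg_i(\PotPlay_\alpha(v),v') = \max\{\g_i(\rho) \mid \rho \in \PotPlay_\alpha(v),\ \rho_0\rho_1 = vv'\}$. It is nonincreasing in $\alpha$ and, because every $\Range_i$ is finite, it lives in the finite set $\prod_{i,(v,v')\in E}(\Range_i \cup \{-\infty\})$; hence it stabilises after finitely many steps. The crux is the analogue of the second item of Lemma~\ref{lem:lien}: if $\PotPlay_{\alpha+1}(v) \subsetneq \PotPlay_\alpha(v)$, then some per-edge value strictly drops. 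Indeed, take $\rho \in \ErasePlay_\alpha(v) \cap \PotPlay_\alpha(v)$ erased at position $n$ with $u = \rho_n \in V_i$ and witness successor $v'$; by Lemma~\ref{lem:utile} the suffix $\rho_{\geq n}$ lies in $\PotPlay_\alpha(u)$ and starts with edge $(u,\rho_{n+1})$, so $\maxg_i(\PotPlay_\alpha(u),\rho_{n+1}) \geq \g_i(\rho_{\geq n}) = \g_i(\rho) > M_i(\alpha,v')$. The same inequality shows that \emph{every} cost-maximal play of $\PotPlay_\alpha(u)$ starting with $(u,\rho_{n+1})$ deviates profitably to $v'$ and is thus itself erased, so $\maxg_i(\PotPlay_\alpha(u),\rho_{n+1})$ strictly decreases at the next step. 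By contraposition, once the finite profile is stable the sets are stable, giving $\alpha_\ast \in \IN$.

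Then I would show, by induction on $\alpha \leq \alpha_\ast$, that each $\PotPlay_\alpha(v)$ is $\omega$-regular, exactly as in Proposition~\ref{prop:regular}. For $\alpha = 0$ it is the set of plays from $v$. For the successor step, prefix-independence turns the condition $\g_i(\rho) > M_i(\alpha,v')$ into membership of $\rho$ in the $\omega$-regular set $\bigcup_{c \in \Range_i,\, c > M_i(\alpha,v')} \{\rho \mid \g_i(\rho) = c\}$, which is definable by hypothesis; each constant $M_i(\alpha,v')$ is computable by testing, for the finitely many $c \in \Range_i$, whether the $\omega$-regular set $\PotPlay_\alpha(v') \cap \{\rho \mid \g_i(\rho) = c\}$ is non-empty. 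Quantifying existentially over the deviation position and disjoining over $i$, $u$, $v'$ yields an MSO sentence for $\ErasePlay_\alpha(v)$, hence for $\PotPlay_{\alpha+1}(v)$. No limit stage is needed since $\alpha_\ast$ is an integer.

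Finally, the decision procedure and the construction follow Theorem~\ref{thm:FolkThmWeakSPE} and Section~\ref{subsec:finite-mem}. I would compute $\PotPlay_0(v), \PotPlay_1(v), \ldots$ until the per-edge profile stabilises (detecting $\alpha_\ast$), then decide existence by checking that $\PotPlay_{\alpha_\ast}(v) \neq \emptyset$ for every $v$ reachable from $v_0$; for the constrained version I additionally intersect $\PotPlay_{\alpha_\ast}(v_0)$ with the $\omega$-regular set $\bigcap_i \bigcup_{c \in \Range_i,\, c \leq \const_i} \{\rho \mid \g_i(\rho) = c\}$ and test emptiness. A finite-memory weak SPE is then built by the labelling argument of Lemma~\ref{NotEmptyToSPE}: at a history ending in $V_i$ one selects, as witness outcome from $\PotPlay_{\alpha_\ast}(v')$, a lasso play of maximal $\g_i$-cost (extractable from the corresponding $\omega$-regular set), which exists and satisfies~(\ref{eq:chosen}) precisely because the set is at the fixpoint. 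Since $\PotPlay_{\alpha_\ast}$ depends only on the current vertex, the memory reduces to the pair (currently punished player, current vertex) together with the position in a finite lasso, so the resulting profile is finite-memory. The main obstacle is the fixpoint step: reducing the general prefix-independent erasing condition to a threshold on the play's own cost, and verifying that erasing forces a strict drop of a per-edge maximum living in a finite space. This is exactly where finiteness of the ranges $\Range_i$ is essential, and it replaces the well-quasi-ordering argument used for quantitative reachability.
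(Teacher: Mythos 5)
Your proposal is correct and follows essentially the same route as the paper's proof: the prefix-independence collapse to sets $\PotPlay_\alpha(v)$ (the paper's Lemma~\ref{lem:prefixind}), the per-edge maximum profiles living in a finite poset and forcing an integer fixpoint (Lemma~\ref{lem:lienbis} and Corollary~\ref{cor:integerbis}), the MSO-definability induction with computable maxima (Lemma~\ref{lem:regularbis}), and the lasso-based labelling construction of a finite-memory weak SPE as in Lemma~\ref{NotEmptyToSPE}. The only minor divergence is your treatment of empty witness sets $\PotPlay_\alpha(v')$ (vacuous erasure via the $-\infty$ threshold), which is in fact more faithful to Definition~\ref{def:Palpha} than the paper's restriction to non-empty sets, though both yield the same decision procedure since emptiness anywhere already refutes existence.
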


For example, the hypotheses of this theorem are satisfied by the liminf games and the limsup games; they are also satisfied by the game of Example~\ref{ex:contrex}. We will see that the proof of this decidability result shares similar points with the proof given in the previous section for quantitative reachability games. Again, we will use the Folk Theorem for weak SPEs (see Theorem~\ref{thm:FolkThmWeakSPE}) to prove this result. The main steps of the proof are the following ones. 
\begin{itemize}
\item Given $\alpha$, the infinite number of sets $\PotPlay_\alpha(hv)$ can be replaced by the finite number of sets $\PotPlay_\alpha(v)$.
\item The fixpoint of Proposition~\ref{prop:fixpoint} is reached with some natural number $\alpha_\ast \in \IN$.
\item Each $\PotPlay_\alpha(v)$ is an $\omega$-regular set. Therefore there exists an algorithm to construct the sets $\PotPlay_{\alpha_\ast}(v)$ for all $v \in V$, and thus to decide whether they are all non-empty. For given constants $\const_i \in \Range_i$, $i \in \Pi$, one can also decide whether $\PotPlay_{\alpha_\ast}(v_0)$ has a play $\rho$ with bounded cost $\g_i(\rho) \leq \const_i$ for all $i$.
\item In case of positive answer, some lasso plays of the sets $\PotPlay_{\alpha_\ast}(v)$ allow to construct a finite-memory weak SPE (resp. with bounded cost).
\end{itemize}

To establish Theorem~\ref{thm:existsWeakSPE}, we prove a series of lemmas. The first lemma states that $\PotPlay_\alpha(hv)$ is independent of $h$. There is thus a finite number of sets  $\PotPlay_\alpha(v)$, $v \in V$, to study.

\begin{lemma} \label{lem:prefixind}
$\PotPlay_\alpha(hv) = \PotPlay_\alpha(v)$ for all $hv \in \Hist(v_0)$.
\end{lemma}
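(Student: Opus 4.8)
The plan is to prove the identity by transfinite induction on $\alpha$, mirroring the structure of Lemma~\ref{lem:hversI} but exploiting prefix-independence in place of the length shift used there. The key observation is that, since each $\g_i$ is prefix-independent, $\g_i(h\rho) = \g_i(\rho)$ and $\g_i(h'\rho') = \g_i(\rho')$ for any prefixes; consequently every cost comparison occurring in the definition of $\ErasePlay_\alpha$ no longer depends on the prefix $h$ at all. For the base case $\alpha = 0$, both $\PotPlay_0(hv)$ and $\PotPlay_0(v)$ are, by \eqref{eq:P0}, the set of all plays starting in $v$ (the graph of $\SB{\G}{h}$ coincides with that of $\G$), so they are equal.

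For a successor ordinal $\alpha + 1$, assuming the induction hypothesis $\PotPlay_\alpha(hv) = \PotPlay_\alpha(v)$ for every history, I would show $\ErasePlay_\alpha(hv) = \ErasePlay_\alpha(v)$; the equality of the $\PotPlay_{\alpha+1}$ sets then follows from \eqref{eq:Palpha}. Take $\rho \in \ErasePlay_\alpha(hv)$, witnessed by a history $h'$ with $hv \leq h' < h\rho$ and $\Last(h') \in V_i$, and a vertex $v'$ with $h'v' \not< h\rho$. Write $h' = hg$, so that $g$ is a (proper) prefix of $\rho$ starting in $v$ and ending in a vertex of $V_i$, and $gv'$ is a history with $gv' \not< \rho$. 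By the induction hypothesis, $\PotPlay_\alpha(h'v') = \PotPlay_\alpha(v') = \PotPlay_\alpha(gv')$, and by prefix-independence the defining inequality $\g_i(h\rho) > \g_i(h'\rho')$ is equivalent to $\g_i(\rho) > \g_i(\rho')$. Hence the witnesses $g$ (in place of $h'$) and $v'$ show that $\rho \in \ErasePlay_\alpha(v)$. The converse is symmetric: a witness $g$ of $\rho \in \ErasePlay_\alpha(v)$ yields the witness $h' = hg$ of $\rho \in \ErasePlay_\alpha(hv)$. This gives $\ErasePlay_\alpha(hv) = \ErasePlay_\alpha(v)$ and therefore $\PotPlay_{\alpha+1}(hv) = \PotPlay_{\alpha+1}(v)$.

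For a limit ordinal $\alpha$, the identity is immediate from \eqref{eq:Pbeta} and the induction hypothesis, since $\PotPlay_\alpha(hv) = \bigcap_{\beta < \alpha} \PotPlay_\beta(hv) = \bigcap_{\beta < \alpha} \PotPlay_\beta(v) = \PotPlay_\alpha(v)$. The only delicate step is thus the successor case, and there the main thing to get right is the bookkeeping that matches a witnessing history $h'$ extending $hv$ with the corresponding prefix $g$ of $\rho$ seen from $v$, so that the induction hypothesis applies to $\PotPlay_\alpha(h'v') = \PotPlay_\alpha(gv')$; once prefix-independence strips the prefixes from every cost in the comparison, the two erasure conditions become literally identical.
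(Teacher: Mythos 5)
Your proof is correct and follows exactly the route the paper intends (and merely sketches): transfinite induction on $\alpha$, with prefix-independence stripping the prefix from every cost comparison in the definition of $\ErasePlay_\alpha$, structurally mirroring the proof of Lemma~\ref{lem:hversI}. The only point worth making explicit is that the induction hypothesis must be understood as holding for all initialized games $(\G,u)$, $u \in V$, simultaneously, since the set $\PotPlay_\alpha(gv')$ you compare against lives in $(\G,v)$ rather than $(\G,v_0)$ --- which your bookkeeping with $h' = hg$ already handles.
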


\begin{proof}
The proof can be easily done by induction on $\alpha$. It uses the definition of $\ErasePlay_\alpha(hv)$ and the hypothesis of Theorem~\ref{thm:existsWeakSPE} that each cost function $\g_i$ is prefix-independent. 
\end{proof}

As each cost function $\g_i$ is supposed to have finite range in Theorem~\ref{thm:existsWeakSPE}, we can give the next definition that indicates the maximum costs for plays in $\PotPlay_{\alpha}(v)$ (resp. starting with $vv'$, for some given $(v,v') \in E$). Recall that a similar definition was given in case of quantitative reachability games (see Definition~\ref{def:costmax} and~(\ref{eq:adapt})).

\begin{definition} \label{def:costmaxbis}
Given $\PotPlay_\alpha(v)$ and $(v,v') \in E$, we define for each $i \in \Pi$:
\begin{itemize}
\item $\maxg_i(\PotPlay_\alpha(v)) =  \max \{ \g_i(\rho) \mid \rho \in \PotPlay_\alpha(v) \}.$
\item $\maxg_i(\PotPlay_\alpha(v),v') = \max \{ \g_i(\rho) \mid \rho \in \PotPlay_\alpha(v) \mbox{ and } \rho_0\rho_1 = vv' \}.$
\end{itemize}
\end{definition}

\noindent 
In this definition, the $\max$ is equal to $-\infty$ if it applies to an empty set.

The sequence $(\barmaxg(\PotPlay_\alpha(v),v'))_\alpha$ is nonincreasing for the component-wise ordering over $(\IQ \cup \{-\infty\})^{|\Pi|}$.\footnote{More precisely each component in $(\IQ \cup \{-\infty\})^{|\Pi|}$ is restricted to $\Range_i \cup \{-\infty\}$.} Therefore it reaches a fixpoint as for $(\PotPlay_\alpha(v))_\alpha$. The following lemma relates these sequences.

\begin{lemma} \label{lem:lienbis}
\begin{itemize}
\item If $\PotPlay_\alpha(v) = \PotPlay_{\alpha +1}(v)$, then for all $(v,v') \in E$, $\barmaxg(\PotPlay_\alpha(v),v') = \barmaxg(\PotPlay_{\alpha + 1}(v),v')$.
\item If $\PotPlay_\alpha(v) \neq \PotPlay_{\alpha + 1}(v)$, then there exists $(u,u') \in E$ such that $\barmaxg(\PotPlay_\alpha(u),u') \neq \barmaxg(\PotPlay_{\alpha + 1}(u),u')$.
\end{itemize} 
\end{lemma}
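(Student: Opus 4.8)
The plan is to mirror the proof of Lemma~\ref{lem:lien}, exploiting prefix-independence together with Lemma~\ref{lem:prefixind} to erase the dependence on histories. The first item is immediate: if $\PotPlay_\alpha(v) = \PotPlay_{\alpha+1}(v)$, then for every edge $(v,v') \in E$ the set over which the maxima defining $\barmaxg(\PotPlay_\alpha(v),v')$ and $\barmaxg(\PotPlay_{\alpha+1}(v),v')$ are taken (the plays of $\PotPlay_\alpha(v)$ starting with $vv'$) is literally the same, so the two vectors coincide. All the work is in the second item.

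For the second item, suppose $\PotPlay_\alpha(v) \neq \PotPlay_{\alpha+1}(v)$, so that $\ErasePlay_\alpha(v)$ is non-empty, and pick $\rho = \rho_0\rho_1\ldots \in \ErasePlay_\alpha(v)$. Unfolding the definition of $\ErasePlay_\alpha(v)$ (Definition~\ref{def:Palpha}) and writing the witnessing history as $h' = \rho_{\leq n}$, there are a player~$i$ with $\rho_n \in V_i$ and an alternative successor $v' \neq \rho_{n+1}$ such that $\g_i(\rho) > \g_i(h'\rho')$ for every $\rho' \in \PotPlay_\alpha(h'v')$. By Lemma~\ref{lem:prefixind} this set equals $\PotPlay_\alpha(v')$, and by prefix-independence $\g_i(h'\rho') = \g_i(\rho')$, so the condition reads $\g_i(\rho) > \maxg_i(\PotPlay_\alpha(v'))$. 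I set $u = \rho_n$ and $u' = \rho_{n+1}$, so that $(u,u') \in E$ is the edge I will exhibit.

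The core step is to show that \emph{every} play witnessing the maximum $\maxg_i(\PotPlay_\alpha(u),u')$ is erased at step~$\alpha$. First, Lemma~\ref{lem:utile} combined with Lemma~\ref{lem:prefixind} gives $\rho_{\geq n} \in \PotPlay_\alpha(u)$, and prefix-independence gives $\g_i(\rho_{\geq n}) = \g_i(\rho) > \maxg_i(\PotPlay_\alpha(v'))$; since $\rho_{\geq n}$ starts with the edge $(u,u')$, this already yields $\maxg_i(\PotPlay_\alpha(u),u') \geq \g_i(\rho_{\geq n}) > \maxg_i(\PotPlay_\alpha(v'))$. Now take any $\varrho \in \PotPlay_\alpha(u)$ starting with $(u,u')$ and attaining $\g_i(\varrho) = \maxg_i(\PotPlay_\alpha(u),u')$. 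Using the trivial witness history $u$ (whose last vertex $u = \rho_n$ lies in $V_i$) and the same alternative successor $v'$, the chain $\g_i(\varrho) \geq \g_i(\rho_{\geq n}) > \maxg_i(\PotPlay_\alpha(v')) \geq \g_i(\rho')$ for all $\rho' \in \PotPlay_\alpha(v')$ shows $\varrho \in \ErasePlay_\alpha(u)$ (once more invoking prefix-independence to pass from $\g_i(\rho')$ to $\g_i(u\rho')$).

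Consequently every play of $\PotPlay_\alpha(u)$ that starts with $(u,u')$ and realizes the maximal cost is removed, so $\maxg_i(\PotPlay_{\alpha+1}(u),u') < \maxg_i(\PotPlay_\alpha(u),u')$ (strictly, or the residual set even becoming empty with value $-\infty$). In particular $\barmaxg(\PotPlay_\alpha(u),u') \neq \barmaxg(\PotPlay_{\alpha+1}(u),u')$, which is the desired conclusion. I expect the only delicate point to be the bookkeeping guaranteeing that the maximum genuinely drops, i.e.\ that no surviving play of $\PotPlay_{\alpha+1}(u)$ starting with $(u,u')$ can still reach the old maximal value; this is exactly what the ``erase all maximizers'' argument above secures, and it relies crucially on prefix-independence to transport the profitability of the deviation from $\rho$ to every maximizer $\varrho$.
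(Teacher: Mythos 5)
Your proof is correct and takes essentially the same route as the paper's: pick an erased play $\rho \in \ErasePlay_\alpha(v)$, transfer its suffix into $\PotPlay_\alpha(u)$ via Lemma~\ref{lem:utile}, and show that every play attaining $\maxg_i(\PotPlay_\alpha(u),u')$ is itself erased using the same witness vertex $v'$ and prefix-independence, so the maximum strictly drops. Your added care about the one-vertex witness history $u$ and the empty-set/$-\infty$ case only makes explicit what the paper leaves implicit.
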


\begin{proof}
The proof is similar to the proof of Lemma~\ref{lem:lien}. Without mentioning it, we will repeatedly use Lemma~\ref{lem:prefixind} and the hypothesis of Theorem~\ref{thm:existsWeakSPE} that the cost functions are prefix-independent.

The first statement is immediate from definition of $\barmaxg$. For the second statement, consider $\rho  = \rho_0\rho_1 \ldots \in \ErasePlay_\alpha(v)$. Then by definition of $\ErasePlay_\alpha(v)$, there exist $n \in \IN$, $i \in \Pi$, 
and $u,u',v' \in V$ with $u' \neq v'$, $(u,v') \in E$, such that $\rho_n = u \in V_i$, $\rho_{n+1} = u'$, and
$$\forall \rho' \in \PotPlay_{\alpha}(v'): \g_i(\rho) > \g_i(\rho').$$ 
(See Figure~\ref{fig:decore} adapted to the context of Lemma~\ref{lem:lienbis}). Let us prove that $\barmaxg(\PotPlay_{\alpha}(u),u') \neq \barmaxg(\PotPlay_{\alpha + 1}(u),u')$. As $\rho \in \PotPlay_{\alpha}(v)$, then $\rho_{\geq n} \in \PotPlay_{\alpha}(u)$ by Lemma~\ref{lem:utile}, which implies that $\maxg_i(\PotPlay_{\alpha}(u),u') \geq \g_i(\rho_{\geq n})$. 
Let $\varrho \in \PotPlay_{\alpha}(u)$ be such that $\varrho$ starts with edge $(u,u')$ and has maximal cost $\maxg_i(\PotPlay_{\alpha}(u),u')$. One gets 
$$\g_i(\varrho) = \maxg_i(\PotPlay_{\alpha}(u),u') \geq \g_i(\rho_{\geq n}) = \g_i(\rho) > \g_i(\rho').$$ 
Hence, using the same set $\PotPlay_{\alpha}(v')$ as for $\rho$, it follows that $\varrho \in \ErasePlay_{\alpha}(u)$ for all such plays $\varrho$. Therefore $\PotPlay_{\alpha +1}(u) \subsetneq \PotPlay_{\alpha}(u)$ and $\maxg_i(\PotPlay_{\alpha}(u),u') < \maxg_i(\PotPlay_{\alpha +1}(u),u')$.
\end{proof}

As a consequence, the ordinal $\alpha_\ast$ of Proposition~\ref{prop:fixpoint} is an integer. The proof is the same as for Corollary~\ref{cor:integer}.

\begin{corollary} \label{cor:integerbis}
There exists an integer $\alpha_\ast$ such that $\PotPlay_{\alpha_\ast}(v) = \PotPlay_{\alpha_\ast +1}(v)$ for all $v \in V$.
\end{corollary}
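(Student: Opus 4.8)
The plan is to mirror exactly the argument used for Corollary~\ref{cor:integer}, exploiting that here the state space of the auxiliary quantity $\barmaxg$ is \emph{finite} rather than merely well-quasi-ordered. First I would invoke Lemma~\ref{lem:prefixind} to record that each $\PotPlay_\alpha(hv)$ depends only on $v$, so that the relevant data are the finitely many sets $\PotPlay_\alpha(v)$ indexed by $v \in V$, and the finitely many sequences $(\barmaxg(\PotPlay_\alpha(v),v'))_\alpha$ indexed by the edges $(v,v') \in E$. This finiteness of the index set is the first ingredient.

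The second ingredient is the value space. Each component $\maxg_i(\PotPlay_\alpha(v),v')$ lives, by Definition~\ref{def:costmaxbis} and the finite-range hypothesis of Theorem~\ref{thm:existsWeakSPE}, in the finite set $\Range_i \cup \{-\infty\}$; hence each sequence takes values in the finite partially ordered set $\prod_{i \in \Pi} (\Range_i \cup \{-\infty\})$ with the component-wise ordering. As already noted just before Lemma~\ref{lem:lienbis}, each such sequence is nonincreasing for this ordering. In a \emph{finite} poset any nonincreasing sequence must be eventually constant, and it stabilizes after a number of steps bounded by the (finite) length of the longest strictly decreasing chain; consequently there is an \emph{integer} $\alpha'_\ast$ at which all of these finitely many sequences have simultaneously stabilized, i.e.\ $\barmaxg(\PotPlay_{\alpha'_\ast}(v),v') = \barmaxg(\PotPlay_{\alpha'_\ast+1}(v),v')$ for every edge $(v,v')\in E$.

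It then remains to transfer stabilization of the $\barmaxg$-sequences back to stabilization of the sets $\PotPlay_\alpha(v)$ themselves, and this is where the second statement of Lemma~\ref{lem:lienbis} is used in contrapositive form: if some $\PotPlay_{\alpha'_\ast}(v) \neq \PotPlay_{\alpha'_\ast+1}(v)$, then there would exist an edge $(u,u')$ with $\barmaxg(\PotPlay_{\alpha'_\ast}(u),u') \neq \barmaxg(\PotPlay_{\alpha'_\ast+1}(u),u')$, contradicting the choice of $\alpha'_\ast$. Therefore $\PotPlay_{\alpha'_\ast}(v) = \PotPlay_{\alpha'_\ast+1}(v)$ for all $v$, so the fixpoint ordinal $\alpha_\ast$ of Proposition~\ref{prop:fixpoint} satisfies $\alpha_\ast \leq \alpha'_\ast$, giving $\alpha_\ast \in \IN$ as required.

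I do not expect a genuine obstacle here, since the argument is a verbatim adaptation of Corollary~\ref{cor:integer}; the only point deserving care is that the well-quasi-ordering argument used in the reachability case (where costs range over the infinite but well-ordered set $\IN \cup \{-1,+\infty\}$) is replaced by the even simpler observation that the value space $\prod_{i} (\Range_i \cup \{-\infty\})$ is outright finite, so stabilization at an integer step is immediate. The one thing to state cleanly is the direction of the implication in Lemma~\ref{lem:lienbis}, making sure to use its second bullet contrapositively rather than its first.
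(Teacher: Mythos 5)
Your proof is correct and follows essentially the paper's own route: the paper proves Corollary~\ref{cor:integerbis} by declaring it ``the same as for Corollary~\ref{cor:integer}'', i.e.\ finitely many nonincreasing sequences $(\barmaxg(\PotPlay_\alpha(v),v'))_\alpha$ indexed by edges (via Lemma~\ref{lem:prefixind}), simultaneous stabilization at an integer step, and then the second statement of Lemma~\ref{lem:lienbis} used contrapositively to transfer stabilization back to the sets $\PotPlay_\alpha(v)$. Your one refinement --- replacing the well-quasi-ordering argument by the outright finiteness of the value space $\prod_{i \in \Pi}(\Range_i \cup \{-\infty\})$ --- is exactly what the paper's footnote before Lemma~\ref{lem:lienbis} records, so there is no substantive divergence.
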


As done for quantitative reachability games, let us now prove that the sets $\PotPlay_\alpha(v)$ are $\omega$-regular for all $\alpha$ and $v$. 

\begin{lemma} \label{lem:regularbis}
Each $\PotPlay_\alpha(v)$ is an $\omega$-regular set.
\end{lemma}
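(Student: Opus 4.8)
The plan is to prove the statement by induction on $\alpha$, mirroring the strategy of Proposition~\ref{prop:regular} but exploiting the simpler prefix-independent setting. By Lemma~\ref{lem:prefixind} the sets depend only on the last vertex $v$, and by Corollary~\ref{cor:integerbis} the nonincreasing chain $(\PotPlay_\alpha(v))_\alpha$ stabilises at an integer $\alpha_\ast$; hence every $\PotPlay_\alpha(v)$ coincides with some $\PotPlay_k(v)$ with $k \in \IN$, and it suffices to run the induction over the natural numbers, the limit case adding nothing new. For the base case $\alpha = 0$, the set $\PotPlay_0(v)$ is just the set of all plays starting in $v$, which is $\omega$-regular (it is MSO-definable by $Q_v(0) \wedge \forall x \cdot \bigvee_{(u,u') \in E}(Q_u(x) \wedge Q_{u'}(x+1))$, as in Proposition~\ref{prop:regular}).

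For a successor ordinal $\alpha+1$, I would assume as induction hypothesis that each $\PotPlay_\alpha(v)$ is $\omega$-regular and show that $\ErasePlay_\alpha(v)$ is $\omega$-regular; since $\PotPlay_{\alpha+1}(v) = \PotPlay_\alpha(v) \setminus \ErasePlay_\alpha(v)$ and $\omega$-regular sets are closed under difference, this finishes the step. The first ingredient is that each constant $\maxg_i(\PotPlay_\alpha(v'))$ is well defined and computable: since $\g_i$ has finite range $\Range_i$ and each preimage $\g_i^{-1}(c) = \{\rho : \g_i(\rho) = c\}$ is $\omega$-regular by hypothesis, one tests emptiness of the $\omega$-regular intersection $\PotPlay_\alpha(v') \cap \g_i^{-1}(c)$ for each $c \in \Range_i$ and takes the maximal witness (the analogue of Lemma~\ref{lem:costmaxcomput}, with the convention $\maxg_i = -\infty$ on the empty set from Definition~\ref{def:costmaxbis}).

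The crucial simplification comes from prefix-independence together with Lemma~\ref{lem:prefixind}. Unwinding Definition~\ref{def:Palpha} with $\g_i(h\rho) = \g_i(\rho)$, $\g_i(h'\rho') = \g_i(\rho')$ and $\PotPlay_\alpha(h'v') = \PotPlay_\alpha(v')$, a play $\rho$ belongs to $\ErasePlay_\alpha(v)$ iff there exist a player $i$, a vertex $v'$ and a position $n$ with $\rho_n \in V_i$, $(\rho_n,v') \in E$ and $\rho_{n+1} \neq v'$, such that $\g_i(\rho) > \maxg_i(\PotPlay_\alpha(v'))$; here I use that $\forall \rho' \in \PotPlay_\alpha(v') \colon \g_i(\rho) > \g_i(\rho')$ is equivalent to $\g_i(\rho) > \maxg_i(\PotPlay_\alpha(v'))$ precisely because the finite range forces the maximum to be attained. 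Since $\g_i$ is prefix-independent, the cost side of the comparison no longer depends on the branching position $n$, so this condition splits into a finite union over pairs $(i,v')$ of the intersection of two $\omega$-regular languages: the language $\{\rho : \g_i(\rho) > \maxg_i(\PotPlay_\alpha(v'))\} = \bigcup_{c \in \Range_i,\, c > \maxg_i(\PotPlay_\alpha(v'))} \g_i^{-1}(c)$, and the MSO-definable language $\{\rho : \exists n\, (\rho_n \in V_i \wedge (\rho_n,v') \in E \wedge \rho_{n+1} \neq v')\}$ of plays admitting such a branching position. Hence $\ErasePlay_\alpha(v)$ is $\omega$-regular and the induction goes through.

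The main obstacle I anticipate is bookkeeping rather than a deep difficulty: one must verify carefully that the single comparison $\g_i(\rho) > \maxg_i(\PotPlay_\alpha(v'))$ genuinely captures the universal quantification over $\PotPlay_\alpha(v')$ in Definition~\ref{def:Palpha} — which is exactly where Definition~\ref{def:costmaxbis} and the finite-range hypothesis are indispensable — and that the constants $\maxg_i(\PotPlay_\alpha(v'))$ are not merely well defined but effectively computable, so that the resulting MSO-sentences, and hence B\"uchi automata via \cite{Thomas90}, can be constructed explicitly for the algorithmic claims of Theorem~\ref{thm:existsWeakSPE}.
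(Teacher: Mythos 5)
Your proof is correct and follows essentially the same route as the paper's: induction on $\alpha$ (with the reduction to integer ordinals via Corollary~\ref{cor:integerbis}, which the paper leaves implicit), the same MSO-sentence for the base case, computability of the constants $\maxg_i(\PotPlay_\alpha(v'))$ from the finite-range and $\omega$-regular-preimage hypotheses, and replacement of the universal quantification in the definition of $\ErasePlay_\alpha(v)$ by the single comparison $\g_i(\rho) > \maxg_i(\PotPlay_\alpha(v'))$, exploiting prefix-independence so that the comparison is independent of the branching position. The one point where you diverge is the corner case $\PotPlay_\alpha(v') = \emptyset$: you let the comparison against $\maxg_i(\PotPlay_\alpha(v')) = -\infty$ be trivially true (the literal, vacuous-truth reading of Definition~\ref{def:Palpha}), whereas the paper's sentence additionally requires $\maxg_i(\PotPlay_\alpha(v')) \neq -\infty$; the two conventions can produce different sets once some set becomes empty, but this affects neither the $\omega$-regularity claim nor the decision procedure of Theorem~\ref{thm:existsWeakSPE}, since in that event some set is empty at the fixpoint under either convention and no weak SPE exists.
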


\begin{proof}
The proof is similar to the proof of Lemma~\ref{lem:costmaxcomput} and Proposition~\ref{prop:regular} (It is even simpler). Recall that as soon as $\PotPlay_\alpha(v)$ is empty, then $\PotPlay_\beta(v) = \emptyset$ for all $\beta \geq \alpha$.

Like in Lemma~\ref{lem:costmaxcomput}, we first prove that if $\PotPlay_{\alpha}(v)$ is MSO-definable, then $\maxg_i(\PotPlay_{\alpha}(v))$ is computable for each $i \in \Pi$. Let $\phi$ be an MSO-sentence defining $\PotPlay_{\alpha}(v)$. One can decide whether $\PotPlay_{\alpha}(v)$ is empty. If this is the case, then $\maxg_i(\PotPlay_{\alpha}(v)) = -\infty$ for all $i$. Suppose that $\PotPlay_{\alpha}(v)\neq \emptyset$, and let $i \in \Pi$ and $\const \in \Range_i$. By hypothesis, the set of plays $\rho$ in $(\G,v)$ with cost $\g_i(\rho) = \const$ is $\omega$-regular and thus MSO-definable by a sentence $\varphi_{\const,i}$. We can thus decide whether $\PotPlay_{\alpha}(v)$ has a play $\rho$ with cost $\g_i(\rho) = \const$, thanks to sentence $\phi \wedge \varphi_{\const,i}$. Therefore, by considering decreasing constants $\const \in \Range_i$, we can decide whether $\maxg_i(\PotPlay_{\alpha}(v)) = \const$. This shows that $\maxg_i(\PotPlay_{\alpha}(v))$ is computable.

Let us now prove that each set $\PotPlay_\alpha(v)$ is MSO-definable by induction on~$\alpha$. For $\alpha = 0$, we use the same defining MSO-sentence as in the proof of Proposition~\ref{prop:regular}: 
$$Q_v(0) \wedge \forall x \cdot \vee_{(u,u') \in E} (Q_u(x) \wedge Q_{u'}(x+1)).$$
Let $\alpha \in \IN$ be a fixed integer. By induction hypothesis, each set $\PotPlay_\alpha(v)$ 
is MSO-definable, and $\maxg_i(\PotS{I}{\alpha}{v})$, $i \in \Pi$, is computable by the first part of the proof. 
These sets and constants can be considered as fixed. The only case to consider is $\PotPlay_\alpha(v) \neq \emptyset$ (recall that this property is decidable). To show that $\PotPlay_{\alpha +1}(v)$ is also MSO-definable, it is enough to prove that $\ErasePlay_{\alpha}(v)$ is MSO-definable. Recall that $\rho \in \ErasePlay_{\alpha}(v)$ iff there exist $n \in \IN$, $i \in \Pi$, and $u,u',v' \in V$ with $u' \neq v'$, $(u,v') \in E$, such that $\rho_n = u \in V_i$, $\rho_{n+1} = u'$, and $\forall \rho' \in \PotPlay_{\alpha}(v')$: $\g_i(\rho) > \g_i(\rho')$. 
The last condition can be replaced by $\g_i(\rho) > \maxg_i(\PotPlay_{\alpha}(v')) \neq -\infty$.\footnote{Set $\PotPlay_{\alpha}(v')$ must be non-empty.}
Let us provide an MSO-sentence $\psi$ defining $\ErasePlay_{\alpha}(v)$: 
$$
\exists n \cdot \bigvee_{\substack{i \in \Pi, u\in V_i \\ u' \neq v' \in V \\ (u,v') \in E}} \quad \bigvee_{\substack{\const \in \Range_i \\ \const > \maxg_i(\PotPlay_{\alpha}(v')) \neq -\infty}}
\left( Q_u(n) \wedge Q_{u'}(n+1) \wedge \varphi_{\const,i} \right).
$$
\end{proof}

We get the next corollary. The proof is the same as for Corollary~\ref{cor:lasso}.

\begin{corollary} \label{cor:lassobis}
If $\PotPlay_\alpha(v) \neq \emptyset$, then one can compute a lasso play $h \ccdot g^{\omega}$ in $\PotPlay_\alpha(v)$ with $\g_i(h \ccdot g^{\omega}) = \maxg_i(\PotPlay_\alpha(v))$. This play depends on $i$ and $v$.
\end{corollary}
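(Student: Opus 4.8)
The plan is to reproduce the argument of Corollary~\ref{cor:lasso}, this time leaning on Lemma~\ref{lem:regularbis} in place of Proposition~\ref{prop:regular}. First I would fix the player~$i$ and the vertex~$v$, together with the fixed integer $\alpha$, and recall that by Lemma~\ref{lem:regularbis} the set $\PotPlay_\alpha(v)$ is $\omega$-regular, hence MSO-definable by some sentence~$\phi$. Since $\PotPlay_\alpha(v) \neq \emptyset$ by hypothesis and each range $\Range_i$ is finite, the quantity $\maxg_i(\PotPlay_\alpha(v))$ is a well-defined element of $\Range_i$ (a supremum over a finite set is attained), and the first part of the proof of Lemma~\ref{lem:regularbis} shows that it is moreover computable; write $\const^\ast = \maxg_i(\PotPlay_\alpha(v))$.

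Next I would isolate the maximal-cost plays. By the second hypothesis of Theorem~\ref{thm:existsWeakSPE}, the set of plays $\rho$ in $(\G, v)$ with $\g_i(\rho) = \const^\ast$ is $\omega$-regular, definable by a sentence $\varphi_{\const^\ast, i}$. Consequently the set
$$\{ \rho \in \PotPlay_\alpha(v) \mid \g_i(\rho) = \maxg_i(\PotPlay_\alpha(v)) \}$$
is definable by $\phi \wedge \varphi_{\const^\ast, i}$, so it is again $\omega$-regular, and it is non-empty because the maximum is attained. From this MSO-sentence I would effectively build a B\"uchi automaton $\B$ recognizing this language.

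Finally, since every non-empty $\omega$-regular language contains an ultimately periodic word that can be extracted effectively from an accepting automaton (locate a reachable accepting cycle and take the lasso consisting of a path to it followed by the cycle repeated forever), I would read off a computable lasso play $h \ccdot g^\omega$ accepted by $\B$; by construction $h \ccdot g^\omega \in \PotPlay_\alpha(v)$ and $\g_i(h \ccdot g^\omega) = \maxg_i(\PotPlay_\alpha(v))$. As $\alpha$ is fixed throughout, the only data the construction depends on are $i$ and $v$, as claimed. There is no genuine obstacle here: the argument is a routine assembly of earlier results, and the only point requiring a word of care --- that the maximum is actually realized rather than merely approached --- is immediate from the finiteness of $\Range_i$, which is precisely what replaces the more delicate Proposition~\ref{prop:const} needed in the quantitative reachability setting.
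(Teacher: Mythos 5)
Your proposal is correct and follows essentially the same route as the paper: the paper likewise defines the set of maximal-cost plays as the intersection of the MSO-definable set $\PotPlay_\alpha(v)$ with the $\omega$-regular cost-level set (exactly the conjunction $\phi \wedge \varphi_{\const^\ast,i}$ appearing in the proof of Lemma~\ref{lem:regularbis}), and then extracts an ultimately periodic word from a B\"uchi automaton for this non-empty language. Your added remark that attainment of the maximum follows from finiteness of $\Range_i$ (replacing Proposition~\ref{prop:const} from the reachability setting) is a correct and faithful reading of what makes the argument go through here.
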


We are now able to prove the main result of this section.

\begin{proof}[Proof of Theorem~\ref{thm:existsWeakSPE}]
Let $(\G,v_0)$ be a game satisfying the hypotheses of Theorem~\ref{thm:existsWeakSPE}. Let us summarize the results of the previous lemmas. We know that $\alpha_\ast \in \IN$ and that one can construct the sets $\PotPlay_{\alpha_\ast}(v)$, $v \in V$. As these sets are $\omega$-regular, one can decide whether they are all non-empty. In case of positive answer, there exists a weak SPE in $(\G,v_0)$ by Theorem~\ref{thm:FolkThmWeakSPE}. If in addition some constants $\const_i \in \Range_i$ are given, then the set $\PotPlay_{\alpha_\ast}(v_0) \cap \{\rho \mbox{ in } (\G,v_0) \mid \g_i(\rho) \leq \const_i, \forall i \in \Pi \}$ is also $\omega$-regular. Hence one can also decide whether this set is non-empty and thus whether there exists a weak SPE in $(\G,v_0)$ with cost component-wise bounded by $\bar \const$. This establishes the first part of Theorem~\ref{thm:existsWeakSPE}. 

Suppose that such a weak SPE exists, then let us show that we can construct a weak SPE that is finite-memory with the same construction as in the proof of Theorem~\ref{thm:main}. By Corollary~\ref{cor:lassobis}, for all $i \in \Pi$, $v \in V$, one can construct a lasso play $\hgbis{i}{v} \in \PotPlay_{\alpha_\ast}(v)$ with maximal cost $\g_i(\hgbis{i}{v}) = \maxg_i(\PotPlay_{\alpha_\ast}(v))$. The construction of a finite-memory SPE $\bar \sigma$ from the finite set of lasso plays $\hgbis{i}{v}$ is conducted as in the proof of Lemma~\ref{NotEmptyToSPE}. It is done step by step thanks to a labeling $\gamma$ of the non-empty histories. 

Initially, none of the histories is labeled. We start with history $v_0$ and with any play $\hgbis{i}{v_0} \in  \PotPlay_{\alpha_\ast}(v_0)$, $i \in \Pi$.\footnote{When some constants $\const_i \in \Range_i$ are additionally given, play $\hgbis{i}{v_0}$ must be replaced by any lasso play in $\PotPlay_{\alpha_\ast}(v_0) \cap \{\rho \mbox{ in } (\G,v_0) \mid \g_i(\rho) \leq \const_i, \forall i \in \Pi \}$.} The strategy profile $\bar\sigma$ is partially defined such that $\out{\bar\sigma}_{v_0} = \hgbis{i}{v_0}$, and the non-empty prefixes $h$ of $\hgbis{i}{v_0}$ are all labeled with $\gamma(h) = (i,v_0)$. 

At the following steps, we consider a history $h'v'$ that is not yet labeled, but such that $h'$ has already been labeled. By induction, $\gamma(h') = (j,v)$ and there exists $hv \leq h'$ such that $\out{\SB{\bar\sigma}{h}}_{v} = \hgbis{j}{v}$. Suppose that $\Last(h') \in V_i$, the proof of Lemma~\ref{NotEmptyToSPE} requires to choose a play $\rho' \in \PotPlay_{\alpha_\ast}(v')$ such that $\g_i(h'\rho')  = \g_i(\rho') \geq \g_i(h \ccdot \out{\SB{\bar\sigma}{h}}_{v})$. 
We choose $\rho' = \hgbis{i}{v'}$ with maximal cost $\g_i(\rho') = \maxg_i(\PotPlay_{\alpha_\ast}(v'))$. Then we continue the construction of $\bar\sigma$ such that $\out{\SB{\bar\sigma}{h'}}_{v'} = \hgbis{i}{v'}$, and for all non-empty prefixes $g$ of $\hgbis{i}{v'}$, we define $\gamma(h'g) = (i,v')$.

We know by Lemma~\ref{NotEmptyToSPE} that $\bar\sigma$ is an SPE. It is finite-memory because it only depends on the finite number of lasso plays $\hgbis{j}{v}$, and the labeling $\gamma$ that can be computed inductively as follows. Initially, $\gamma(v_0) = (i,v_0)$ for some chosen $i \in \Pi$. Let $h' \in \Hist_i$ and suppose that $\gamma(h') = (j,v)$. If $h'v'$ respects
$\hgbis{j}{v}$, then $\gamma(h'v') = (j,v)$, otherwise $\gamma(h'v') = (i,v')$ (as explained in the previous paragraph). 
This establishes the second part of Theorem~\ref{thm:existsWeakSPE}.
\end{proof}

\section{Conclusion and Future Work}

In this article, we have studied the existence of (weak) SPEs in quantitative games. We have proposed a Folk Theorem for weak SPEs, and a weaker version for SPEs.
To illustrate the potential of this theorem, we have given two applications. The first one is concerned with quantitative reachability games for which we have provided an algorithm to compute a finite-memory SPE, and a second algorithm for deciding the constrained existence of a (finite-memory) SPE. The second application is concerned with another large class of games for which we have proved that the (constrained) existence of a (finite-memory) weak SPE is decidable.

Future possible directions of research are the following ones. We would like to study the complexities of the problems studied for the two classes of games. We also want to investigate the application of our Folk Theorem to other classes of games. The example of Figure~\ref{fig:gameNoSPE} is a game with a weak SPE but no SPE (see Example~\ref{ex:contrex}). Recall that for this game, the cost $\g_i(\rho)$ can be seen as either the mean-payoff, or the liminf, or the limsup, of the weights of $\rho$. We do not know if games with this kind of payoff functions always have a weak SPE or not.

\nocite{Simpson}


\bibliographystyle{plain}
\bibliography{weakSPE}
\end{document}